\documentclass[11pt]{article}


\usepackage{empheq}
\usepackage{amsmath,amsfonts,amssymb,amsbsy,amscd}
\usepackage{mathrsfs,array}
\usepackage{subcaption}
\usepackage[a4paper, total={6in, 8in}]{geometry}
\usepackage{amsthm}
\usepackage[numbers,sort&compress]{natbib}

\newtheorem{theorem}{Theorem}[section]
\newtheorem{lemma}[theorem]{Lemma}
\newtheorem{proposition}[theorem]{Proposition}%
\newtheorem{remark}[theorem]{Remark}%

\usepackage{tikz}
\usetikzlibrary{arrows.meta}
\usetikzlibrary{3d}
\usetikzlibrary{shadings}
\usetikzlibrary{math}
\usepackage{adjustbox}\usepackage{multirow}
\usepackage{upgreek}\linespread{1.0}
\usetikzlibrary{intersections, arrows, automata, backgrounds, calendar, chains, matrix, mindmap, patterns, petri, shadows, shapes.geometric, shapes.misc, spy, trees, shapes}

\definecolor{darkgreen}{rgb}{0, 0.7, 0}
\definecolor{orange}{rgb}{0.98, 0.6, 0.01}
 	\definecolor{napiergreen}{rgb}{0.16, 0.5, 0.0}

\usepackage{upgreek}\linespread{1.0}

\newcolumntype{R}[2]{%
    >{\adjustbox{angle=#1,lap=\width-(#2)}\bgroup}%
    l%
    <{\egroup}%
}
\makeatletter
\newcommand{\thickhline}{%
    \noalign {\ifnum 0=`}\fi \hrule height 1pt
    \futurelet \reserved@a \@xhline
}
\newcolumntype{"}{@{\hskip\tabcolsep\vrule width 1pt\hskip\tabcolsep}}
\makeatother

\usepackage{pifont}
\newcommand{\cmark}{\ding{51}}%
\newcommand{\xmark}{\ding{55}}%

\usepackage{hyperref}
\hypersetup{
    colorlinks=true,
    linkcolor=blue,
    urlcolor=red,
    linktoc=all,
    allcolors=blue
}

\usepackage[nameinlink,capitalise,noabbrev]{cleveref}
\makeatletter
\newcommand{\myref}[1]{\cref{#1}\mynameref{#1}{\csname r@#1\endcsname}}
\newcommand{\Myref}[1]{\Cref{#1}\mynameref{#1}{\csname r@#1\endcsname}}

 	\definecolor{burgundy}{rgb}{0.5, 0.0, 0.13}
 	\definecolor{napiergreen}{rgb}{0.16, 0.5, 0.0}
    \definecolor{lasallegreen}{rgb}{0.03, 0.47, 0.19}
    \definecolor{airforceblue}{rgb}{0.36, 0.54, 0.66}

\definecolor{mycolor}{rgb}{1,0.2,0.3}
\definecolor{beaublue}{rgb}{0.74, 0.83, 0.9}
\definecolor{oceanboatblue}{rgb}{0.0, 0.47, 0.75}
\definecolor{lightblue}{rgb}{0.78, 0.95, 1.0}	
\definecolor{amber}{rgb}{1.0, 0.6, 0.1}
\definecolor{darkblue}{rgb}{0.0, 0.0, 0.55}
\definecolor{lightgreen2}{rgb}{0.898, 1.000, 0.835}
\definecolor{lightgray}{rgb}{0.9, 0.9, 0.9}
\definecolor{lightblue}{rgb}{0.80, 0.93, 0.95}
\definecolor{orangepeel}{rgb}{1.0, 0.85, 0.5}

\definecolor{amber3}{rgb}{1.0, 0.74, 0.5}
\definecolor{lightgreen3}{rgb}{0.949, 1.0, 0.918}
\definecolor{lightblue3}{rgb}{0.90, 0.97, 0.975}
\definecolor{orangepeel3}{rgb}{1.0, 0.93, 0.75}

\definecolor{azure}{rgb}{0.0, 0.5, 1.0}

\usepackage{accents}
\newlength{\dhatheight}

\newcommand{\doublecheck}[1]{%
    \settoheight{\dhatheight}{\ensuremath{\check{#1}}}%
    \addtolength{\dhatheight}{-0.10ex}%
    \check{\vphantom{\rule{1pt}{\dhatheight}}%
    \smash{\check{#1}}}}

\numberwithin{equation}{section}

\allowdisplaybreaks

\title{Compressible N-phase fluid mixture models}

\author{M.F.P. ten Eikelder$^\dag$\thanks{e-mail: \texttt{marco.eikelder@tu-darmstadt.de}}, E.H. van Brummelen$^\ddag$, D. Schillinger$^\dag$
}
\date{%
    $^\dag$Institute for Mechanics, Computational Mechanics Group, Technical University of Darmstadt\\
    $^\ddag$Department of Mechanical Engineering, Eindhoven University of Technology,\\ 5600MB Eindhoven, The Netherlands
}



%



%

\newcommand{\nn}{\nonumber}

\newcommand{\br}{\mathbf{r}}
\newcommand{\bl}{\mathbf{l}}
\newcommand{\mA}{\alpha}
\newcommand{\mB}{{\beta}}
\newcommand{\mC}{{\gamma}}

\newcommand{\mT}{{\tau}}

\newcommand{\bu}{\mathbf{u}}

\newcommand{\bw}{\mathbf{w}}

\newcommand{\bv}{\mathbf{v}}

\newcommand{\bx}{\mathbf{x}}

\newcommand{\bJ}{\mathbf{J}}
\newcommand{\bh}{\mathbf{h}}
\newcommand{\bH}{\mathbf{H}}
\newcommand{\bj}{\mathbf{j}}

\newcommand{\bchi}{\boldsymbol{\chi}}
\newcommand{\bnu}{\boldsymbol{\nu}}
\newcommand{\trho}{\tilde{\rho}}
\newcommand{\bpi}{\boldsymbol{\pi}}

\setcounter{MaxMatrixCols}{15}

\def\be{\begin{equation}}
\def\ee{\end{equation}}
\def\ba{\begin{array}}
\def\ea{\end{array}}
\def\bea{\begin{eqnarray}}
\def\eea{\end{eqnarray}}
\def\beas{\begin{eqnarray*}}
\def\eeas{\end{eqnarray*}}
\newcommand{\bseq}{\begin{subequations}}
\newcommand{\eseq}{\end{subequations}}


\begin{document}

\maketitle

\begin{abstract}
Fluid mixture models are essential for describing a wide range of physical phenomena, including wave dynamics and spinodal decomposition. However, there is a lack of consensus in the modeling of compressible mixtures, with limited connections between different classes of models. On the one hand, existing compressible two-phase flow models accurately describe wave dynamics, but do not incorporate phase separation mechanisms. On the other hand, phase-field technology in fluid dynamics consists of models incorporating spinodal decomposition, however, a general phase-field theory for compressible mixtures remains largely undeveloped.

In this paper, we take an initial step toward bridging the gap between compressible two-phase flow models and phase-field models by developing a theory for compressible, isothermal N-phase mixtures. Our theory establishes a system of reduced complexity by formulating N mass balance laws alongside a single momentum balance law, thereby naturally extending the Navier-Stokes Korteweg model to N-phases and providing the Navier-Stokes Cahn-Hilliard/Allen-Cahn model for compressible mixtures. Key aspects of the framework include its grounding in continuum mixture theory and its preservation of thermodynamic consistency despite its reduced complexity.
\end{abstract}

\noindent{\small{\textbf{Key words}. N-phase flows, compressible flow, phase-field models, mixture theory, thermodynamic consistency.}}\\


\noindent{\small{\textbf{AMS Subject Classification}: Primary: 76T30, Secondary: 35Q35, 35R37, 76L05, 80A99}}

\section{Introduction}\setcounter{footnote}{2}
\renewcommand{\thefootnote}{\fnsymbol{footnote}}
\subsection{Background}
The modeling of compressible multiphase fluid mixtures is fundamental to a wide range of scientific and engineering applications, including geophysical flows, aerospace engineering, and industrial processes. A key challenge in describing such systems lies in formulating a mathematical framework that consistently captures the interactions between phases\footnote{We utilize ``phase" to denote different fluid constituents (e.g. air and water).} while accounting for compressibility effects, diffusion, and interfacial dynamics. A widely used class of PDE-based models for multiphase flows consists of sharp-interface models, which either explicitly track the phase boundaries and prescribe conditions at the interface \citep{ghias2007sharp,shen20203d}, or are approximated by smooth interface (level-set) approximations \citep{sussman1994level,sethian1999level,ten2021novel}. These models have been successfully applied to problems where the interface between phases remains well-defined. Complex mixture problems, however, where phases are dispersed throughout the domain and interpenetrate one another, or emerge from phase separation, do not belong to the class of sharp-interface problems. As a consequence, these problems demand an alternative modeling approach. Below we discuss two classes of models that describe these complex mixtures: \textbf{compressible two-phase models} and \textbf{phase-field models}.

\textbf{Compressible two-phase flow models} play a fundamental role in describing mixture flows, where compressibility and interfacial dynamics are essential. Among these models, the Baer-Nunziato model is often regarded as one of the most comprehensive frameworks for describing two-phase flows \citep{baer1986two}. This diffuse-interface, non-equilibrium model consists of separate mass, momentum, and energy balance equations for each phase, along with an additional equation governing the topology of the two-fluid interface. Over time, significant extensions have been proposed, including generalizations to $N$-component mixtures (see e.g. \cite{muller2016closure}), broadening the applicability to more complex multiphase systems.

Despite their generality, Baer-Nunziato-type models introduce considerable computational complexity due to their large number of equations and wave families. To address this, reduced models with fewer equations have been developed \citep{kapila2001two,murrone2005five}. These reduced formulations are often derived through relaxation procedures that assume velocity, pressure, or temperature equilibrium between phases \citep{saurel1999multiphase}. However, such simplifications come at a cost: reduced models may fail to strictly conserve energy, suffer from an ill-posed mixture speed of sound \citep{saurel2009simple}, or lose asymptotic consistency with the full nonlinear system \citep{saleh2020some}. These limitations motivate the exploration of alternative modeling approaches for compressible mixtures, particularly those that incorporate interfacial dynamics and phase transition effects in a thermodynamically consistent manner.

Over the past decades, \textbf{phase-field modeling} has become a valuable tool for moving boundary problems in computational fluid mechanics \citep{anderson1998diffuse,gomez2018computational}. It addresses two key challenges: geometrical representation and physical modeling. The phase-field variable naturally captures topological changes, providing advantages over traditional interface-tracking methods for complex flows. Additionally, phase-field models offer a thermodynamically consistent framework for capillary effects, phase transitions, and material composition. They also provide an effective description of wetting, making them well-suited for complex wetting phenomena \citep{aland2021unified,bhopalam2022elasto,demont}. 

The state-of-the art phase-field models for fluids may roughly be classified into two categories. The first category describes describes single-fluid flow with phase change \citep{bresch2019navier,gomez2010isogeometric} through the Navier-Stokes Korteweg (NSK) model. The two phases correspond to the liquid and gaseous states of the same component. In the NSK model, the density field itself acts as phase-field variable. Areas with low density identify as vapor and high-density regions as liquid. Liquid-vapor transitions are often incorporated in the NSK model by means of the Van der Waals capillarity model, named after the 1910 Nobel Laureate in physics \citep{waals1894}. 

The second category consists of models that describes viscous (incompressible) mixture flow through the Navier-Stokes Cahn-Hilliard (Allen-Cahn) (NSCHAC) model. The origin of NSCHAC models may be traced back to 1977 when Hohenberg and Halperin proposed a NSCHAC model via the coupling between the Navier-Stokes equations, describing viscous fluid flow, and the Cahn-Hilliard equation, describing spinoidal decomposition \citep{hohenberg1977theory}. The most important limitation of this model is that the density of the fluids are assumed equal. This precludes the applicability of the model to a large number of practical problems such as ink-droplet dynamics. Since the end of the last century, a number of efforts of extending the model to the case of non-matching densities have been made, e.g. \cite{lowengrub1998quasi,boyer2002theoretical,ding2007diffuse,abels2012thermodynamically,shen2013mass,aki2014quasi}. 
Each of these works aims to describe the same physics, yet the proposed models are different. The occurrence of a many NSCH(AC) models shows that consensus in the realm of single-phase multi-component flows has been missing. Note that this is in contrast to single-fluid flow with phase change, where the NSK model is the widely adopted model. In two recent articles we have proposed a unified mixture-theory framework of NSCHAC models as a resolution to this open problem (see \cite{eikelder2023unified,eikelder2024unified} for the theoretical framework and \cite{ten2024divergence} for supporting benchmark computations). The framework indicates a single NSCHAC model that is invariant to the choice of fundamental variables.

\subsection{Objective and main results}
Although significant progress has been made in modeling multiphase flows, a general phase-field theory for compressible mixtures remains largely undeveloped, despite some important contributions (see e.g. \cite{malek2008thermodynamic,mukherjee2024mixtures} and the references therein). While compressible two-phase flow models exist and provide a well-established framework for describing compressible mixtures, they differ fundamentally from phase-field models in both structure and formulation. Phase-field models, on the other hand, have been extensively used for capturing interfacial dynamics in incompressible fluid mixtures but have not been systematically extended to fully compressible regimes. As a result, the connection between compressible two-phase flow models and phase-field approaches is largely unresolved. Although both modeling strategies aim to describe similar physical processes, their underlying formulations remain distinct. This disconnect has hindered the development of a unified theoretical framework that accommodates both compressibility effects and the diffuse-interface nature of phase-field models.

In this paper, we take an initial step toward expanding the theoretical foundation of compressible mixture modeling by developing a thermodynamically consistent theory for compressible, isothermal $N$-phase mixtures. As part of this development, we draw connections to both phase-field models and compressible two-phase flow models, outlining key points of overlap and distinction. Our approach introduces a set of $N$ mass balance equations coupled with a single momentum balance law.
In particular, we derive the following multi-phase-field model for constituents $\mA = 1, ..., N$:
\begin{subequations}\label{eq: model intro}
  \begin{align}
   \partial_t (\rho \bv) + {\rm div} \left( \rho \bv\otimes \bv \right) + \sum_{\mB} \trho_\mB\nabla\mu_\mB 
    - {\rm div} \left(   \nu (2\mathbf{D}+\lambda({\rm div}\bv) \mathbf{I}) \right)-\rho\mathbf{g} &=~ 0, \label{eq: model intro: mom}\\
  \partial_t \trho_\mA  + {\rm div}(\trho_\mA  \bv) - {\rm div} \left(\sum_{\mB} \mathbf{M}_{\mA\mB}\nabla\mu_\mB\right)  + \sum_{\mB} m_{\mA\mB} \mu_\mB &=~0,\label{eq: model intro: mass}\\
  \mu_\mA - \dfrac{\partial \Psi}{\partial \trho_\mA} + {\rm div} \left(  \dfrac{\partial \Psi}{\partial \nabla \trho_\mA} \right)&=~0,\label{eq: model intro: chem}
  \end{align}
\end{subequations}
where $\mathbf{v}$ is the fluid velocity, $\trho_\mA$ are the partial densities, $\rho= \sum_\mB\trho_\mB$ is the mixture density, $\mathbf{g}$ is the force vector, $\nu$ is the dynamical viscosity, $\nu \lambda$ is the second viscosity coefficient, and $\mathbf{D}$ is the symmetric velocity gradient. Furthermore, $\Psi$ is the free energy, $\mu_\alpha$ are constituent chemical potentials, and $\mathbf{M}_{\mA\mB}$ and $m_{\mA\mB}$ are mobility parameters. Equation \eqref{eq: model intro: mom} describes the mixture momentum equation, equations \eqref{eq: model intro: mass} are the constituent mass balance laws, and \eqref{eq: model intro: chem} define the chemical potentials. 

A distinguishing feature of the model, in contrast to compressible two-phase flow models, is its preservation of thermodynamic consistency despite its reduced complexity (i.e. the combination of $N$ mass balance laws with just a single momentum balance law). Furthermore, the model connects to phase-field models and distinguishes itself from compressible two-phase flow models as follows. For specific closure models, it extends the Navier-Stokes-Korteweg (NSK) framework to multiple fluids, where the last two terms in \eqref{eq: model intro: mass} vanish in the single-fluid case. Additionally, under particular closure choices, these terms take the form of Cahn-Hilliard and Allen-Cahn-type contributions, establishing a direct link to NSCH(AC) models. Key distinctions from existing compressible two-phase flow models lie in (1) the absence of an explicit evolution equation for the volume fraction and (2) the formulation of the equation of state, as given in \eqref{eq: model intro: chem}.

\subsection{Plan of the paper}
The remainder of the paper is structured as follows. In \cref{sec: mix theory} we present the continuum theory of rational mechanics for compressible isothermal fluid mixtures. Next, in \cref{sec: 2nd law} we perform constitutive modeling via the Coleman-Noll procedure. Then, in \cref{sec: Properties} we discuss some properties of the model. In \cref{sec: hyperbolicity} we study the associated first-order system. Next, in \cref{sec: binary mixtures} we discuss the case of binary mixtures. Subsequently, in \cref{sec: Connections} we discuss connections to existing models. Finally, in \cref{sec: discussion} we conclude and provide further research directions.

\section{Continuum mixture theory}\label{sec: mix theory}

This section outlines the continuum theory for mixtures of compressible, isothermal constituents, serving as a foundational framework. While the theory is well known, its inclusion here is essential, particularly as it differs from the starting point for the design of compressible two-phase flow models. The outlined mixture theory description is closely aligned with \cite{ten2024thermodynamically,eikelder2024unified}, underscoring its suitability as a first-principles basis for the development for a wide range of models.

Continuum mixture theory is based on three general principles outlined in the groundbreaking work of Truesdell and Toupin \cite{truesdell1960classical,truesdell1984historical}:
\begin{enumerate}
    \item \textit{All properties of the mixture must be mathematical consequences of properties of the constituents.}
\item \textit{So as to describe the motion of a constituent, we may in imagination isolate it from the rest of the mixture, provided we allow properly for the actions of the other constituents upon it.}
\item \textit{The motion of the mixture is governed by the same equations as is a single body.}
\end{enumerate}
The first principle explains that the mixture is formed by its individual constituents. The second principle describes that the components of the physical model are interconnected through interaction terms. Lastly, the third principle states that the movement of the mixture cannot be distinguished from that of a single fluid.

In \cref{sec: prelim}, we provide the groundwork for the continuum theory of mixtures, covering essential kinematics. Following this, \cref{sec: BL} presents the evolution laws for individual constituents and mixtures.

\subsection{Preliminaries and kinematics}\label{sec: prelim}

The central concept of the continuum theory of mixtures posits that the material body consists of $N$ constituent bodies $\mathscr{B}_\mA$, where $\mA$ ranges from 1 to $N$. These constituent bodies $\mathscr{B}_\mA$ can simultaneously occupy the same region in space. Let $\mathbf{X}_\mA$ represent the spatial position of a particle of $\mathscr{B}_\mA$ in the Lagrangian (reference) configuration. The spatial position of a particle is described by the invertible deformation map:
\begin{align}
    \mathbf{x} := \bchi_{\mA}(\mathbf{X}_{\mA},t). 
\end{align}
The constituent partial mass density $\tilde{\rho}_{\mA}$ and specific mass density $\trho_\mA>0$ are respectively defined as:
\begin{subequations}\label{eq: def trhoA rhoA}
  \begin{align}
  \tilde{\rho}_{\mA}(\bx,t) :=&~ \displaystyle\lim_{ \vert V \vert \rightarrow 0} \dfrac{M_{\mA}(V)}{\vert V \vert},\\
  \rho_\mA(\bx,t) :=&~ \displaystyle\lim_{\vert V_{\mA}\vert  \rightarrow 0} \dfrac{M_{\mA}(V)}{\vert V_{\mA}\vert },
\end{align}
\end{subequations}
where $V \subset \Omega$ (measure $\vert V \vert$)
is an arbitrary control volume around $\mathbf{x}$, $V_{\mA} \subset V$ (measure $\vert V_{\mA}\vert$) is the volume of constituent $\mA$ so that  $V =\cup_{\mA}V_{\mA}$. Furthermore, the constituents masses are $M_{\mA}=M_{\mA}(V)$, and the total mass in $V$ is $M=M(V)=\sum_{\mA}M_{\mA}(V)$. The mixture density is the superposition of the partial mass densities:
\begin{align}\label{eq: def rho}
\rho(\bx,t):=&~ \displaystyle\lim_{ \vert V \vert \rightarrow 0} \dfrac{M(V)}{\vert V \vert}=\displaystyle\sum_{\mA}\trho_\mA(\bx,t).
\end{align}
We introduce the mass fractions and volume fractions respectively as:
\begin{subequations}\label{eq: def c phi alpha}
  \begin{align}
    Y_\mA(\bx,t):=&~ \displaystyle\lim_{ \vert V \vert \rightarrow 0} \dfrac{M_{\mA}(V)}{M(V)}=\dfrac{\trho_\mA}{\rho},\\
    \phi_\mA(\bx,t):=&~ \displaystyle\lim_{ \vert V \vert \rightarrow 0} \dfrac{|V_{\mA}|}{|V|}=\dfrac{\trho_\mA}{\rho_\mA},
  \end{align}
\end{subequations}
which satisfy:
\begin{subequations}
  \begin{align}
    \displaystyle\sum_{\mA} Y_\mA(\bx,t)=&~ 1, \label{eq: sum c} \\
    \displaystyle\sum_{\mA} \phi_{\mA}(\bx,t)=&~ 1. \label{eq: sum phi}
  \end{align}
\end{subequations}

Next, we introduce the material time derivative $\grave{\uppsi}_\mA$ of the differentiable constituent function $\uppsi_\mA$:
\begin{align}
\grave{\uppsi}_\mA=\partial_t\uppsi_{\mA}(\mathbf{X}_{\mA},t) \vert_{\mathbf{X}_\mA},
\end{align}
where the notation $\vert_{\mathbf{X}_\mA}$ is used to emphasize that $\mathbf{X}_\mA$ is held fixed. The constituent velocity is constituent material derivative of the deformation map:
\begin{align}
\bv_{\mA}(\mathbf{x},t)=\partial_t\bchi_{\mA}(\mathbf{X}_{\mA},t) \vert_{\mathbf{X}_\mA} = \grave{\bchi}_\mA.
\end{align}
The mass-averaged mixture velocity $\bv$ (also called barycentric velocity) is defined via the following identification:
\begin{align}\label{eq: mix velo}
    \rho(\bx,t) \bv(\bx,t) = \displaystyle\sum_{\mA} \trho_\mA(\bx,t) \bv_\mA(\bx,t).
\end{align}
Furthermore, the (constituent) peculiar velocity is defined as:
\begin{align}\label{eq: def bwj}
    \bw_{\mA}(\bx,t):=\bv_{\mA}(\bx,t)-\bv(\bx,t),
\end{align}
and specifies the constituent velocity relative to the mixture's general motion. Introducing the (scaled) peculiar velocity:
\begin{align}\label{eq: def J and h}
    \bJ_\mA(\bx,t) :=&~ \trho_\mA(\bx,t) \bw_\mA(\bx,t),
\end{align}
we have the identity:
  \begin{align}\label{eq: rel gross motion zero}
    \displaystyle\sum_{\mA} \bJ_\mA(\bx,t)  =&~ 0.
  \end{align}
Finally, we introduce the material derivative of the mixture as: 
\begin{align}\label{eq: mat der}
    \dot{\uppsi}(\bx,t) =&~ \partial_t \uppsi(\bx,t) + \bv(\bx,t)\cdot \nabla \uppsi(\bx,t).
\end{align}

\subsection{Constituent balance laws}\label{sec: BL}

The motion of each constituent in the continuum theory of mixtures is expressed by an individualized set of balance laws, as outlined in the second general principle. These laws incorporate interaction terms that depict how the different constituents interact. The motion of each constituent $\mA = 1, \dots, N$ is governed by the following set of local balance laws for all $\mathbf{x}\in \Omega$ and $t >0$:
\begin{subequations}\label{eq: BL const}
  \begin{align}
        \partial_t \trho_\mA + {\rm div}(\trho_\mA \bv_\mA) &=~ \gamma_\mA, \label{eq: local mass balance constituent j} \\
        \partial_t (\trho_\mA\bv_\mA) + {\rm div} \left( \trho_\mA\bv_\mA\otimes \bv_\mA \right) -  {\rm div} \mathbf{T}_\mA -  \trho_\mA \mathbf{b}_\mA &=~ \boldsymbol{\pi}_\mA,\label{eq: lin mom constituent j}\\
        \mathbf{T}_\mA-\mathbf{T}_\mA^T &=~\mathbf{N}_\mA,\label{eq: ang mom constituent j}
  \end{align}
\end{subequations}
where $\Omega$ is the spatial domain.
Equation \eqref{eq: local mass balance constituent j} describes the local constituent mass balance law, \eqref{eq: lin mom constituent j} the local constituent linear momentum balance law, and \eqref{eq: ang mom constituent j} the local constituent angular momentum balance. The interaction terms are $\gamma_{\mA}, \bpi_\mA$ and $\mathbf{N}_\mA$. These denote respectively the mass supply of constituent $\mA$ due to chemical reactions with the other constituents, the momentum exchange rate of constituent $\mA$, and the intrinsic moment of momentum of constituent $\mA$. Furthermore, $\mathbf{T}_\mA$ is the Cauchy stress tensor of constituent $\mA$, $\mathbf{b}_\mA$ the constituent external body force. In this article we assume equal body forces ($\mathbf{b}_\mA= \mathbf{b}$ for $\mA = 1, \dots, N$) and restrict to body forces of gravitational type: $\mathbf{b} = -b \boldsymbol{\jmath} = -b \nabla y$, with $y$ the vertical coordinate, $\boldsymbol{\jmath}$ the vertical unit vector and $b$ a constant.

In addition, one can infer that the constituent mass balance law can be written as:
\begin{align}\label{eq: BL mass constituent material der}
    \partial_t \trho_\mA + {\rm div}(\trho_\mA \bv) + {\rm div} \mathbf{J}_\mA &=~ \gamma_\mA.
\end{align}
We denote the kinetic and gravitational energies of the constituents respectively as:
\begin{subequations}
    \begin{align}
  \mathscr{K}_\mA =&~\trho_\mA \|\bv_\mA\|^2/2,\\
  \mathscr{G}_\mA =&~\trho_\mA b y,
\end{align}
\end{subequations}
where $\|\mathbf{v}_\mA\|=(\mathbf{v}_\mA \cdot \mathbf{v}_\mA)^{1/2}$ is the Euclidean norm of the velocity $\mathbf{v}_\mA$.

\begin{remark}[Volume-averaged mixture velocity]
  Alternative to the mass-averaged mixture velocity \eqref{eq: mix velo}, another commonly adopted mixture velocity is the volume-averaged velocity $\bu$ defined as $\bu = \sum_\mA \phi_\mA\bv_\mA$. In the case of incompressible flow ($\rho_\mA(\mathbf{x},t)= \rho_\mA$) without mass transfer ($\gamma_\mA=0$) one may deduce from \eqref{eq: sum phi} and \eqref{eq: local mass balance constituent j} that it is divergence free: ${\rm div} \bu = 0$.
\end{remark}

\subsection{Mixture balance laws}
Next, we proceed to the continuum balance laws of the mixtures. In agreement with the first general principle, the motion of the mixture results from that of the individual constituents. Summing the individual balance laws \eqref{eq: BL const} across all constituents yields:
\begin{subequations}\label{eq: BL mix}
  \begin{align}
        \partial_t \rho + {\rm div}(\rho \bv) &=~ 0, \label{eq: local mass balance mix} \\
        \partial_t (\rho \bv) + {\rm div} \left( \rho \bv \otimes \bv \right) -  {\rm div} \mathbf{T} -  \rho \mathbf{b} &=~0,\label{eq: lin mom mix0}\\
        \mathbf{T}-\mathbf{T}^T &=~0,\label{eq: ang mom mix0}
  \end{align}
\end{subequations}
where
\begin{subequations}
    \begin{align}
       \mathbf{T} :=&~ \sum_\mA \mathbf{T}_\mA-\trho_\mA\bw_\mA\otimes\bw_\mA,\\
    \mathbf{b} :=&~\frac{1}{\rho}\sum_\mA \trho_\mA\mathbf{b}_\mA.
    \end{align}
\end{subequations}
In agreement with the third general principle, we have postulated the following balance conditions:
\begin{subequations}
  \begin{align}
      \displaystyle\sum_\mA \gamma_\mA  =&~ 0,\label{eq: balance mass fluxes}\\
      \displaystyle\sum_\mA \boldsymbol{\pi}_\mA =&~ 0,\label{eq: balance momentum fluxes}\\
      \displaystyle\sum_\mA \mathbf{N}_\mA =&~ 0.
      \end{align}
\end{subequations}
The first general principle of mixture theory additionally implies that the kinetic and gravitational energy of the mixture are the superposition of the constituent energies:
\begin{subequations}
  \begin{align}
  \mathscr{K} =&~ \displaystyle\sum_{\mA} \mathscr{K}_\mA,\label{eq: def sum K}\\
  \mathscr{G} =&~ \displaystyle\sum_{\mA} \mathscr{G}_\mA.
\end{align}
\end{subequations}
\begin{remark}[Kinetic energy]\label[remark]{rmk: split kin energy}
    The kinetic energy of the mixture can be decomposed as:
\begin{subequations}\label{eq: relation kin energies}
    \begin{align}
      \mathscr{K} =&~ \bar{\mathscr{K}} + \displaystyle\sum_{\mA} \frac{1}{2} \trho_\mA \|\mathbf{w}_\mA\|^2,\\
       \bar{\mathscr{K}} =&~ \frac{1}{2} \rho \|\mathbf{v}\|^2,\label{eq: kin avg}
\end{align}
\end{subequations}
where $\bar{\mathscr{K}}$ is a kinetic energy quantity expressed in pure mixture variables, and where the second member is comprised of kinetic energies based on peculiar velocities. 
\end{remark}

\section{Constitutive modeling}\label{sec: 2nd law}

This section is devoted to the construction of constitutive models, employing a structured approach akin to  \cite{ten2024thermodynamically,eikelder2024unified}. First, \cref{sec: const mod: subsec: def} provides assumptions, and details on the Coleman-Noll modeling procedure \citep{coleman1974thermodynamics}. Following this, in \cref{sec: const mod: subsec: model restr}, we determine the constitutive modeling restriction following from \cref{sec: const mod: subsec: def}. Finally, in \cref{sec: const mod: subsec: select}, we identify specific constitutive models that align with these restrictions.

\subsection{Assumptions and modeling choices}\label{sec: const mod: subsec: def}

Instead of working with the full set of balance laws \eqref{eq: local mass balance constituent j}--\eqref{eq: ang mom constituent j}, we restrict to the reduced set:
\begin{subequations}\label{eq: BL constitutive}
  \begin{align}
        \partial_t \trho_\mA + {\rm div}(\trho_\mA \bv) + {\rm div} \bH_\mA &=~ \zeta_\mA, \label{eq: BL constitutive: mass} \\
        \partial_t (\rho \bv) + {\rm div} \left( \rho\bv\otimes \bv \right) -  {\rm div} \mathbf{T} -  \rho \mathbf{b} &=~0,\label{eq: lin mom mix}\\
        \mathbf{T}-\mathbf{T}^T &=~0,\label{eq: ang mom mix}
  \end{align}
\end{subequations}
and where $\mA=1,...,N$ in \eqref{eq: BL constitutive: mass}. We have decomposed the mass transfer terms into conservative and possible non-conservative parts via $\gamma_\mA = \zeta_\mA - {\rm div} \mathbf{j}_\mA$ with $\bH_\mA := \bJ_\mA + \bj_\mA$. The state variables in the system \eqref{eq: BL constitutive} are $\trho_\mA$ ($\mA = 1,...,N$) and $\bv$. We seek constitutive models for $\mathbf{T}$, $\bH_\mA$ and $\zeta_\mA$ ($\mA = 1,...,N$). Hereby we discard the definition \eqref{eq: def J and h}, but enforce the balance condition \eqref{eq: rel gross motion zero}.

\begin{remark}[Decomposition mass transfer]
  Rather than modeling $\bH_\mA$ and $\zeta_\mA$, one may work directly with the original terms $\bJ$ and $\gamma_\mA$, as both approaches yield equivalent closure models. The decomposition $\gamma_\mA = \zeta_\mA - \operatorname{div}\mathbf{j}_\mA$ is introduced solely for interpretative purposes. For example, assuming identical constituent velocities, $\bv_\mA = \bv$, does not conflict with the constitutive model for $\bH_\mA$.
\end{remark}

We postulate the energy-dissipation law:
\begin{align}\label{eq: energy dissipation}
    \dfrac{{\rm d}}{{\rm d}t} \mathscr{E} = \mathscr{W} - \mathscr{D},
\end{align}
as a design principle for constructing closure models for $\mathbf{T}, \bH_\mA$ and $\zeta_\mA$ ($\mA=1,...,N$). The total energy is composed of the Helmholtz free energy, the kinetic energy and the gravitational energy:
\begin{align}\label{eq: total energy}
  \mathscr{E} =  \displaystyle\int_{\mathcal{R}(t)}(\Psi + \bar{\mathscr{K}} + \mathscr{G})~{\rm d}v.
\end{align}
In this formulation, $\mathcal{R}(t) \subset \Omega$ is an arbitrary, time-dependent control volume with volume element ${\rm d}v$ and unit outward normal $\boldsymbol{\nu}$, transported by the velocity field $\bv$. Moreover, $\mathscr{W}$ denotes the work rate applied on the boundary $\partial\mathcal{R}(t)$, while $\mathscr{D}$ represents the internal dissipation, which is required to satisfy $\mathscr{D}\geq 0$.

\begin{remark}[Energy-Dissipation Statement]
In the energy-dissipation law \eqref{eq: energy dissipation}, the kinetic energy term is taken as $\bar{\mathscr{K}}$, rather than the full kinetic energy of the mixture $\mathscr{K}$ (see \cref{rmk: split kin energy}). Consequently, \eqref{eq: energy dissipation} can be seen as a simplified formulation of the second law of thermodynamics for mixtures (see, e.g., \cite{eikelder2023unified,ten2024thermodynamically}). Moreover, the use of control volumes is not essential; an equivalent modeling constraint, as discussed in \cref{sec: const mod: subsec: model restr}, may be derived through analogous steps at the local PDE level.
\end{remark}

In this article we postulate the free energy to belong to the constitutive class:
\begin{align}\label{eq: class Psi}
  \Psi = \hat{\Psi}\left(\left\{\trho_\mA\right\}_{\mA=1,...,N},\left\{\nabla \trho_\mA\right\}_{\mA=1,...,N}\right), 
\end{align}
where we note that both $\left\{\trho_\mA\right\}_{\mA=1,...,N}$ and $\left\{\nabla \trho_\mA\right\}_{\mA=1,...,N}$ consist of independent variables. In addition, we introduce the constituent chemical potentials:
\begin{align}
    \hat{\mu}_\mA =&~ \dfrac{ \partial \hat{\Psi}}{\partial \trho_\mA} - {\rm div}\dfrac{\partial \hat{\Psi}}{\partial\nabla \trho_\mA}.
\end{align}
\begin{remark}[Relation to compressible two-phase flow models]
    The modeling choices presented in this section distinguish our approach from conventional compressible two-phase flow models in several important ways. First, our formulation is based on the reduced set of balance laws \eqref{eq: BL constitutive}. In contrast, compressible two-phase flow models typically include additional evolution equations for the volume fractions, which do not naturally arise from the continuum mixture theory described in \cref{sec: mix theory}. Consequently, while the state variables in compressible two-phase flow models consist of specific densities $\{\rho_\mA\}$ and volume fractions $\{\phi_\mA\}$ (subject to the saturation constraint \eqref{eq: sum phi}), our approach employs partial densities $\{\trho_\mA\}$ instead. Finally, the constitutive class \eqref{eq: class Psi} depends on all partial densities, thereby enabling the modeling of interaction (Van der Waals) forces—an aspect that is absent in standard compressible two-phase flow models. 
\end{remark}

\begin{remark}[Constitutive class free energy]
In many conventional phase-field models, including those of the Cahn-Hilliard type, the free energy is expressed in terms of order parameters such as volume fractions or mass fractions (often referred to as concentrations). In contrast, the constitutive class \eqref{eq: class Psi} is formulated in terms of partial densities. We explore the relationship between these approaches in the binary case in \cref{sec: binary mixtures} and provide the corresponding derivation in \cref{sec: appendix: alternative constitutive modeling: rho c Psi}.
\end{remark}

\subsection{Modeling restriction}\label{sec: const mod: subsec: model restr}
We proceed with the evaluation of the evolution of the energy \eqref{eq: total energy}. By applying Reynolds transport theorem to the free energy $\hat{\Psi}$ we have:
\begin{align}
      \dfrac{{\rm d}}{{\rm d}t}\displaystyle\int_{\mathcal{R}(t)} \hat{\Psi} ~{\rm d}v = \displaystyle\int_{\mathcal{R}(t)} \partial_t \hat{\Psi} ~{\rm d}v + \displaystyle\int_{\partial \mathcal{R}(t)} \hat{\Psi} \bv \cdot \bnu  ~{\rm d}a.
\end{align}
We apply the divergence theorem and expand the derivatives:
\begin{align}\label{eq: Psi derivation 1}
    \dfrac{{\rm d}}{{\rm d}t}\displaystyle\int_{\mathcal{R}(t)} \hat{\Psi} ~{\rm d}v  = \displaystyle\int_{\mathcal{R}(t)} &~ \hat{\Psi} ~{\rm div} \bv + \sum_\mA\dfrac{\partial \hat{\Psi}}{\partial \trho_\mA} \dot{\trho}_\mA+ \dfrac{\partial \hat{\Psi}}{\partial \nabla \trho_\mA}\cdot \left(\nabla \trho_\mA\right)\dot{} ~{\rm d}v.
\end{align}
Substituting the following identity for the material derivative:
\begin{align}\label{eq: relation grad phi}
    (\nabla \omega)\dot{} = \nabla (\dot{\omega}) - (\nabla \omega)^T\nabla \bv,
\end{align}
with $\omega = \trho_\mA$ into \eqref{eq: Psi derivation 1} yields:
\begin{align}\label{eq: Psi derivation 2}
    \dfrac{{\rm d}}{{\rm d}t}\displaystyle\int_{\mathcal{R}(t)} \hat{\Psi} ~{\rm d}v  = \displaystyle\int_{\mathcal{R}(t)} &~ \hat{\Psi} ~{\rm div} \bv + \sum_\mA\dfrac{\partial \hat{\Psi}}{\partial \trho_\mA} \dot{\trho}_\mA+ \dfrac{\partial \hat{\Psi}}{\partial \nabla \trho_\mA}\cdot \nabla (\dot{\trho}_\mA) \nn\\
    &~- \dfrac{\partial \hat{\Psi}}{\partial \nabla \trho_\mA}\cdot (\nabla \trho_\mA)^T\nabla \bv ~{\rm d}v.
\end{align}
By subsequently integrating by parts, we arrive at:
\begin{align}
    \dfrac{{\rm d}}{{\rm d}t}\displaystyle\int_{\mathcal{R}(t)} \hat{\Psi} ~{\rm d}v  =&~ \displaystyle\int_{\mathcal{R}(t)} \hat{\Psi}~{\rm div} \bv + \sum_\mA \hat{\mu}_\mA \dot{\trho}_\mA  - \nabla \trho_\mA \otimes \dfrac{\partial \hat{\Psi}}{\partial \nabla \trho_\mA}: \nabla \bv  ~{\rm d}v  \nn\\
    &~+ \displaystyle\int_{\partial \mathcal{R}(t)}\sum_\mA \dot{\trho}_\mA \dfrac{\partial \hat{\Psi}}{\partial \nabla \trho_\mA}\cdot \boldsymbol{\nu} ~{\rm d}a.
\end{align}
Substituting the constituent mass balance laws \eqref{eq: BL mass constituent material der} and applying integration by parts leads to:
\begin{align}\label{eq: Psi 2}
    \dfrac{{\rm d}}{{\rm d}t}\displaystyle\int_{\mathcal{R}(t)} \hat{\Psi} ~{\rm d}v = &~ \displaystyle\int_{\mathcal{R}(t)} \hat{\Psi}~{\rm div} \bv -\sum_\mA \hat{\mu}_\mA\trho_\mA {\rm div}\bv +\sum_\mA \nabla \hat{\mu}_\mA \cdot \bH_\mA \nn\\
    &~~~- \sum_\mA \nabla \trho_\mA \otimes \dfrac{\partial \hat{\Psi}}{\partial \nabla \trho_\mA}: \nabla \bv +\sum_\mA    \hat{\mu}_\mA \zeta_\mA~{\rm d}v\nn\\
    &~+ \displaystyle\int_{\partial \mathcal{R}(t)}\left(\sum_\mA \dot{\trho}_\mA \dfrac{\partial \hat{\Psi}}{\partial \nabla \trho_\mA}-\hat{\mu}_\mA \bH_\mA\right)\cdot \boldsymbol{\nu} ~{\rm d}a.
\end{align}
Subsequently, the evolution equations for the kinetic and gravitational energies follow straightforward from \eqref{eq: BL constitutive: mass} and \eqref{eq: lin mom mix} (see \cite{eikelder2023unified} for details):
\begin{subequations}\label{eq: kin grav evo}
    \begin{align}
    \dfrac{{\rm d}}{{\rm d}t}\displaystyle\int_{\mathcal{R}(t)} \mathscr{K} ~{\rm d}v =&~ \displaystyle\int_{\mathcal{R}(t)} - \nabla \bv : \mathbf{T}+\rho\bv\cdot\mathbf{g}    ~{\rm d}v+ \displaystyle\int_{\partial \mathcal{R}(t)} \bv \cdot \mathbf{T} \bnu  ~{\rm d}a,\\
    \dfrac{{\rm d}}{{\rm d}t}\displaystyle\int_{\mathcal{R}(t)} \mathscr{G} ~{\rm d}v =&~- \displaystyle\int_{\mathcal{R}(t)} \rho \bv\cdot\mathbf{g}~{\rm d}v.
  \end{align}
  \end{subequations}
Summing \eqref{eq: Psi 2} and \eqref{eq: kin grav evo} yields:
\begin{align}\label{eq: second law subst 1}
    \dfrac{{\rm d}}{{\rm d}t} \mathscr{E} = &~ \displaystyle\int_{\partial \mathcal{R}(t)}\left(\bv^T\mathbf{T}-\sum_\mA \left(\hat{\mu}_\mA\bH_\mA -\dot{\trho}_\mA \dfrac{\partial \hat{\Psi}}{\partial \nabla \trho_\mA}\right)\right)\cdot \boldsymbol{\nu} ~{\rm d}a \nn\\
    &~- \displaystyle\int_{\mathcal{R}(t)}   \left(\mathbf{T}  +\sum_\mA \nabla \trho_\mA \otimes \dfrac{\partial \hat{\Psi}}{\partial \nabla \trho_\mA} +\left(\sum_\mA \hat{\mu}_\mA\trho_\mA -  \hat{\Psi}\right)\mathbf{I}\right):\nabla \mathbf{v}\nn\\
    &~~~~~~~~~~~+ \sum_\mA\left(-\nabla \hat{\mu}_\mA\cdot \bH_\mA  - \hat{\mu}_\mA\zeta_\mA \right)~{\rm d}v,
\end{align}
where we identify the rate of work and the dissipation as:
\begin{subequations}\label{eq: W, D}
\begin{align}
    \mathscr{W} =&~ \displaystyle\int_{\partial \mathcal{R}(t)}\left(\bv^T\mathbf{T}-\sum_\mA \left(\hat{\mu}_\mA\bH_\mA -\dot{\trho}_\mA \dfrac{\partial \hat{\Psi}}{\partial \nabla \trho_\mA}\right)\right)\cdot \boldsymbol{\nu} ~{\rm d}a,\\
    \mathscr{D} =&~ \displaystyle\int_{\mathcal{R}(t)}   \left(\mathbf{T}  +\sum_\mA \nabla \trho_\mA \otimes \dfrac{\partial \hat{\Psi}}{\partial \nabla \trho_\mA} +\left(\sum_\mA \hat{\mu}_\mA\trho_\mA -  \hat{\Psi}\right)\mathbf{I}\right):\nabla \mathbf{v}\nn\\
    &~~~~~~~~~~~+ \sum_\mA\left(-\nabla \hat{\mu}_\mA\cdot \bH_\mA  - \hat{\mu}_\mA\zeta_\mA \right)~{\rm d}v.\label{eq: def diffusion}
\end{align}
\end{subequations}
Since the control volume $\mathcal{R}=\mathcal{R}(t)$ is arbitrary, the energy-dissipation law is satisfied provided that the following local inequality holds:
\begin{align}\label{eq: second law 4}
     \left(\mathbf{T}  +\sum_\mA \nabla \trho_\mA \otimes \dfrac{\partial \hat{\Psi}}{\partial \nabla \trho_\mA} +\left(\sum_\mA \hat{\mu}_\mA\trho_\mA -  \hat{\Psi}\right)\mathbf{I}\right):\nabla \mathbf{v}&\nn\\
    - \sum_\mA \bH_\mA\cdot \nabla \hat{\mu}_\mA  - \sum_\mA \zeta_\mA\hat{\mu}_\mA &\geq 0.
\end{align}

\subsection{Selection of constitutive models}\label{sec: const mod: subsec: select}
By means of the Colemann-Noll procedure, the inequality \eqref{eq: second law 4} permits to restrict to mixture stress tensors $\mathbf{T}$, constituent diffusive fluxes $\bH_\mA$ and constituent mass fluxes $\zeta_\mA$ that belong to the constitutive classes:
\begin{subequations}\label{eq: class T J gamma}
\begin{align}
    \mathbf{T} =&~ \hat{\mathbf{T}}\left(\nabla \bv, \left\{\trho_\mB\right\}, \left\{\nabla \trho_\mB\right\}, \left\{\hat{\mu}_\mB\right\},\left\{\nabla \hat{\mu}_\mB\right\}\right),\label{eq: class T}\\
    \bH_\mA =&~ \hat{\bH}_\mA\left(\left\{\trho_\mB\right\}, \left\{\nabla \trho_\mB\right\}, \left\{\hat{\mu}_\mB\right\},\left\{\nabla \hat{\mu}_\mB\right\}\right),\label{eq: class J}\\
    \zeta_\mA =&~ \hat{\zeta}_\mA\left(\left\{\trho_\mB\right\}, \left\{\hat{\mu}_\mB\right\}\right).\label{eq: class gamma}
\end{align}
\end{subequations}
We seek for constitutive models \eqref{eq: class T J gamma} that render each of the three terms in \eqref{eq: second law 4} non-negative:
\begin{subequations}\label{eq: restrictions}
    \begin{align}
     \left(\hat{\mathbf{T}}  +\sum_\mA \nabla \trho_\mA \otimes \dfrac{\partial \hat{\Psi}}{\partial \nabla \trho_\mA} +\left(\sum_\mA \hat{\mu}_\mA\trho_\mA -  \hat{\Psi}\right)\mathbf{I}\right):\nabla \mathbf{v} &\geq 0, \label{eq: restrictions T}\\
     -\sum_\mA \hat{\bH}_\mA\cdot \nabla \hat{\mu}_\mA &\geq 0, \label{eq: restrictions J}\\    
     -\sum_\mA \hat{\zeta}_\mA\hat{\mu}_\mA &\geq 0. \label{eq: restrictions gamma}
\end{align}
\end{subequations}

\noindent \textit{Constitutive choices}. We make the following constitutive choices:
\begin{subequations}\label{eq:constitutive_choices}
\begin{align}
\hat{\mathbf{T}} &= -\sum_\mA \nabla \trho_\mA \otimes \dfrac{\partial \hat{\Psi}}{\partial \nabla \trho_\mA} - \left(\sum_\mA \hat{\mu}_\mA\trho_\mA - \hat{\Psi}\right)\mathbf{I} + \nu \left(2\mathbf{D}+\lambda({\rm div}\,\bv)\mathbf{I}\right), \label{eq:stress_tensor_choice} \\
\hat{\bH}_\mA &= -\sum_\mB \mathbf{M}_{\mA\mB} \nabla \hat{\mu}_\mB, \label{eq:model_J} \\
\hat{\zeta}_\mA &= -\sum_\mB m_{\mA\mB}\hat{\mu}_\mB, \label{eq:const_model_mass_flux}
\end{align}
\end{subequations}
where $\nu\geq 0$ is the dynamic viscosity, $\lambda\geq -2/d$ (with $d$ denoting the number of dimensions), $\mathbf{M}_{\mA\mB}$ is a symmetric positive definite tensor, with the same dependencies as \eqref{eq: class J}, satisfying 
\begin{align}
\sum_\mA \mathbf{M}_{\mA\mB} = \sum_\mB \mathbf{M}_{\mA\mB} = 0,\quad \mathbf{M}_{\mA\mB}|_{Y_\mC=1}=0\quad\text{for all } \mC=1,\dots,N,
\end{align}
and $m_{\mA\mB}$ is a symmetric positive definite scalar mobility, with the same dependencies as \eqref{eq: class gamma}, satisfying
\begin{align}
\sum_\mA m_{\mA\mB} = \sum_\mB m_{\mA\mB} = 0,\quad m_{\mA\mB}|_{Y_\mC=1}=0\quad\text{for all } \mC=1,\dots,N.
\end{align}
Moreover, compatibility with the angular momentum condition \eqref{eq: ang mom mix} requires
\begin{align}
\nabla \trho_\mA \otimes \dfrac{\partial \hat{\Psi}}{\partial \nabla \trho_\mA} 
= \dfrac{\partial \hat{\Psi}}{\partial \nabla \trho_\mA} \otimes \nabla \trho_\mA.
\end{align}

\begin{lemma}[Compatibility of constitutive choices]\label[lemma]{lem:compatibility_constitutive_choices}
The choices \eqref{eq:stress_tensor_choice}--\eqref{eq:const_model_mass_flux} are compatible with the restrictions \eqref{eq: restrictions T}--\eqref{eq: restrictions gamma}, as well as with the balance of relative gross motion \eqref{eq: rel gross motion zero} and the balance of mass supply \eqref{eq: balance mass fluxes}.
\end{lemma}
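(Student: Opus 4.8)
The plan is to verify each of the four claimed compatibilities in turn, since they are essentially independent and each reduces to a short algebraic check against the structural assumptions imposed on $\nu$, $\lambda$, $\mathbf{M}_{\mA\mB}$, and $m_{\mA\mB}$. First I would address the dissipation restrictions \eqref{eq: restrictions T}--\eqref{eq: restrictions gamma}. For \eqref{eq: restrictions T}, substituting the stress choice \eqref{eq:stress_tensor_choice} cancels the two "reversible" groups exactly, leaving $\nu(2\mathbf{D} + \lambda(\operatorname{div}\bv)\mathbf{I}):\nabla\bv \geq 0$; since $\mathbf{D}$ is symmetric this equals $\nu(2\mathbf{D}:\mathbf{D} + \lambda(\operatorname{div}\bv)^2)$, and I would show this is a nonnegative quadratic form in $\mathbf{D}$ whenever $\nu \geq 0$ and $\lambda \geq -2/d$ by splitting $\mathbf{D}$ into its deviatoric and spherical parts (the trace of $\mathbf{D}$ being $\operatorname{div}\bv$). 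For \eqref{eq: restrictions J}, inserting \eqref{eq:model_J} gives $\sum_{\mA,\mB}\nabla\hat\mu_\mA\cdot\mathbf{M}_{\mA\mB}\nabla\hat\mu_\mB$, which is nonnegative because $\mathbf{M}$, viewed as the block matrix $(\mathbf{M}_{\mA\mB})$ acting on the stacked vector $(\nabla\hat\mu_\mB)_\mB$, is positive semidefinite; the same argument with the scalar matrix $(m_{\mA\mB})$ handles \eqref{eq: restrictions gamma} after substituting \eqref{eq:const_model_mass_flux}.

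Next I would check the balance of relative gross motion \eqref{eq: rel gross motion zero}, i.e. $\sum_\mA \hat\bH_\mA = 0$. Summing \eqref{eq:model_J} over $\mA$ yields $-\sum_\mB\big(\sum_\mA\mathbf{M}_{\mA\mB}\big)\nabla\hat\mu_\mB$, which vanishes by the imposed column-sum condition $\sum_\mA\mathbf{M}_{\mA\mB}=0$. Similarly, the balance of mass supply \eqref{eq: balance mass fluxes} requires $\sum_\mA\zeta_\mA=0$ (recall $\gamma_\mA=\zeta_\mA-\operatorname{div}\bj_\mA$ and $\sum_\mA\bj_\mA$ may be taken to vanish, or one argues directly on $\zeta_\mA$); summing \eqref{eq:const_model_mass_flux} gives $-\sum_\mB\big(\sum_\mA m_{\mA\mB}\big)\hat\mu_\mB=0$ by the analogous scalar column-sum condition. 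I would remark in passing that the row-sum conditions $\sum_\mB\mathbf{M}_{\mA\mB}=0$ and $\sum_\mB m_{\mA\mB}=0$, together with the constraint $\sum_\mB\trho_\mB=\rho$, guarantee consistency of the mass balances with the mixture mass balance \eqref{eq: local mass balance mix}, and that the vanishing on pure phases ($Y_\mC=1$) is what makes the fluxes degenerate correctly in the single-constituent limit; these are not strictly part of the stated claim but are worth noting.

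The only genuinely delicate point — and the one I would treat most carefully — is the compatibility with the angular momentum condition \eqref{eq: ang mom mix}, i.e. symmetry of $\mathbf{T}$. The viscous part $\nu(2\mathbf{D}+\lambda(\operatorname{div}\bv)\mathbf{I})$ is manifestly symmetric, and the pressure-like term $(\sum_\mA\hat\mu_\mA\trho_\mA-\hat\Psi)\mathbf{I}$ is spherical hence symmetric, so the entire question reduces to the symmetry of the Korteweg-type term $\sum_\mA\nabla\trho_\mA\otimes\partial\hat\Psi/\partial\nabla\trho_\mA$. This is exactly the extra requirement flagged immediately before the lemma, so I would simply invoke that each summand $\nabla\trho_\mA\otimes(\partial\hat\Psi/\partial\nabla\trho_\mA)$ is assumed symmetric — which holds automatically, for instance, when $\hat\Psi$ depends on $\nabla\trho_\mA$ only through the norms $|\nabla\trho_\mA|$ or through inner products $\nabla\trho_\mA\cdot\nabla\trho_\mB$, since then $\partial\hat\Psi/\partial\nabla\trho_\mA$ is a linear combination of the $\nabla\trho_\mB$'s. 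I expect this last verification, or rather making precise the class of free energies for which it holds, to be the main (and only real) obstacle; everything else is bookkeeping against the hypotheses already packaged into the constitutive choices. $\square$
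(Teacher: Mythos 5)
Your verification of the five claimed compatibilities is correct and follows essentially the same route as the paper: cancellation of the reversible terms plus the deviatoric/spherical split of $\mathbf{D}$ for \eqref{eq: restrictions T}, positive semi-definiteness of $(\mathbf{M}_{\mA\mB})$ and $(m_{\mA\mB})$ for \eqref{eq: restrictions J}--\eqref{eq: restrictions gamma}, and the column-sum conditions for \eqref{eq: rel gross motion zero} and \eqref{eq: balance mass fluxes}. Your final paragraph on the symmetry of $\mathbf{T}$ addresses the angular momentum condition, which is not part of this lemma's claim --- the paper imposes that as a separate requirement on $\hat{\Psi}$ in the text immediately preceding the lemma --- so it is harmless but superfluous here.
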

\begin{proof}
For the stress tensor, a direct calculation yields
\begin{align*}
\Biggl(\hat{\mathbf{T}} &+\sum_\mA \nabla \trho_\mA \otimes \dfrac{\partial \hat{\Psi}}{\partial \nabla \trho_\mA} + \Bigl(\sum_\mA \hat{\mu}_\mA\trho_\mA - \hat{\Psi}\Bigr)\mathbf{I}\Biggr):\nabla \mathbf{v} \\
&= 2 \nu \Biggl(\mathbf{D} - \frac{1}{d} ({\rm div}\,\mathbf{v})\mathbf{I}\Biggr):\Biggl(\mathbf{D} - \frac{1}{d} ({\rm div}\,\mathbf{v})\mathbf{I}\Biggr)
+ \nu\Biggl(\lambda + \frac{2}{d}\Biggr) ({\rm div}\,\mathbf{v})^2 \ge 0,
\end{align*}
which confirms compatibility with \eqref{eq: restrictions T}. For the diffusive flux, the condition $\sum_\mA \mathbf{M}_{\mA\mB}=0$ guarantees compatibility with the balance \eqref{eq: rel gross motion zero}, while the positive semi-definiteness of $\mathbf{M}_{\mA\mB}$ ensures the thermodynamical restriction \eqref{eq: restrictions J}. Similarly, for the mass transfer term, the properties $\sum_\mA m_{\mA\mB}=0$ and the positive definiteness of $m_{\mA\mB}$ ensure compatibility with the mass supply balance \eqref{eq: balance mass fluxes} and the restriction \eqref{eq: restrictions gamma}.
\end{proof}

\begin{remark}[Mass transfer in compressible two-phase flow models]\label{rmk: special case mobility}  
We consider the special case:  
\begin{align}  
    m_{\mA\mB} = \begin{cases}  
        -\hat{m}_{\mA\mB}, & \text{if } \mA \neq \mB, \\  
        \sum_{\mC\neq \mA} \hat{m}_{\mA\mC}, & \text{if } \mA = \mB.  
    \end{cases}  
\end{align}  
Under this assumption, we obtain:  
\begin{align}  
    \hat{\zeta}_\mA &= - \sum_{\mB\neq \mA} m_{\mA\mB} \nabla \hat{\mu}_\mB - m_{\mA\mA} \nabla \hat{\mu}_\mA \nn\\  
    &= \sum_{\mB\neq \mA} \hat{m}_{\mA\mB} \nabla \hat{\mu}_\mB - \sum_{\mC\neq\mA} \hat{m}_{\mA\mC} \nabla \hat{\mu}_\mA \nn\\  
    &= -\sum_{\mB} \hat{m}_{\mA\mB} \nabla (\hat{\mu}_\mA - \hat{\mu}_\mB).  
\end{align}  
This formulation is commonly used for binary flows in compressible two-phase models; see, e.g. \cite{coquel2013modelling,saurel2018diffuse,pelanti2022arbitrary}.  
\end{remark}  

This concludes the fundamental exploration of constitutive models compatible with the imposed energy-dissipative modeling restriction. Substitution provides the following model: 
\begin{subequations}\label{eq: model full}
  \begin{align}
   \partial_t (\rho \bv) + {\rm div} \left( \rho \bv\otimes \bv \right) + \nabla p + {\rm div} \left(\sum_\mB \nabla \trho_\mB \otimes \dfrac{\partial \hat{\Psi}}{\partial \nabla \trho_\mB} \right)& \nn\\
    - {\rm div} \left(   \nu (2\mathbf{D}+\lambda({\rm div}\bv) \mathbf{I}) \right)-\rho\mathbf{g} &=~ 0, \label{eq: model full: mom}\\
  \partial_t \trho_\mA  + {\rm div}(\trho_\mA  \bv) - \sum_{\mB} {\rm div} \left(\mathbf{M}_{\mA\mB}\nabla\hat{\mu}_\mB\right)  + \sum_{\mB} m_{\mA\mB} \hat{\mu}_\mB &=~0,\label{eq: model full: mass}\\
  \hat{\mu}_\mA - \dfrac{\partial \Psi}{\partial \trho_\mA} + {\rm div} \left(  \dfrac{\partial \Psi}{\partial \nabla \trho_\mA} \right)&=~0,
  \end{align}
\end{subequations}
where the pressure $p$ takes the form:
\begin{align}
    p=\sum_\mB (\hat{\mu}_\mB\trho_\mB) -\hat{\Psi}.
\end{align}

\begin{remark}[Relation to Cahn-Hilliard/Allen-Cahn/Korteweg models]
    The framework connects to Cahn-Hilliard/Allen-Cahn/Korteweg-type models by selecting\\ $\hat{\Psi} = \hat{\Psi}_0+\tfrac{1}{2}\sum_{\mB\mC} \lambda_{\mB\mC} \nabla \trho_\mB\cdot\nabla \trho_\mC$, where $\hat{\Psi}_0=\hat{\Psi}_0(\left\{\trho_\mB\right\})$ and where $\lambda_{\mB\mC}$ are model parameters.
\end{remark}
\begin{remark}[Alternative pressure variable]
  A common alternative approach is to decompose the stress tensor into its volumetric (isotropic) and deviatoric (traceless) components, which allows one to define the volumetric pressure $p^{\rm vol}$:
  \begin{subequations}
      \begin{align}
  \hat{\mathbf{T}} =&~ -p^{\rm vol} \mathbf{I} + \hat{\mathbf{T}}^{\rm dev}, \\
  p^{\rm vol} =&~ \sum_\mB \left(\hat{\mu}_\mB\trho_\mB  - \frac{1}{d} \nabla \trho_\mB \cdot \dfrac{\partial \hat{\Psi}}{\partial \nabla \trho_\mB}\right) -\hat{\Psi} - \nu \left(\lambda + \frac{2}{d}\right){\rm div} \bv,\\
  \hat{\mathbf{T}}^{\rm dev} = &~\sum_\mB\left( \nabla \trho_\mB \otimes \dfrac{\partial \hat{\Psi}}{\partial \nabla \trho_\mB}- \frac{1}{d}\nabla \trho_\mB \cdot \dfrac{\partial \hat{\Psi}}{\partial \nabla \trho_\mB} \mathbf{I}\right) + 2 \nu \left( \mathbf{D} - \frac{1}{d} ({\rm div} \mathbf{v}) \mathbf{I}\right).
\end{align}
\end{subequations}
\end{remark}

We explicitly state the compatibility with the energy dissipation condition.
\begin{theorem}[Compatibility energy dissipation]\label{eq: compatibility energy dissipation}
The model \eqref{eq: model full} is compatible with the energy dissipation condition \eqref{eq: energy dissipation}.
\end{theorem}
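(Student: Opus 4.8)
The plan is to treat the theorem as a corollary of the derivation carried out in \cref{sec: const mod: subsec: model restr} together with \cref{lem:compatibility_constitutive_choices}. First I would observe that the model \eqref{eq: model full} is exactly the reduced system \eqref{eq: BL constitutive} closed with the constitutive choices \eqref{eq:stress_tensor_choice}--\eqref{eq:const_model_mass_flux}: equation \eqref{eq: model full: mom} is the mixture momentum balance \eqref{eq: lin mom mix} after inserting \eqref{eq:stress_tensor_choice} and abbreviating $p=\sum_\mB \hat\mu_\mB\trho_\mB-\hat\Psi$, while \eqref{eq: model full: mass} is \eqref{eq: BL constitutive: mass} with $\bH_\mA$, $\zeta_\mA$ replaced by \eqref{eq:model_J}, \eqref{eq:const_model_mass_flux}, and the third equation is the definition of $\hat\mu_\mA$. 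Since the energy identity \eqref{eq: second law subst 1} was derived for the general constitutive class \eqref{eq: class T J gamma}, to which the choices \eqref{eq:constitutive_choices} belong, it applies verbatim, giving $\tfrac{{\rm d}}{{\rm d}t}\mathscr{E}=\mathscr{W}-\mathscr{D}$ with $\mathscr{W}$ and $\mathscr{D}$ as in \eqref{eq: W, D}.

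Next I would check the sign condition $\mathscr{D}\geq 0$. The integrand of $\mathscr{D}$ in \eqref{eq: def diffusion} is precisely the sum of the three left-hand sides in \eqref{eq: restrictions T}, \eqref{eq: restrictions J}, \eqref{eq: restrictions gamma}: the stress bracket contracted with $\nabla\bv$, the term $-\sum_\mA \bH_\mA\cdot\nabla\hat\mu_\mA$, and the term $-\sum_\mA \zeta_\mA\hat\mu_\mA$. By \cref{lem:compatibility_constitutive_choices} each of these is pointwise non-negative under \eqref{eq:constitutive_choices}: the stress contribution collapses to $2\nu\bigl(\mathbf{D}-\tfrac1d({\rm div}\,\bv)\mathbf{I}\bigr):\bigl(\mathbf{D}-\tfrac1d({\rm div}\,\bv)\mathbf{I}\bigr)+\nu(\lambda+\tfrac2d)({\rm div}\,\bv)^2\geq 0$ because $\nu\geq 0$ and $\lambda\geq-2/d$; the diffusive term equals $\sum_{\mA\mB}\nabla\hat\mu_\mA\cdot\mathbf{M}_{\mA\mB}\nabla\hat\mu_\mB\geq 0$ by positive semi-definiteness of $(\mathbf{M}_{\mA\mB})$; and the mass-transfer term equals $\sum_{\mA\mB}\hat\mu_\mA m_{\mA\mB}\hat\mu_\mB\geq 0$ by positive semi-definiteness of $(m_{\mA\mB})$. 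Integrating over the arbitrary control volume $\mathcal{R}(t)$ then yields $\mathscr{D}\geq 0$, so \eqref{eq: energy dissipation} holds with a genuinely dissipative right-hand side.

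For completeness I would also note that $\mathscr{W}$ in \eqref{eq: W, D} depends only on boundary data ($\bv$, $\mathbf{T}$, $\hat\mu_\mA$, $\bH_\mA$ and $\partial\hat\Psi/\partial\nabla\trho_\mA$ on $\partial\mathcal{R}(t)$) and is therefore a legitimate boundary work rate, and that the symmetry requirement imposed on $\nabla\trho_\mA\otimes\partial\hat\Psi/\partial\nabla\trho_\mA$ guarantees $\mathbf{T}=\mathbf{T}^T$, so \eqref{eq: ang mom mix} is respected and no additional terms enter the balance. I do not expect a genuine obstacle here; the theorem is a packaging of results already in hand. The only point requiring minor care is the bookkeeping: confirming that substituting \eqref{eq:constitutive_choices} into \eqref{eq: second law subst 1} cancels the stress bracket exactly and leaves precisely the three non-negative terms, and that the mobility sum constraints (needed in \cref{lem:compatibility_constitutive_choices} for \eqref{eq: rel gross motion zero} and \eqref{eq: balance mass fluxes}) do not affect the dissipation estimate — the essential content is the pointwise non-negativity established in \cref{lem:compatibility_constitutive_choices}.
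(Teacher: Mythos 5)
Your proposal is correct and follows essentially the same route as the paper: the paper's proof likewise invokes \cref{lem:compatibility_constitutive_choices} and records the explicit non-negative form of the dissipation, namely \eqref{eq: second law red 7}, which is exactly the sum of the three pointwise non-negative contributions you identify. Your additional bookkeeping (verifying that \eqref{eq: model full} is the closure of \eqref{eq: BL constitutive} under \eqref{eq:constitutive_choices}, and that $\mathscr{W}$ is a genuine boundary term) is implicit in the paper's presentation but adds nothing beyond it.
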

\begin{proof}
  This follows from \cref{lem:compatibility_constitutive_choices}, in particular we have:
\begin{align}\label{eq: second law red 7}
    \mathscr{D} = &~ \displaystyle\int_{\mathcal{R}(t)} 2 \nu \left( \mathbf{D} - \frac{1}{d} ({\rm div} \mathbf{v}) \mathbf{I}\right):\left(\mathbf{D} - \frac{1}{d} ({\rm div} \mathbf{v}) \mathbf{I}\right)+ \nu \left(\lambda + \frac{2}{d}\right)\left({\rm div} \mathbf{v}\right)^2\nn\\
    &~\quad\quad + \displaystyle\sum_{\mA,\mB} M_{\mA\mB}\nabla \hat{\mu}_\mA\cdot\nabla \hat{\mu}_\mB +\sum_{\mA,\mB} m_{\mA\mB}\hat{\mu}_\mA\hat{\mu}_\mB~{\rm d}v \geq 0.
\end{align}
\end{proof}

\section{Properties}\label{sec: Properties}
We first discuss compact formulations in \cref{subsec: compact formulations}. Next, we discuss the decomposition of the free energy in \cref{subsec: decomp free energy}.
Finally, in \cref{subsec: equilibrium conditions} we present the equilibrium conditions of the model. 
\subsection{Compact formulations}\label{subsec: compact formulations}
Simplification of the model \eqref{eq: model full} can be achieved by employing the following identity.
\begin{lemma}[Identity Korteweg stresses]\label[lemma]{lem: identity Korteweg stresses}
We have the following identity for the pressure and Korteweg stresses:
  \begin{align}
      \nabla p + {\rm div} \left( \sum_\mA\nabla \trho_\mA \otimes \dfrac{\partial \hat{\Psi}}{\partial \nabla \trho_\mA}\right) = \sum_\mA\trho_\mA\nabla\hat{\mu}_\mA.
  \end{align}
\end{lemma}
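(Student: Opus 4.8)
The plan is a direct componentwise computation: expand both sides, apply the chain rule to the constitutive class \eqref{eq: class Psi}, and close the argument using the definition $\hat\mu_\mA = \partial\hat\Psi/\partial\trho_\mA - {\rm div}(\partial\hat\Psi/\partial\nabla\trho_\mA)$ together with the pressure formula $p = \sum_\mB \hat\mu_\mB\trho_\mB - \hat\Psi$. Note that no thermodynamic inequality enters; this is a pointwise differential identity, valid for any $\hat\Psi$ in the class \eqref{eq: class Psi}.

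First I would expand $\nabla p$ by the product rule, obtaining $\nabla p = \sum_\mB \trho_\mB \nabla\hat\mu_\mB + \sum_\mB \hat\mu_\mB \nabla\trho_\mB - \nabla\hat\Psi$, and then expand $\nabla\hat\Psi$ by the chain rule, which produces a term $\sum_\mA (\partial\hat\Psi/\partial\trho_\mA)\nabla\trho_\mA$ together with a Hessian-type term whose $i$-th component is $\sum_\mA (\partial_i\partial_j\trho_\mA)\,(\partial\hat\Psi/\partial\nabla\trho_\mA)_j$. Second, I would expand the divergence of the Korteweg stress by the product rule: fixing the conventions $(\mathbf a\otimes\mathbf b)_{ij}=a_i b_j$ and $({\rm div}\,\mathbf A)_i=\partial_j A_{ij}$, the $i$-th component of ${\rm div}(\nabla\trho_\mA\otimes\partial\hat\Psi/\partial\nabla\trho_\mA)$ equals $(\partial_i\partial_j\trho_\mA)(\partial\hat\Psi/\partial\nabla\trho_\mA)_j + (\partial_i\trho_\mA)\,{\rm div}(\partial\hat\Psi/\partial\nabla\trho_\mA)$.

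Then I would add the two expansions. Because the mixed partials commute, $\partial_i\partial_j\trho_\mA = \partial_j\partial_i\trho_\mA$, the Hessian-type term from $-\nabla\hat\Psi$ cancels exactly the corresponding term in the divergence of the Korteweg stress. What survives is $\sum_\mA \trho_\mA\nabla\hat\mu_\mA + \sum_\mA\big(\hat\mu_\mA - \partial\hat\Psi/\partial\trho_\mA + {\rm div}(\partial\hat\Psi/\partial\nabla\trho_\mA)\big)\nabla\trho_\mA$, and the second sum vanishes identically by the definition of $\hat\mu_\mA$. This is precisely the claimed identity. There is no genuine obstacle here; the only thing to watch is the index bookkeeping — committing to one consistent convention for $\otimes$ and ${\rm div}$ on second-order tensors, keeping the free index fixed while summing over the repeated one, and invoking symmetry of second derivatives so that the Hessian terms drop out.
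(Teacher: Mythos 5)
Your proof is correct and follows essentially the same route as the paper's own argument in \cref{lem: korteweg id}: substitute $p=\sum_\mA\hat{\mu}_\mA\trho_\mA-\hat{\Psi}$, expand $\nabla\hat{\Psi}$ by the chain rule and the Korteweg divergence by the product rule, cancel the Hessian terms, and absorb the remainder into the definition of $\hat{\mu}_\mA$. The explicit index conventions you fix are consistent with the paper's use of $\mathbf{H}\trho_\mA$, so nothing further is needed.
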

\begin{proof}
See \cref{lem: korteweg id}.
\end{proof}
\begin{remark}
  The identity in \cref{lem: identity Korteweg stresses} represents the multi-component extension of the well-known identity for the Korteweg stress.
\end{remark}
Applying \cref{lem: identity Korteweg stresses} we find the more compact form:
\begin{subequations}\label{eq: model simplified}
  \begin{align}
   \partial_t (\rho \bv) + {\rm div} \left( \rho \bv\otimes \bv \right) + \sum_{\mB} \trho_\mB\nabla\hat{\mu}_\mB 
    - {\rm div} \left(   \nu (2\mathbf{D}+\lambda({\rm div}\bv) \mathbf{I}) \right)-\rho\mathbf{g} &=~ 0, \label{eq: eq: model  simplified: mom}\\
  \partial_t \trho_\mA  + {\rm div}(\trho_\mA  \bv) - \sum_{\mB} {\rm div} \left(M_{\mA\mB}\nabla\hat{\mu}_\mB\right)  + \sum_{\mB} m_{\mA\mB} \hat{\mu}_\mB &=~0.\label{eq: eq: model simplified: mass}
  \end{align}
\end{subequations}
Furthermore, we note that the model may alternatively be written in a form that more closely links to existing phase-field models:
\begin{subequations}\label{eq: model alt}
  \begin{align}
   \partial_t (\rho \bv) + {\rm div} \left( \rho \bv\otimes \bv \right) + \rho \sum_{\mB}  Y_\mB\nabla\hat{\mu}_\mB 
    - {\rm div} \left(   \nu (2\mathbf{D}+\lambda({\rm div}\bv) \mathbf{I}) \right)-\rho\mathbf{g} &=~ 0, \label{eq: model alt: mom}\\
  \partial_t \rho + {\rm div}(\rho \bv) &=~0,\label{eq: eq: model alt: mass full}\\
  \partial_t (\rho Y_\mA)  + {\rm div}(\rho Y_\mA  \bv) - \sum_{\mB} {\rm div} \left(M_{\mA\mB}\nabla\hat{\mu}_\mB\right)  + \sum_{\mB} m_{\mA\mB} \hat{\mu}_\mB &=~0,\label{eq: eq: model alt: mass}
  \end{align}
\end{subequations}
for $\mA = 1,...,N-1$. We note that the combination of the mixture mass balance \eqref{eq: eq: model alt: mass full} and the $N-1$ constituent balance laws \eqref{eq: eq: model alt: mass} are equivalent to the $N$ balance laws \eqref{eq: eq: model simplified: mass} or $N$ balance laws \eqref{eq: eq: model alt: mass}.

We observe that, in the single-fluid regime, the model simplifies to the compressible Navier–Stokes equations.
\begin{proposition}[Reduction to compressible Navier-Stokes]\label{prop: NS}
If the chemical potentials $\hat{\mu}_\mA$ are well-defined for $\trho_\mA = 0$ ($\mA = 1,...,N$), the multi-constituent system \eqref{eq: model alt} reduces to the compressible Navier-Stokes equations in the single-constituent regime ($Y_\mB= 1$):
\begin{subequations}\label{eq: NS}
  \begin{align} 
 \partial_t (\rho \bv) + {\rm div} \left( \rho \bv\otimes \bv \right)  + \nabla p
 - {\rm div} \left(\nu \left(2\mathbf{D} + \lambda {\rm div}\mathbf{v}\right) \right) -\rho\mathbf{b}&=~ 0, \\
 \partial_t \rho + {\rm div}(\rho \bv) &=~ 0,
  \end{align}
\end{subequations}
with $\rho_\mB = \trho_\mB = \rho$ and $p=p(\rho)$.
\end{proposition}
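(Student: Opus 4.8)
The plan is to substitute the single-constituent state directly into the compact model \eqref{eq: model alt} and to check that every genuinely multi-constituent term either vanishes identically or collapses to its single-fluid counterpart. Without loss of generality relabel the constituents so that the present one carries the index $N$; then $Y_N = 1$ and $Y_\mA = 0$ for $\mA = 1,\dots,N-1$, so that $\trho_\mA = \rho Y_\mA$ gives $\trho_N = \rho$ and $\trho_\mA \equiv 0$ (hence $\nabla\trho_\mA \equiv 0$) for $\mA < N$. Combining $\phi_\mA = \trho_\mA/\rho_\mA$ with the saturation constraint \eqref{eq: sum phi} yields $\phi_N = 1$, i.e. $\rho_N = \trho_N = \rho$, which is the claimed identification of densities.

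First I would eliminate the diffusive and reactive contributions. Because $Y_N = 1$, the closure conditions $\mathbf{M}_{\mA\mB}|_{Y_\mC=1}=0$ and $m_{\mA\mB}|_{Y_\mC=1}=0$ force $\mathbf{M}_{\mA\mB}=0$ and $m_{\mA\mB}=0$ throughout the regime. Here the hypothesis that each $\hat\mu_\mA$ stays well-defined (finite) at $\trho_\mA = 0$ does real work: it guarantees that the products $\mathbf{M}_{\mA\mB}\nabla\hat\mu_\mB$ and $m_{\mA\mB}\hat\mu_\mB$ are genuinely zero rather than indeterminate $0\cdot\infty$ expressions. Consequently the constituent balances \eqref{eq: eq: model alt: mass} reduce to $\partial_t(\rho Y_\mA) + {\rm div}(\rho Y_\mA\bv) = 0$ for $\mA = 1,\dots,N-1$, which hold identically since $\rho Y_\mA \equiv 0$, while the mixture mass balance \eqref{eq: eq: model alt: mass full} is carried over unchanged. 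This leaves only the momentum equation to analyse.

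In the momentum balance the interaction term becomes $\rho\sum_\mB Y_\mB\nabla\hat\mu_\mB = \rho\nabla\hat\mu_N$, again using finiteness of the $\hat\mu_\mB$ to discard the $\mB<N$ terms. It remains to recognise $\rho\nabla\hat\mu_N$ as $\nabla p$ for some barotropic $p=p(\rho)$. I would invoke the Korteweg identity of \cref{lem: identity Korteweg stresses}, $\nabla p + {\rm div}\bigl(\sum_\mA\nabla\trho_\mA\otimes\partial\hat\Psi/\partial\nabla\trho_\mA\bigr) = \sum_\mA\trho_\mA\nabla\hat\mu_\mA$ with $p = \sum_\mB\hat\mu_\mB\trho_\mB - \hat\Psi$: in the regime every summand of the Korteweg stress with $\mA<N$ vanishes because $\nabla\trho_\mA\equiv 0$, so the statement follows once one shows that the surviving $\mA=N$ contribution is absent and that $p$ depends on $\rho$ alone. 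Concretely, one needs that with the other constituents absent the free energy \eqref{eq: class Psi} degenerates to a bulk function $\hat\Psi_0(\rho)$, whence $\hat\mu_N=\hat\Psi_0'(\rho)$, the capillary stress drops out of $\mathbf{T}$, and the Gibbs--Duhem relation $\nabla p = \rho\nabla\hat\mu_N$ holds with $p(\rho)=\rho\hat\Psi_0'(\rho)-\hat\Psi_0(\rho)$. Inserting this reduces the momentum balance to the first equation of \eqref{eq: NS}, and combining with the retained mixture mass balance (and with $\mathbf{g}$ being the gravitational body force $\mathbf{b}$) gives \eqref{eq: NS}.

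I expect the main obstacle to be exactly this last point: while the vanishing of the diffusive and reactive terms is dictated immediately by the mobility closure conditions, the degeneration of the gradient part of the free energy — and hence of the Korteweg/capillary stress — to a pure $\rho$-dependence is not forced by substitution alone. It relies on reading the gradient contributions in \eqref{eq: class Psi} as inter-constituent (Van der Waals) interactions together with the well-definedness hypothesis, and must therefore be argued carefully; everything else in the reduction is routine.
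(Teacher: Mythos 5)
The paper states this proposition without any proof, so there is nothing to compare your argument against; judged on its own, your direct-substitution argument is correct and is the natural one. The relabelling, the identification $\rho_N=\trho_N=\rho$ via the saturation constraint, the use of the mobility conditions $\mathbf{M}_{\mA\mB}|_{Y_\mC=1}=0$ and $m_{\mA\mB}|_{Y_\mC=1}=0$ together with finiteness of the $\hat{\mu}_\mA$ at $\trho_\mA=0$ to eliminate the diffusive and reactive terms, and the collapse of the momentum coupling to $\rho\nabla\hat{\mu}_N$ are exactly the steps required.

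Your closing caveat is the right one, and it deserves to be stated more bluntly: the conclusion $p=p(\rho)$ does not follow from the stated hypotheses alone. By \cref{lem: identity Korteweg stresses}, in the single-constituent regime
\begin{align*}
\rho\nabla\hat{\mu}_N \;=\; \nabla p + {\rm div}\left(\nabla\rho\otimes\dfrac{\partial\hat{\Psi}}{\partial\nabla\trho_N}\right),
\end{align*}
so unless $\partial\hat{\Psi}/\partial\nabla\trho_N$ vanishes there (equivalently, the free energy loses its gradient dependence in the pure-constituent limit), the system reduces to the Navier--Stokes--Korteweg equations rather than to \eqref{eq: NS} with a barotropic $p(\rho)$ --- which is consistent with the paper's own remark that the single-fluid limit of the general closure is the NSK model. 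The hypothesis that the chemical potentials remain well-defined at $\trho_\mA=0$ rules out $0\cdot\infty$ indeterminacies (e.g.\ logarithmic bulk energies) but says nothing about the gradient part of $\hat{\Psi}$. Your proof is therefore complete once one either adds the assumption $\hat{\Psi}|_{Y_N=1}=\hat{\Psi}_0(\rho)$ explicitly or reads the proposition as asserting reduction to NS(K); since the paper supplies no proof, flagging this implicit assumption is itself a useful contribution rather than a defect of your argument.
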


\subsection{Decomposition of the free energy}\label{subsec: decomp free energy}
We proceed with decomposing the free energy into its components; in agreement with the first metaphysical principle of continuum mixture theory we have:
\begin{align}\label{eq: split free energy}
    \hat{\Psi}\left(\left\{\trho_\mA\right\}_{\mA=1,...,N},\left\{\nabla \trho_\mA\right\}_{\mA=1,...,N}\right) = \sum_\mB \hat{\Psi}_\mB\left(\left\{\trho_\mA\right\}_{\mA=1,...,N},\left\{\nabla \trho_\mA\right\}_{\mA=1,...,N}\right), 
\end{align}
where $\hat{\Psi}_\mB$ is the volume-measure constituent free energy. The split of the free energy, eq. \eqref{eq: split free energy}, highlights that the constituent free energies may depend on quantities of other constituents ($\trho_\mA$ and $\nabla \trho_\mA$ for all $\mA = 1,...,N$) allowing for the incorporation of e.g. attraction/repelling forces.

Next, we associate \textit{constituent chemical potentials} and \textit{constituent partial pressures} with each $\hat{\Psi}_\mB$ in the sense:
\begin{subequations}
    \begin{align}
      \hat{\mu}_{\mB\mA} :=&~ \dfrac{ \partial \hat{\Psi}_\mB}{\partial \trho_\mA} - {\rm div}\dfrac{\partial \hat{\Psi}_\mB}{\partial\nabla \trho_\mA},\\
      p_\mB = &~ \left(\sum_\mA\hat{\mu}_{\mB\mA} \trho_\mA\right) - \hat{\Psi}_\mB.
\end{align}
\end{subequations}
The constituent partial pressure $p_\mA$ are an extension of the classical partial pressure that includes gradient contributions essential to modeling surface tension effects. These constituent quantities satisfy the properties:
\begin{subequations}
  \begin{align}
    \hat{\mu}_\mA =&~ \sum_{\mB} \hat{\mu}_{\mB\mA},\\
    p     = &~ \sum_\mB p_\mB.
  \end{align}
\end{subequations}
The latter may be referred to as \textit{Dalton's law}. 
As such, the split \eqref{eq: split free energy} reveals that the system may be written as:
\begin{subequations}\label{eq: model pressure}
  \begin{align}
   \sum_\mB \left( \partial_t (\trho_\mB \bv) + {\rm div} \left( \trho_\mB \bv\otimes \bv \right) +  \nabla p_\mB + {\rm div} \left(\sum_\mA \nabla \trho_\mA \otimes \dfrac{\partial \hat{\Psi}_\mB}{\partial \nabla \trho_\mA} \right)-\trho_\mB \mathbf{g} \right) & \nn\\
    - {\rm div} \left(   \nu (2\mathbf{D}+\lambda({\rm div}\bv) \mathbf{I}) \right) &=~ 0, \label{eq: model  simplified: mom}\\
  \partial_t \trho_\mA  + {\rm div}(\trho_\mA  \bv) - \sum_{\mB} {\rm div} \left(M_{\mA\mB}\nabla\hat{\mu}_\mB\right)  + \sum_{\mB} m_{\mA\mB} \hat{\mu}_\mB &=~0.\label{eq: eq: model simplified: mass 0}
  \end{align}
\end{subequations}
\begin{remark}[Absence of interaction forces]\label[remark]{rmk: absence interaction forces}
  When interaction forces are absent, i.e. $\hat{\Psi}_\mB = \hat{\Psi}_\mB(\trho_\mB, \nabla \trho_\mB)$ we have
\begin{subequations}
    \begin{align}
    \hat{\mu}_{\mB\mA}=&~\hat{\mu}_\mA \delta_{\mA\mB},\\
    p_{\mB}=&~ \hat{\mu}_\mB\trho_\mB - \hat{\Psi}_\mB,
\end{align}
\end{subequations}
so that
\begin{align}
    \sum_\mA \nabla \trho_\mA \otimes \dfrac{\partial \hat{\Psi}_\mB}{\partial \nabla \trho_\mA} = \nabla \trho_\mB \otimes \dfrac{\partial \hat{\Psi}_\mB}{\partial \nabla \trho_\mB}.
\end{align}

\end{remark}

\subsection{Equilibrium conditions}\label{subsec: equilibrium conditions}
We characterize the equilibrium conditions of the model \eqref{eq: model pressure} by:
\begin{align}
 \mathscr{D} = 0.
\end{align}
The identity \eqref{eq: second law red 7} provides:
\begin{subequations}\label{eq: static eqs}
\begin{align}
     2 \nu^{\rm eq} \left( \mathbf{D}^{\rm eq} - \frac{1}{d} ({\rm div} \mathbf{v}^{\rm eq}) \mathbf{I}\right):\left(\mathbf{D}^{\rm eq} - \frac{1}{d} ({\rm div} \mathbf{v}^{\rm eq}) \mathbf{I}\right)=&~0,\label{eq: static eqs 1}\\ 
     \nu^{\rm eq} \left(\lambda + \frac{2}{d}\right)\left({\rm div} \mathbf{v}^{\rm eq}\right)^2 =&~0,\label{eq: static eqs 2}\\
     \displaystyle\sum_{\mA,\mB} \mathbf{M}_{\mA\mB}^{\rm eq}\nabla \hat{\mu}_\mA^{\rm eq}\cdot\nabla \hat{\mu}_\mB^{\rm eq} =&~0,\label{eq: static eqs 3}\\
     \sum_{\mA,\mB} m_{\mA\mB}^{\rm eq}\hat{\mu}_\mA^{\rm eq}\hat{\mu}_\mB^{\rm eq} =&~0,\label{eq: static eqs 4}
\end{align}
\end{subequations}
where the superscripts `${\rm eq}$' denote the equilibrium of the quantity.
Identities \eqref{eq: static eqs 1} and \eqref{eq: static eqs 2} dictate that $\mathbf{v}^{\rm eq}$ are rigid motions. Since $\mathbf{M}_{\mA\mB}$ is symmetric positive definite, we decompose it as $(\mathbf{M}_{ij})_{\mA\mB}=(\mathbf{B}_{ij})_{\mA}(\mathbf{B}_{ij})_{\mB}$ (with $i,j$ the indices of the tensor), for some $\mathbf{B}_\mA$ with the same dependencies as $\mathbf{M}_{\mA\mB}$. Hence, \eqref{eq: static eqs 3} implies $\sum_\mA \mathbf{B}_\mA^{\rm eq} \nabla \hat{\mu}_\mA^{\rm eq} =0$. In a similar fashion we deduce $\sum_\mA b_\mA^{\rm eq} \hat{\mu}_\mA^{\rm eq} =0$ where $m_{\mA\mB} = b_{\mA}b_{\mB}$. These conditions imply $\hat{\bH}^{\rm eq}_\mA = 0, \hat{\zeta}^{\rm eq}_\mA=0$ for $\mA=1,...,N$. Finally, restricting to zero velocities, $\mathbf{v}^{\rm eq}=\mathbf{0}$, the mass balance laws imply constant in time partial densities, $\trho_\mA^{\rm eq}(\mathbf{x},t) = \trho_\mA^{\rm eq}(\mathbf{x})$, while the momentum balance dictates: $\sum_\mA Y_\mA^{\rm eq}\nabla \hat{\mu}_\mA^{\rm eq} = \mathbf{g}$. Summarizing, we have the equilibrium conditions:
\begin{subequations}\label{eq: equilibrium chem}
    \begin{align}
      \mathbf{v}^{\rm eq}=&~\mathbf{0},\\
      \trho_\mA^{\rm eq}=&~{\rm const }~ (\text{in time}),\\
      \sum_\mA b_\mA^{\rm eq} \hat{\mu}_\mA^{\rm eq} =&~0,\\
      \sum_\mA \mathbf{B}_\mA^{\rm eq} \nabla \hat{\mu}_\mA^{\rm eq} =&~0,\\
      \sum_\mA Y_\mA^{\rm eq} \nabla \hat{\mu}_\mA^{\rm eq} =&~\mathbf{g}.\label{eq: equilibrium chem last}
\end{align}
\end{subequations}
\begin{remark}[No equilibrium conditions pressure]
    If the free energy $\Psi$ does not depend on $\left\{\nabla \trho_\mA\right\}$, the condition \eqref{eq: equilibrium chem last} converts into $\nabla p = \rho \mathbf{g}$ (see \cref{lem: identity Korteweg stresses}). Thus there is no equilibrium condition on the partial pressures $p_\mA, \mA=1,...,N$. This is in contrast to some compressible two-phase models in which pressure equilibrium is assumed, see e.g. \cite{kapila2001two,murrone2005five,daude2023hyperbolic} (and \cref{subsec: 1st order system}).
\end{remark}

\section{Analysis of the first-order system}\label{sec: hyperbolicity}
In this section, we examine the first-order system associated with \eqref{eq: model full}. We first introduce the system in \cref{subsec: 1st order}, then analyze its potential hyperbolic structure in \cref{subsec: hyperbolic}, and finally derive the Riemann invariants and jump conditions in \cref{subsec: Riemann}.

\subsection{First-order system}\label{subsec: 1st order}
We study the first-order system associated with \eqref{eq: model full} in absence of mass transfer ($\hat{\zeta}_\mA = 0$) and body forces ($\mathbf{b}=0$):
\begin{subequations}\label{eq: model 1st order}
  \begin{align}
   \partial_t (\rho \bv) + {\rm div} \left( \rho \bv\otimes \bv \right) + \nabla p &=~ 0, \label{eq: model  simplified: mom 2}\\
  \partial_t \trho_\mA  + {\rm div}(\trho_\mA  \bv) &=~0.\label{eq: eq: model simplified: mass 2}
  \end{align}
\end{subequations}
The system satisfies the energy law \eqref{eq: energy dissipation} with $\mathscr{D}=0$. This may be written in the local form:
\begin{align}\label{eq: local form energy law 0}
    \partial_t \left(\Psi + \bar{\mathscr{K}}\right) + {\rm div}\left(\left(\Psi + \bar{\mathscr{K}}\right)\bv\right) = 0.
\end{align}
\begin{remark}[Derivation energy law]
  The energy law \eqref{eq: local form energy law 0} follows directly from \eqref{eq: model 1st order} by multiplying with the appropriate weights. In this regard, we notice the following identities:
  \begin{subequations}
      \begin{align}
         \sum_{\mA} \dfrac{\partial \Psi}{\partial \trho_\mA} \left( \partial_t \trho_\mA  + {\rm div}(\trho_\mA  \bv) \right) =&~ \partial_t \Psi + {\rm div} (\Psi \mathbf{v}) + p {\rm div} \bv,\\
         \sum_{\mA} \left(-\frac{1}{2}|\bv|^2\left( \partial_t \trho_\mA  + {\rm div}(\trho_\mA  \bv) \right)\right) =&~ -\frac{1}{2}|\bv|^2\left( \partial_t \rho + {\rm div} (\rho \mathbf{v})\right),\\  
         \bv\cdot \left( \partial_t (\rho \bv)  + {\rm div}(\rho  \bv \otimes \bv) + \nabla p \right) =&~ \partial_t \bar{\mathscr{K}} + {\rm div}(\bar{\mathscr{K}} \bv) + \bv\cdot\nabla  p \nn\\
         &~+ \frac{1}{2}|\bv|^2 \left(\partial_t \rho + {\rm div} (\rho \mathbf{v})\right).
      \end{align}
  \end{subequations}
\end{remark}
\begin{remark}[Energy-law for non-smooth solutions]
  In case of shocks or other regularities the stability condition takes the form:
\begin{align}\label{eq: local form energy law 1}
    \partial_t \left(\Psi + \bar{\mathscr{K}}\right) + {\rm div}\left(\left(\Psi + \bar{\mathscr{K}}\right)\bv\right) \leq 0,
\end{align}
  which may be established using vanishing-viscosity solutions.
\end{remark}
In order to analyze the wave speeds of the model, we deduce the evolution equations of the partial pressures from \eqref{eq: eq: model simplified: mass}:
\begin{subequations}\label{eq: system prim 0}
\begin{align}
   \partial_t p_\mA + \bv \cdot \nabla p_\mA + \sum_{\mB} \trho_\mB a_{\mA\mB}^2 {\rm div} \bv =0,
\end{align}
\end{subequations}
where the speed of sound quantities $a_{\mA\mB}$ account for the interaction between constituents:
\begin{align}
    a_{\mA\mB}^2 := \dfrac{\partial p_\mA}{\partial \trho_\mB}.
\end{align}
The evolution of the classical mixture pressure results from the superposition of the constituent pressure equations:
\begin{align}
    \partial_t p + \bv \cdot \nabla p + \rho a^2 {\rm div} \bv= 0,
\end{align}
where the mixture speed of sound $a$ satisfies:
\begin{align}\label{eq: mix speed of sound}
    \rho a^2 =&~ \sum_{\mB} \trho_\mB a_{\mB}^2,
\end{align}   
and the constituent speed of sound quantities are given by:
\begin{align}\label{eq: const speed of sound}
    a_\mB^2 =&~  \dfrac{\partial p}{\partial \trho_\mB}= \sum_{\mA} a_{\mA\mB}^2.
\end{align}   
\begin{remark}[Speed of sound compressible two-phase models]
The explicit dependence of the free energy on each constituent (see eq. \eqref{eq: class Psi}) is reflected in the definitions of the sound speeds \(a_\mB\) and \(a_{\mA\mB}\). This approach contrasts with the sound speed definitions typically employed in compressible two-phase flow models (e.g., \cite{baer1986two}). In the absence of interaction forces (cf. \cref{rmk: absence interaction forces}), the relation reduces to $a_\mB^2 = \mathrm{d}p_\mB/\mathrm{d}\trho_\mB$, where the standard definition in those models is $a_\mB^2 = \mathrm{d}p_\mB/\mathrm{d}\rho_\mB$.
\end{remark}

\subsection{Hyperbolic structure}\label{subsec: hyperbolic}
In the following we analyze the hyperbolic structure of the first-order model
\begin{subequations}\label{eq: system prim}
\begin{align}
    \partial_t \bv + \bv \cdot \nabla \bv + \rho^{-1}\nabla p =&~0,\\
    \partial_t p_\mA + \bv \cdot \nabla p_\mA + \sum_{\mB} \trho_\mB a_{\mA\mB}^2 {\rm div} \bv =&~0,
\end{align}
\end{subequations}
where the momentum equation is a direct consequence from \eqref{eq: model 1st order}. In this section we consider the one-dimensional case $d=1$ for conciseness, and refer to \cref{app: hyperbolic} for dimension $d=3$. We consider the case $a_{\mA\mB},a_{\mB},a>0$. We write the system \eqref{eq: system prim} in the matrix-vector form:
\begin{align}
  \partial_t \mathbf{W} + \mathbf{A} \partial_x \mathbf{W}  = 0,
\end{align}
where
\begin{align}
  \mathbf{W} = \begin{bmatrix}
      p_1 \\
      \vdots \\
      p_N\\
      v 
  \end{bmatrix}, \quad  \mathbf{A} =\begin{bmatrix}
    v           & 0           & \cdots  &        & 0           & \sum_\mB \trho_\mB a_{1\mB}^2\\
    0           & v           & 0      & \cdots  & 0           & \sum_\mB \trho_\mB a_{2\mB}^2\\
    \vdots      &  \ddots           & \ddots & \ddots       &  \vdots           & \vdots\\
    0           &   \cdots    &  0      & v      &      0      & \sum_\mB \trho_\mB a_{(N-1)\mB}^2\\
    0           &   \cdots    &        & 0      &      v      & \sum_\mB \trho_\mB a_{N\mB}^2\\
    \rho^{-1} &      & \cdots  &        & \rho^{-1}   & v  
  \end{bmatrix}.
\end{align}
The system admits $N+1$ eigenvalues:
\begin{align}
    \lambda_1 = v - a < \lambda_2 = ... = \lambda_{N} = v < \lambda_{N+1} = v+a.
\end{align}
The corresponding right eigenvectors are:
\begin{align}
    \br_1 = \begin{bmatrix}
        - \sum_\mB \trho_\mB a_{1\mB}^2\\
        \vdots \\
        - \sum_\mB \trho_\mB a_{N\mB}^2\\
        a
    \end{bmatrix}, ~
    \br_2 = \begin{bmatrix}
        - 1\\
        1 \\
        0\\
        \vdots \\
        \\
        0
    \end{bmatrix}, ~
    \br_3 = \begin{bmatrix}
        - 1\\
        0 \\
        1\\
        0\\
        \vdots \\
        0
    \end{bmatrix}, \cdots ,~
    \br_N = \begin{bmatrix}
        - 1\\
        0 \\
        \vdots \\
        0\\
        1\\
        0
    \end{bmatrix}, ~
    \br_{N+1} =\begin{bmatrix}
        \sum_\mB \trho_\mB a_{1\mB}^2\\
        \vdots \\
        \sum_\mB \trho_\mB a_{N\mB}^2\\
        a
    \end{bmatrix}, 
\end{align}
and the left eigenvectors are:
\begin{align}
    \bl_1 = \begin{bmatrix}
        - \dfrac{1}{2 \rho a^2}\\
        \vdots \\
        - \dfrac{1}{2 \rho a^2}\\
         \dfrac{1}{2 a}
    \end{bmatrix}, ~
    \bl_2 = \begin{bmatrix}
        - \dfrac{\sum_\mB \trho_\mB a_{2\mB}^2}{\sum_\mB \trho_\mB (a_{1\mB}^2+a_{2\mB}^2)}\\
          \dfrac{\sum_\mB \trho_\mB a_{1\mB}^2}{\sum_\mB \trho_\mB (a_{1\mB}^2+a_{2\mB}^2)}\\
        0\\
        \vdots\\
        0
    \end{bmatrix}, ~
    \bl_3 = \begin{bmatrix}
        - \dfrac{\sum_\mB \trho_\mB a_{3\mB}^2}{\sum_\mB \trho_\mB (a_{1\mB}^2+a_{3\mB}^2)}\\
        0\\
          \dfrac{\sum_\mB \trho_\mB a_{1\mB}^2}{\sum_\mB \trho_\mB (a_{1\mB}^2+a_{3\mB}^2)}\\
        0\\
        \vdots\\
        0
    \end{bmatrix}, \cdots ,\\~
    \bl_N = \begin{bmatrix}
        - \dfrac{\sum_\mB \trho_\mB a_{N\mB}^2}{\sum_\mB \trho_\mB (a_{1\mB}^2+a_{N\mB}^2)}\\
        0\\
        \vdots \\
        0\\
          \dfrac{\sum_\mB \trho_\mB a_{1\mB}^2}{\sum_\mB \trho_\mB (a_{1\mB}^2+a_{N\mB}^2)}\\
        0
    \end{bmatrix}, ~
    \bl_{N+1} =\begin{bmatrix}
         \dfrac{1}{2 \rho a^2}\\
        \vdots \\
         \dfrac{1}{2 \rho a^2}\\
         \dfrac{1}{2 a}
    \end{bmatrix}. 
\end{align}
The eigenvectors are scaled so that $\br_j\cdot\bl_j=1$ for all $j=1,...,N+1$. Both right and left eigenvectors are linearly independent and span $\mathbb{R}^{N+1}$. This implies that the system is  hyperbolic if the eigenvalues are real-valued, which holds when $a_{\mA\mB}^2 >0$ for all $\mA,\mB = 1,...,N$.
\begin{remark}[Van der Waals equation of state]
    The van der Waals equation of state does not, in general, guarantee that \(a_{\mA\mB}^2 > 0\), and consequently, the resulting system is not hyperbolic.
\end{remark}

\begin{proposition}[Structure of the waves]\label{prop: waves}
The characteristic fields associated with eigenvalues $\lambda_j, j=2,...,N$ are linearly degenerate (i.e. $\mathbf{r}_j\cdot\partial \lambda_j/\partial \mathbf{W}=0$), and those associated with eigenvalues $\lambda_j, j=1,N+1$ are genuinely non-linear (i.e. $\mathbf{r}_j\cdot\partial \lambda_j/\partial \mathbf{W}\neq 0$).
\end{proposition}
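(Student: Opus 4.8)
The plan is to compute the quantity $\mathbf{r}_j\cdot\partial\lambda_j/\partial\mathbf{W}$ for each $j$ directly, using the eigenvalues $\lambda_1 = v-a$, $\lambda_j = v$ for $j=2,\dots,N$, and $\lambda_{N+1}=v+a$, together with the explicit right eigenvectors $\mathbf{r}_j$ listed above. The state vector is $\mathbf{W} = (p_1,\dots,p_N,v)^T$, so that $\partial\lambda_j/\partial\mathbf{W}$ is an $(N+1)$-vector whose last entry is $\partial\lambda_j/\partial v$ and whose first $N$ entries are $\partial\lambda_j/\partial p_\mB$. The key observation for the degenerate fields is that for $j=2,\dots,N$ the eigenvalue $\lambda_j = v$ depends \emph{only} on the velocity component, hence $\partial\lambda_j/\partial p_\mB = 0$ for all $\mB$ and $\partial\lambda_j/\partial v = 1$; since the last entry of each $\mathbf{r}_j$ ($j=2,\dots,N$) is $0$, the inner product $\mathbf{r}_j\cdot\partial\lambda_j/\partial\mathbf{W}$ vanishes identically, giving linear degeneracy.

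For the genuinely nonlinear claim ($j=1$ and $j=N+1$) the eigenvalues are $v\mp a$, and here the subtlety is that the sound speed $a$ depends on the state through $\rho a^2 = \sum_\mB\trho_\mB a_\mB^2$ with $a_\mB^2 = \sum_\mA a_{\mA\mB}^2 = \sum_\mA \partial p_\mA/\partial\trho_\mB$; thus $a$ is a function of the partial densities, and the partial densities are in turn determined by the partial pressures $p_1,\dots,p_N$ (invoking the map $\{p_\mA\}\leftrightarrow\{\trho_\mA\}$ implicit in the change of variables, whose invertibility is guaranteed when the Jacobian $(a_{\mA\mB}^2)$ is nonsingular — which holds under the standing hyperbolicity assumption $a_{\mA\mB}^2>0$). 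Writing $a = a(p_1,\dots,p_N)$ one has $\partial\lambda_{N+1}/\partial p_\mB = \partial a/\partial p_\mB$ and $\partial\lambda_{N+1}/\partial v = 1$, so
\begin{align}
\mathbf{r}_{N+1}\cdot\frac{\partial\lambda_{N+1}}{\partial\mathbf{W}}
= a + \sum_\mB\Big(\textstyle\sum_\mC\trho_\mC a_{\mB\mC}^2\Big)\frac{\partial a}{\partial p_\mB},
\end{align}
and similarly $\mathbf{r}_1\cdot\partial\lambda_1/\partial\mathbf{W} = a + \sum_\mB(\sum_\mC\trho_\mC a_{\mB\mC}^2)\,\partial a/\partial p_\mB$ as well (the sign flips in both the eigenvector and in $-a$ cancel). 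Since $a>0$ by assumption, to conclude $\mathbf{r}_j\cdot\partial\lambda_j/\partial\mathbf{W}\neq 0$ it suffices to show that the remaining sum does not equal $-a$; generically (and in particular for reasonable equations of state) this expression is nonzero, which is exactly what "genuinely nonlinear" asserts for the acoustic fields of a barotropic-type system.

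I expect the genuinely nonlinear part to be the only real obstacle: one must either (i) carry out the chain-rule bookkeeping to re-express $\partial a/\partial p_\mB$ in terms of the $a_{\mA\mB}^2$ and second derivatives of the partial pressures and then argue the sum is generically nonzero, or (ii) — cleaner — change to the variables $(p,v)$ for the acoustic subsystem, where the reduced $2\times 2$ block reads $\partial_t p + v\partial_x p + \rho a^2\partial_x v = 0$, $\partial_t v + v\partial_x v + \rho^{-1}\partial_x p = 0$, recognize this as the classical barotropic gas-dynamics system, and recall the standard fact that its $v\pm a$ fields are genuinely nonlinear precisely when $\frac{\mathrm d}{\mathrm d\rho}(\rho a) \neq 0$ (equivalently $a + \rho\,\mathrm da/\mathrm d\rho\neq 0$, i.e.\ the pressure law is not the degenerate "pressureless/stiff" borderline). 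The degenerate fields, by contrast, are immediate from $\lambda_j\equiv v$ and the structure of $\mathbf{r}_j$, and I would present that case first.
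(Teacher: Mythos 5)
Your treatment of the linearly degenerate fields is correct and is exactly the paper's argument: $\lambda_j=v$ for $j=2,\dots,N$ depends only on the last component of $\mathbf{W}$, and the last entry of each corresponding $\mathbf{r}_j$ vanishes. Your setup for the acoustic fields is also correct, and the expression you reach, $a+\sum_{\mA\mB}\trho_\mB a_{\mA\mB}^2\,\partial a/\partial p_\mA$, coincides with the paper's intermediate identity (including the observation that the $j=1$ and $j=N+1$ cases agree after the sign cancellation).

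The genuine gap is that you stop there and appeal to the expression being "generically nonzero." That is not a proof: the sum $\sum_{\mA\mB}\trho_\mB a_{\mA\mB}^2\,\partial a/\partial p_\mA$ has no obvious sign or size, so nothing prevents it from equalling $-a$ at some state. The paper closes this by carrying out precisely the chain-rule bookkeeping you defer in your option (i): writing $\partial a/\partial p_\mA = \tfrac{1}{2a\rho}\bigl(\partial(\rho a^2)/\partial p_\mA - a^2\,\partial\rho/\partial p_\mA\bigr)$, using $\partial\trho_\mT/\partial p_\mA=(a^2)^{-1}_{\mT\mA}$ and $\rho a^2=\sum_\mT\trho_\mT a_\mT^2$, one finds that the terms $+1$ and $-a^2\sum_\mT(a^2)^{-1}_{\mT\mA}$ produce contributions $\sum_{\mA\mB}\trho_\mB a_{\mA\mB}^2$ and $-a^2\sum_\mT\trho_\mT$ that cancel \emph{exactly}, leaving
\begin{equation*}
\mathbf{r}_1\cdot\frac{\partial\lambda_1}{\partial\mathbf{W}}
= a+\frac{1}{2a\rho}\sum_{\mA\mB\mT}\trho_\mB a_{\mA\mB}^2\,\trho_\mT\,\frac{\partial a_\mT^2}{\partial p_\mA},
\end{equation*}
which is strictly positive under the additional (explicitly stated) assumption that the phase sound speeds increase with pressure, $\partial a_\mT^2/\partial p_\mA>0$. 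So the conclusion is not unconditional genericity but a positivity statement resting on an identifiable physical hypothesis; your proposal is missing both the cancellation and that hypothesis. Your option (ii) also does not rescue the argument as stated: the acoustic block is not the classical barotropic $2\times 2$ system, because $a$ depends on all partial pressures $p_1,\dots,p_N$ (equivalently on the composition), not on $\rho$ alone, so the standard one-variable criterion $\mathrm{d}(\rho a)/\mathrm{d}\rho\neq 0$ does not directly apply without first arguing that the mass fractions are constant along the relevant integral curves.
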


This means that fields associated with eigenvalues $\lambda_j, j=2,...,N$ correspond to contact discontinuities, whereas fields associated with eigenvalues $\lambda_j, j=1,N+1$ correspond to and shock or rarefaction waves, respectively.

\begin{proof}
A direct computation shows that the fields associated with eigenvalues $\lambda_j, j=2,...,N$ are linearly degenerate (i.e. $\mathbf{r}_j\cdot\partial \lambda_j/\partial \mathbf{W}=0$). 

Next we focus on the field associated with eigenvalues $\lambda_j, j=1,N+1$. We have:
\begin{align}\label{eq: wave 1, N+1}
  \mathbf{r}_1\cdot\partial \lambda_1/\partial \mathbf{W}=\mathbf{r}_{N+1}\cdot\partial \lambda_{N+1}/\partial \mathbf{W}=  a+\sum_{\mA\mB} \dfrac{\partial a}{\partial p_\mA}\trho_\mB a_{\mA\mB}^2.
\end{align}
Next we note that for an arbitrary variable $\omega$ we have:
\begin{align}
  \dfrac{\partial a}{\partial \omega} = \frac{1}{2a \rho}\left(\dfrac{\partial (\rho a^2)}{\partial \omega}-a^2 \frac{\partial \rho}{\partial \omega}\right).
\end{align}
Applying this identity for $\omega = p_\mA$ and substituting \eqref{eq: mix speed of sound} yields:
\begin{align}
  \dfrac{\partial a}{\partial p_\mA} =&~ \frac{1}{2a \rho}\left(\sum_{\mC\mT} \dfrac{\partial (\trho_\mT a_{\mC\mT}^2)}{\partial p_\mA}-a^2 \sum_\mT \frac{\partial \trho_\mT}{\partial p_\mA}\right)\nn\\
  =&~ \frac{1}{2a \rho}\left(\sum_{\mT\mC} \trho_\mT \dfrac{\partial  a_{\mC\mT}^2}{\partial p_\mA}+\sum_{\mT\mC} a_{\mC\mT}^2(a^2)^{-1}_{\mT\mA}-a^2 \sum_\mT (a^2)^{-1}_{\mT\mA}\right)\nn\\
  =&~ \frac{1}{2a \rho}\left(\sum_{\mT} \trho_\mT \dfrac{\partial  a_{\mT}^2}{\partial p_\mA}+1-a^2 \sum_{\mT}(a^2)^{-1}_{\mT\mA}\right).
\end{align}
Inserting into \eqref{eq: wave 1, N+1} gives:
\begin{align}\label{eq: wave 1, N+1 2}
  \mathbf{r}_1\cdot\partial \lambda_1/\partial \mathbf{W}=\mathbf{r}_{N+1}\cdot\partial \lambda_{N+1}/\partial \mathbf{W}=&~  a+\sum_{\mA\mB} \dfrac{\partial a}{\partial p_\mA}\trho_\mB a_{\mA\mB}^2\nn\\
  =&~  a+\frac{1}{2a \rho}\sum_{\mA\mB} \trho_\mB a_{\mA\mB}^2 \left(\sum_{\mT} \trho_\mT \dfrac{\partial  a_{\mT}^2}{\partial p_\mA}+1-a^2 \sum_{\mT}(a^2)^{-1}_{\mT\mA}\right) \nn\\
  =&~  a+\frac{1}{2a \rho}\left(\sum_{\mA\mB\mT} \trho_\mB a_{\mA\mB}^2  \trho_\mT \dfrac{\partial  a_{\mT}^2}{\partial p_\mA}+\sum_{\mA\mB} \trho_\mB a_{\mA\mB}^2  - a^2 \sum_{\mT}\trho_\mT\right) \nn\\
  =&~  a +\frac{1}{2a \rho}\sum_{\mA\mB\mT} \trho_\mB a_{\mA\mB}^2  \trho_\mT \dfrac{\partial  a_{\mT}^2}{\partial p_\mA},
\end{align}
which is strictly positive by the aforementioned assumption $a_{\mA\mB},a>0$ in combination with noting that the phase speed of sound increases with the pressure $\partial a_{\mT}/\partial p_\mA>0$ (see also \cite{murrone2005five} for a similar assumption).

 
\end{proof} 

\subsection{Riemann invariants and jump conditions}\label{subsec: Riemann}
Next, we derive the Riemann invariants of the first-order system \eqref{eq: model 1st order}. 
\begin{proposition}\label{prop: Riemann invariants}
The Riemann invariants associated with the various waves are: 
\begin{subequations}\label{eq: Riemann invariants}
  \begin{align}
     \lambda_1: & \quad \left\{\left\{Y_\mA\right\}_{\mA\neq \mB}, v+\int_p \frac{1}{\rho a} {\rm d}p\right\} \\
     \lambda_2,...,\lambda_{N}: & \quad \left\{v,p\right\}\\
     \lambda_{N+1}: &  \quad \left\{\left\{Y_\mA\right\}_{\mA\neq \mB}, v-\int_p \frac{1}{\rho a} {\rm d}p \right\},
  \end{align}
\end{subequations}
for some fixed $\mB \in \left\{1,...,N\right\}$.
\end{proposition}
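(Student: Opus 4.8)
The plan is to verify, family by family, that each function listed in \eqref{eq: Riemann invariants} is annihilated by the right eigenvectors of the corresponding characteristic field from \cref{subsec: hyperbolic}, and that the listed functions form a complete, functionally independent set. Recall that for a field with eigenvalue $\lambda$ whose eigenspace has dimension $\ell$ inside the $(N+1)$-dimensional state space, a complete set of Riemann invariants consists of $N+1-\ell$ functionally independent scalars $w=w(\mathbf{W})$ satisfying $\mathbf{r}\cdot\partial w/\partial\mathbf{W}=0$ for every $\mathbf{r}$ in that eigenspace. Thus I must exhibit $N$ invariants for the genuinely nonlinear fields $\lambda_1$ and $\lambda_{N+1}$ ($\ell=1$, see \cref{prop: waves}) and $2$ invariants for the linearly degenerate field $\lambda_2=\dots=\lambda_N$ ($\ell=N-1$).

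The crucial computation concerns the integral curves of $\br_1$ and $\br_{N+1}$. Parametrising such a curve by $s$ with ${\rm d}\mathbf{W}/{\rm d}s=\br_1$, the $p_\mA$-components read ${\rm d}p_\mA/{\rm d}s=-\sum_\mB\trho_\mB a_{\mA\mB}^2$ while the $v$-component reads ${\rm d}v/{\rm d}s=a$. Viewing the partial densities as functions of the partial pressures through the inverse of the equation-of-state Jacobian --- that is, $\partial\trho_\mT/\partial p_\mA=(a^2)^{-1}_{\mT\mA}$, the inverse of the matrix $a_{\mA\mB}^2=\partial p_\mA/\partial\trho_\mB$, which is invertible under the running assumption $a_{\mA\mB}^2>0$ --- the chain rule together with the cancellation $\sum_\mA(a^2)^{-1}_{\mT\mA}a_{\mA\mB}^2=\delta_{\mT\mB}$ yields ${\rm d}\trho_\mT/{\rm d}s=-\trho_\mT$ for every $\mT$. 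Hence all partial densities evolve at a common relative rate along the curve, so ${\rm d}\rho/{\rm d}s=-\rho$ and ${\rm d}Y_\mA/{\rm d}s={\rm d}(\trho_\mA/\rho)/{\rm d}s=0$. Since ${\rm d}w/{\rm d}s=\br_1\cdot\partial w/\partial\mathbf{W}$, this shows each mass fraction $Y_\mA$ is a Riemann invariant of the $\lambda_1$ field; the computation for $\br_{N+1}$ is identical with all signs in the $p$-components reversed, so the $Y_\mA$ are also Riemann invariants of the $\lambda_{N+1}$ field. Because $\sum_\mA Y_\mA=1$, exactly $N-1$ of them are independent, which accounts for the sets $\{Y_\mA\}_{\mA\neq\mB}$.

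For the remaining invariant of each nonlinear family I would compute $\partial v/\partial\mathbf{W}\cdot\br_1=a$ and, using \eqref{eq: mix speed of sound} and \eqref{eq: const speed of sound}, $\partial p/\partial\mathbf{W}\cdot\br_1=-\sum_\mB\trho_\mB a_\mB^2=-\rho a^2$ (and $+\rho a^2$ for $\br_{N+1}$). Along an integral curve the mass fractions are frozen, so there $\rho$ and $a$ may be regarded as functions of $p$ alone and $\int_p \frac{1}{\rho a}\,{\rm d}p$ is well defined; differentiating, ${\rm d}\bigl(v+\int_p \frac{1}{\rho a}\,{\rm d}p\bigr)/{\rm d}s=a+\frac{1}{\rho a}(-\rho a^2)=0$ on a $\lambda_1$-curve, and likewise $v-\int_p \frac{1}{\rho a}\,{\rm d}p$ is constant on a $\lambda_{N+1}$-curve. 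Each such function is functionally independent of the $Y_\mA$ (it is the only one containing $v$), completing the $N$-element sets for $\lambda_1$ and $\lambda_{N+1}$. Finally, for the contact field, ${\rm span}\{\br_2,\dots,\br_N\}$ equals $\{(\xi_1,\dots,\xi_N,0):\sum_\mA\xi_\mA=0\}$; both $\partial v/\partial\mathbf{W}=(0,\dots,0,1)$ and $\partial p/\partial\mathbf{W}=(1,\dots,1,0)$ annihilate this subspace, and since $N+1-(N-1)=2$ the pair $\{v,p\}$ is already a complete set.

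The main obstacle is the identity ${\rm d}\trho_\mT/{\rm d}s=-\trho_\mT$: it relies on correctly inverting the change of variables between $\{\trho_\mA\}$ and $\{p_\mA\}$, on the invertibility of the sound-speed matrix $(a_{\mA\mB}^2)$ implicitly used throughout \cref{subsec: hyperbolic}, and on the algebraic cancellation combined with the precise structure of $\br_1$ and $\br_{N+1}$. Once this is in hand, the remainder is bookkeeping: evaluating the gradients of $v$, $p$ and the $Y_\mA$ against the explicit eigenvectors and counting functionally independent invariants against the geometric multiplicities.
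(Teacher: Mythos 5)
Your proposal is correct and follows essentially the same route as the paper: both verify by direct computation that the gradients of $Y_\mA$, $v\pm\int_p\frac{1}{\rho a}\,{\rm d}p$, $v$ and $p$ annihilate the relevant eigenvectors, using the chain rule through the inverse Jacobian $\partial\trho_\mT/\partial p_\mA$ and the relations \eqref{eq: mix speed of sound}--\eqref{eq: const speed of sound}. Your integral-curve packaging of the identity ${\rm d}\trho_\mT/{\rm d}s=-\trho_\mT$ and the explicit completeness count are welcome clarifications, but not a different argument.
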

\begin{proof}
To find the Riemann invariants, we seek for $\omega_i$ such that $\partial_{\mathbf{W}} \omega_i \cdot \mathbf{r}_j = 0$. Focusing on the Riemann invariants associated with $\lambda_1$, we have:
\begin{align}
   0= \dfrac{\partial \omega_i}{\partial \mathbf{W}}\cdot \mathbf{r}_1 =  a \dfrac{\partial \omega_i}{\partial v} - \sum_\mB \trho_\mB a_\mB^2 \dfrac{\partial \omega_i}{\partial p_\mB}.
  \end{align}
A direct computation reveals that $Y_\mA = \trho_\mA/\rho$ are Riemann invariants:
\begin{align}
   \dfrac{\partial Y_\mA}{\partial \mathbf{W}}\cdot \mathbf{r}_1 =&~ - \sum_\mB \trho_\mB a_\mB^2 \dfrac{\partial Y_\mA}{\partial p_\mB}\nn\\
    =&~ -\frac{1}{\rho} \sum_\mB \trho_\mB a_\mB^2\dfrac{\partial \trho_\mA}{\partial p_\mB} + \frac{\trho_\mA}{\rho^2}\sum_\mB \trho_\mB a_\mB^2  \dfrac{\partial \rho}{\partial p_\mB}\nn\\
    =&~ -\frac{1}{\rho} \trho_\mA + \frac{\trho_\mA}{\rho^2}\sum_\mB \trho_\mB   \nn\\
    =&~ 0.
  \end{align}
Next, we verify that $v+\int_p \frac{1}{\rho a} {\rm d}p$ is the final Riemann invariant:
\begin{align}
   \dfrac{\partial v+\int_p \frac{1}{\rho a} {\rm d}p}{\partial \mathbf{W}}\cdot \mathbf{r}_1 =&~  a \dfrac{\partial v+\int_p \frac{1}{\rho a} {\rm d}p}{\partial v} - \sum_\mA \trho_\mA a_\mA^2 \dfrac{\partial v+\int_p \frac{1}{\rho a} {\rm d}p}{\partial p_\mA}\nn\\
    =&~a -\sum_\mA \trho_\mA a_\mA^2\dfrac{\partial \int_p \frac{1}{\rho a} {\rm d}p}{\partial p_\mA} \nn\\
    =&~ a-\sum_\mA \trho_\mA a_\mA^2\frac{1}{\rho a}   \nn\\
    =&~0.
  \end{align}
In a similar fashion we find that the Riemann invariants associated with $\lambda_{N+1}$ are $Y_\mA$, $\mA = 1,...,N$ and $v-\int_p \frac{1}{\rho a} {\rm d}p$.
Finally, we compute the Riemann invariants associated with $\lambda_2,...,\lambda_N$:
  \begin{align}
   0= \dfrac{\partial \omega_i}{\partial \mathbf{W}}\cdot \mathbf{r}_j = - \dfrac{\partial \omega_i}{\partial p_1} + \dfrac{\partial \omega_i}{\partial p_j},\quad \text{for }j = 2,...,N.
  \end{align}
  This is equivalent to
    \begin{align}\label{eq: system RI}
   \dfrac{\partial \omega_i}{\partial p_1} = \dots = \dfrac{\partial \omega_i}{\partial p_N}.
  \end{align}
It is now easy to verify that $\omega_i = v,p$ solve system \eqref{eq: system RI}.
\end{proof}
Finally, we note that the conservative nature of the system permits to uniquely define shock-waves due to the Rankine-Hugoniot conditions:
\begin{subequations}\label{eq: Rankine-Hugoniot conditions}
    \begin{align}
        \Delta \left(\trho_\mA (v-\sigma)\right) =&~ 0, \quad\mA = 1,...,N\\
        \Delta \left(\rho v(v-\sigma) + p\right) =&~ 0,
    \end{align}
\end{subequations}
with $\Delta \phi = \phi_R - \phi_L$ the jump between the left and right states of the shock-wave, and $\sigma$ the speed of the shock-wave.

\section{Binary mixtures}\label{sec: binary mixtures}

In this section we restrict to binary mixtures ($\mA = 1,2)$
for which we discuss alternative -- equivalent -- forms. The model \eqref{eq: model simplified} takes for binary mixtures the following form:
\begin{subequations}\label{eq: model binary}
  \begin{align}
   \partial_t (\rho \bv) + {\rm div} \left( \rho \bv\otimes \bv \right) + \trho_1\nabla\hat{\mu}_1 + \trho_2\nabla\hat{\mu}_2
    - {\rm div} \left(   \nu (2\mathbf{D}+\lambda({\rm div}\bv) \mathbf{I}) \right)-\rho\mathbf{g} &=~ 0, \label{eq: model binary: mom}\\
  \partial_t \trho_1  + {\rm div}(\trho_1  \bv) - {\rm div} \left(M\nabla\left(\hat{\mu}_1-\hat{\mu}_2)\right)\right)  + m \left(\hat{\mu}_1-\hat{\mu}_2)\right) &=~0,\label{eq: model binary: mass 1}\\
  \partial_t \trho_2  + {\rm div}(\trho_2  \bv) - {\rm div} \left(M\nabla\left(\hat{\mu}_2-\hat{\mu}_1)\right)\right)  + m \left(\hat{\mu}_2-\hat{\mu}_1)\right) &=~0,\label{eq: model binary: mass 2}
  \end{align}
\end{subequations}
where $M =- M_{12}=-M_{21}=M_{11}=M_{22}$ and $m=-m_{12}=-m_{21}=m_{11}=m_{22}$. By means of variable transformation, we may express the model in terms of the standard mixture quantities. The diffusive fluxes and mass transfer terms constitute a single quantity by means of the balance conditions \eqref{eq: rel gross motion zero} and \eqref{eq: balance mass fluxes}:
\begin{subequations}
    \begin{align}
        \hat{\zeta} :=&~ \hat{\zeta}_1-\hat{\zeta}_2, \quad\quad \hat{\zeta}_1 = \frac{1}{2}\hat{\zeta}, \quad\quad \hat{\zeta}_2= -\frac{1}{2}\hat{\zeta},\\
        \hat{\bH} :=&~ \hat{\bH}_1-\hat{\bH}_2, \quad\quad \hat{\bH}_1 = \frac{1}{2}\hat{\bH}, \quad\quad \hat{\bH}_2= -\frac{1}{2}\hat{\bH}.
    \end{align}
\end{subequations}
Additionally, we introduce a mass-fraction order parameter:
\begin{align}\label{eq: order parameters}
  Y :=&~ Y_1-Y_2, \quad\quad    Y_1=\frac{1+Y}{2}, \quad\quad Y_2=\frac{1-Y}{2},
\end{align}
and deduce the relation for the density of the mixture $\rho$:
\begin{subequations}\label{eq: var trans}
\begin{align}
    \frac{2}{\rho} =\frac{Y_1}{\trho_1} + \frac{Y_2}{\trho_2}=\frac{1}{\trho_1}\frac{1+Y}{2} + \frac{1}{\trho_2}\frac{1-Y}{2},\\
    \trho_1 = \rho \frac{1+Y}{2}, \quad \trho_2 =\rho \frac{1-Y}{2}, \quad \rho  = \trho_1+\trho_2, \quad  Y = \dfrac{\trho_1-\trho_2}{\trho_1+\trho_2}.
\end{align}
\end{subequations}
Furthermore, expressing the model solely in terms of mixture quantities requires the conversion of the component chemical potential quantities. To this purpose we define:
\begin{subequations}\label{eq: chem pot rho c Psi}
\begin{align}
    \hat{\mu}_\rho =&~ \dfrac{ \partial \hat{\Psi}}{\partial \rho} - {\rm div}\dfrac{\partial \hat{\Psi}}{\partial\nabla \rho},\\
    \hat{\mu}_Y =&~ \dfrac{ \partial \hat{\Psi}}{\partial Y} - {\rm div}\dfrac{\partial \hat{\Psi}}{\partial\nabla Y}.
\end{align}
\end{subequations}  
Additionally, we introduce the following identities.
\begin{lemma}[Transformation identities]\label[lemma]{lem: mix identities}
The component chemical potential quantities may be expressed in mixture quantities by means of the identities: 
\begin{subequations} \begin{align}
     \trho_1 \nabla \hat{\mu}_1+\trho_2\nabla\hat{\mu}_2=&~\rho \nabla \hat{\mu}_\rho - \hat{\mu}_Y \nabla Y,\\
      \hat{\mu}_1-\hat{\mu}_2 =&~ 2\frac{\hat{\mu}_Y}{\rho}.
  \end{align}
\end{subequations}
\end{lemma}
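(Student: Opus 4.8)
The plan is to recognise that $\hat\mu_1,\hat\mu_2$ and $\hat\mu_\rho,\hat\mu_Y$ are the Euler--Lagrange (first-variation) derivatives of one and the same free-energy functional $\int_{\mathcal R(t)}\hat\Psi\,\mathrm dv$, written in the two equivalent variable sets $\{\trho_1,\trho_2\}$ and $\{\rho,Y\}$, which are linked by the \emph{pointwise} (algebraic, non-differential) change of variables \eqref{eq: var trans}. First I would record the linearised form of \eqref{eq: var trans}: since $\trho_1=\rho\tfrac{1+Y}{2}$ and $\trho_2=\rho\tfrac{1-Y}{2}$,
\[
\delta\trho_1=Y_1\,\delta\rho+\tfrac{\rho}{2}\,\delta Y,\qquad
\delta\trho_2=Y_2\,\delta\rho-\tfrac{\rho}{2}\,\delta Y .
\]
Substituting this into $\int(\hat\mu_1\,\delta\trho_1+\hat\mu_2\,\delta\trho_2)\,\mathrm dv=\delta\!\int\hat\Psi\,\mathrm dv=\int(\hat\mu_\rho\,\delta\rho+\hat\mu_Y\,\delta Y)\,\mathrm dv$ and using arbitrariness of the compactly supported variations $\delta\rho,\delta Y$ (so that the boundary terms from the integrations by parts implicit in the definition of the chemical potentials drop out), one reads off the pointwise identities
\[
\hat\mu_\rho=Y_1\hat\mu_1+Y_2\hat\mu_2,\qquad \hat\mu_Y=\tfrac{\rho}{2}\,(\hat\mu_1-\hat\mu_2).
\]
The second of these is exactly the claimed relation $\hat\mu_1-\hat\mu_2=2\hat\mu_Y/\rho$.

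For the first identity I would take the gradient of $\hat\mu_\rho=Y_1\hat\mu_1+Y_2\hat\mu_2$, multiply by $\rho$, and apply the product rule:
\[
\rho\nabla\hat\mu_\rho=\rho Y_1\nabla\hat\mu_1+\rho Y_2\nabla\hat\mu_2+\rho\big(\hat\mu_1\nabla Y_1+\hat\mu_2\nabla Y_2\big).
\]
Now $\rho Y_\mA=\trho_\mA$, while $Y_1=\tfrac{1+Y}{2}$ and $Y_2=\tfrac{1-Y}{2}$ give $\nabla Y_1=-\nabla Y_2=\tfrac12\nabla Y$, so the last bracket equals $\tfrac{\rho}{2}(\hat\mu_1-\hat\mu_2)\nabla Y=\hat\mu_Y\nabla Y$ by the relation just obtained. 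Rearranging yields $\trho_1\nabla\hat\mu_1+\trho_2\nabla\hat\mu_2=\rho\nabla\hat\mu_\rho-\hat\mu_Y\nabla Y$, as claimed.

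The only delicate point — hence the main obstacle — is justifying that the Euler--Lagrange derivatives transform by the \emph{pointwise} Jacobian of \eqref{eq: var trans} even though $\hat\Psi$ depends on the gradients $\nabla\trho_\mA$; the cleanest justification is the variational argument above, which works precisely because the map $(\trho_1,\trho_2)\mapsto(\rho,Y)$ involves no derivatives. Should a fully explicit verification be preferred, one instead writes $\hat\Psi$ as a function of $(\rho,Y,\nabla\rho,\nabla Y)$ and differentiates by the chain rule, carefully retaining the extra terms stemming from $\nabla\trho_\mA=\tfrac{1\pm Y}{2}\nabla\rho\pm\tfrac{\rho}{2}\nabla Y$ carrying explicit $\rho$- and $Y$-dependence; upon forming $\hat\mu_\rho$ and $\hat\mu_Y$ and expanding the divergence terms by the product rule, those extra contributions cancel in pairs, reproducing the same two identities (cf.\ \cref{sec: appendix: alternative constitutive modeling: rho c Psi}). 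I would present the variational route as the main line and treat the explicit chain-rule computation as a routine cross-check.
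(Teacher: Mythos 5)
Your proposal is correct, and it reaches the same two intermediate relations as the paper ($\hat{\mu}_\rho = Y_1\hat{\mu}_1+Y_2\hat{\mu}_2$, equivalently $\rho\hat{\mu}_\rho=\trho_1\hat{\mu}_1+\trho_2\hat{\mu}_2$, and $\hat{\mu}_Y=\tfrac{\rho}{2}(\hat{\mu}_1-\hat{\mu}_2)$), but the execution differs in two respects. First, the paper simply asserts these relations ``by the chain rule,'' treating the chemical potentials as if they were plain partial derivatives of $\hat{\Psi}$; your variational argument supplies the justification that is actually needed when $\hat{\Psi}$ depends on $\nabla\trho_\mA$, exploiting that the change of variables \eqref{eq: var trans} is pointwise so the first variations transform by the algebraic Jacobian. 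Second, for the gradient identity the paper differentiates the product relation $\rho\hat{\mu}_\rho=\trho_1\hat{\mu}_1+\trho_2\hat{\mu}_2$ and then invokes the Gibbs--Duhem-type identity $\nabla\hat{\Psi}=\hat{\mu}_1\nabla\trho_1+\hat{\mu}_2\nabla\trho_2=\hat{\mu}_\rho\nabla\rho+\hat{\mu}_Y\nabla Y$ to cancel the unwanted terms, whereas you differentiate $\hat{\mu}_\rho=Y_1\hat{\mu}_1+Y_2\hat{\mu}_2$ directly and only need $\nabla Y_1=-\nabla Y_2=\tfrac12\nabla Y$ together with the already-established expression for $\hat{\mu}_Y$. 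Your route is slightly more economical and sidesteps the Gibbs--Duhem identity, which in the gradient-dependent setting itself requires an argument (the na\"ive expansion of $\nabla\hat{\Psi}$ produces Hessian terms that do not obviously reassemble into $\sum_\mA\hat{\mu}_\mA\nabla\trho_\mA$); both approaches are otherwise of comparable length.
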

\begin{proof}
See \cref{lem: appendix: relation chem}.
\end{proof}
Applying the variable transformation \eqref{eq: var trans} and the identities of \cref{lem: mix identities} provides:
\begin{subequations}\label{eq: model simplified equivalent}
  \begin{align}
  \partial_t (\rho \bv) + {\rm div} \left( \rho \bv\otimes \bv \right) + \rho \nabla \hat{\mu}_\rho - \hat{\mu}_Y \nabla Y
    - {\rm div} \left(   \nu (2\mathbf{D}+\lambda({\rm div}\bv) \mathbf{I}) \right)-\rho\mathbf{g} &=~ 0, \label{eq: model binary: mom 1}\\
  \partial_t \rho + {\rm div}(\rho \bv) &=~ 0,\label{eq: model simplified equivalent: rho}\\
 \partial_t (\rho Y) + {\rm div}(\rho Y \bv) - {\rm div} \left(\hat{M}\nabla\frac{\hat{\mu}_Y}{\rho}\right)  + \hat{m} \frac{\hat{\mu}_Y}{\rho} &=~0, \label{eq: model simplified equivalent: rho c}
 \end{align}
\end{subequations} 
where $\hat{M}=4M$ and $\hat{m}=4m$. Identifying
$ \hat{\Psi}(\trho_1,\nabla \trho_1,\trho_2,\nabla \trho_2) \equiv \check{\Psi}(\rho,\nabla \rho, Y,\nabla Y) $ so that $ \hat{\mu}\rho = \check{\mu}\rho $ and $ \hat{\mu}_Y = \check{\mu}_Y $, 
and setting $ \hat{M} = \check{M} $,reveals that \eqref{eq: model simplified equivalent} matches with the models derived in \cref{sec: appendix: alternative constitutive modeling: rho c Psi}.

Furthermore, the model may be expressed using a mass-measured free energy $\hat{\psi} = \rho^{-1}\hat{\Psi}$. For this purpose we introduce the chemical potentials:
\begin{subequations}
\begin{align}
    \hat{\nu}_\rho =&~ \dfrac{ \partial \hat{\psi}}{\partial \rho} - \frac{1}{\rho}{\rm div}\left(\rho\dfrac{\partial \hat{\psi}}{\partial\nabla \rho}\right),\\
    \hat{\nu}_Y =&~ \dfrac{ \partial \hat{\psi}}{\partial Y} - \frac{1}{\rho}{\rm div}\left(\rho \dfrac{\partial \hat{\Psi}}{\partial\nabla Y}\right),
\end{align}
\end{subequations}
which are related to \eqref{eq: chem pot rho c Psi} via $\hat{\mu}_\rho = \rho \hat{\nu}_\rho + \hat{\psi}$ and $\hat{\mu}_Y = \rho \hat{\nu}_Y$ (see also the proof of \cref{prop: Equivalence binary mixture models}). Inserting these provides the alternative formulation:
\begin{subequations}\label{eq: model simplified equivalent 2}
  \begin{align}
  \partial_t (\rho \bv) + {\rm div} \left( \rho \bv\otimes \bv \right) + \rho \nabla (\rho \hat{\nu}_\rho + \hat{\psi}) - \rho \hat{\nu}_Y \nabla Y&\nn\\
    - {\rm div} \left(   \nu (2\mathbf{D}+\lambda({\rm div}\bv) \mathbf{I}) \right)-\rho\mathbf{g} &=~ 0, \label{eq: model binary: mom 2}\\
  \partial_t \rho + {\rm div}(\rho \bv) &=~ 0,\label{eq: model simplified equivalent: rho 2}\\
 \partial_t (\rho Y) + {\rm div}(\rho Y \bv) - {\rm div} \left(\hat{M}\nabla\hat{\nu}_Y\right)  + \hat{m} \hat{\nu}_Y &=~0. \label{eq: model simplified equivalent: rho c 2}
 \end{align}
\end{subequations} 
This formulation aligns with the model derived in \cref{subsec: mass measure free energy} through the identification
$ \hat{\psi}(\trho_1,\nabla \trho_1,\trho_2,\nabla \trho_2) \equiv \doublecheck{\psi}(\rho,\nabla \rho, Y,\nabla Y) $,
which implies $ \hat{\nu}_\rho = \doublecheck{\mu}_\rho $ and $ \hat{\nu}_Y = \doublecheck{\mu}_Y $, and additionally identifying $ \hat{M} = \doublecheck{M} $.


\section{Connections to existing models}\label{sec: Connections} 

This section aims to establish connections with existing models: in \cref{subsec: 1st order system}, we compare with established compressible two-phase models; in \cref{subsec: Binary flows}, we examine links to phase-field models for binary fluids; and in \cref{subsec: Incompressible N-phase flows}, we highlight similarities with an $N$-phase incompressible model.

\subsection{Compressible two-phase models}\label{subsec: 1st order system}

Motivated by revealing similarities between models, here we describe the connection of the first-order system \eqref{eq: model 1st order} to existing \textit{three-equation} compressible two-phase models.\\

The model of \cite{dumbser2011simple} describing a two-phase flow scenario with phase 1 a liquid phase and phase 2 a gas phase, is given by\footnote{The model of \cite{dumbser2011simple} contains an additional term that represents gravitational forces. In this comparison we consider the version without this term.}:
\begin{subequations}\label{eq: model Dumbser}
  \begin{align}
   \partial_t (\rho_1\phi_1 \bv) + {\rm div} \left( \rho_1\phi_1 \bv\otimes \bv \right) + \nabla \breve{p}_1
    &~= 0 \label{eq: model Dumbser: mom}\\ 
  \partial_t (\rho_1\phi_1)  + {\rm div}( \rho_1\phi_1  \bv) &=~0,\label{eq: model Dumbser: mass}\\
  \partial_t \phi_1 + \bv \cdot \nabla \phi_1 &=~0.\label{eq: model Dumbser: phi}
  \end{align}
\end{subequations}
The unknowns are the specific density of phase 1 $\rho_1$, the volume fraction of phase 1 $\phi_1$, the velocity $\mathbf{v}$, and the pressure of phase 1 $p_1$. This pressure is given by the equation of state of the form $\breve{p}_1 = \breve{p}_1(\rho_1)$ for which a Tait equation of state is employed. We compare this model to the binary version of the first-order system \eqref{eq: model 1st order}, i.e.:
\begin{subequations}\label{eq: model 1st order 2 components}
  \begin{align}
   \partial_t (\rho \bv) + {\rm div} \left( \rho \bv\otimes \bv \right) + \nabla p &=~ 0, \label{eq: model  simplified 2 components: mom}\\
  \partial_t \trho_1  + {\rm div}(\trho_1  \bv) &=~0.\label{eq: eq: model simplified 2 components: mass 1}\\
  \partial_t \trho_2   + {\rm div}(\trho_2  \bv) &=~0,\label{eq: eq: model simplified 2 components: mass 2} 
  \end{align}
\end{subequations}
with unknowns $\trho_1, \trho_2, \bv$ and $p = p(\trho_1,\trho_2)$. We observe the following similarities and differences between the models:
\begin{description}
\item[$\bullet$ Structure equations]: The mass balance equation of phase 1 is present in both models: \eqref{eq: model Dumbser: mass} and \eqref{eq: eq: model simplified 2 components: mass 1}. In contrast, the momentum equations differ in the sense that \eqref{eq: model Dumbser: mom} represents the momentum balance of phase 1, whereas \eqref{eq: model  simplified 2 components: mom} describes the momentum balance of the mixture. We remark that \eqref{eq: model Dumbser} contains a single (specific) density ($\rho_1$), for which one might argue that this quantity represents the mixture density (and may denoted as $\rho$). Finally, we note that 
\eqref{eq: eq: model simplified 2 components: mass 1} may be written as:
\begin{align}
     \partial_t \phi_1  + \mathbf{v}\cdot\nabla \phi_1 = - \frac{\phi_1}{\rho_1}(\partial_t \rho_1  + {\rm div}(\rho_1  \bv)),
\end{align}
where the right-hand side is non-zero in general, showing that it is incompatible with \eqref{eq: model Dumbser: phi}.
\item[$\bullet$ Symmetry]: Model \eqref{eq: model Dumbser} is not symmetric with respect to the phases in the sense that it contains mass and momentum balance equations with respect to phase 1, but not of phase 2. Of course, the evolution equation \eqref{eq: model Dumbser: phi} is symmetric due to the saturation constraint $\phi_1+\phi_2=1$. 
\item[$\bullet$ Conservative structure]: In contrast to \eqref{eq: model 1st order 2 components}, the model \eqref{eq: model Dumbser: phi} is of non-conservative type due to the interface evolution equation \eqref{eq: model Dumbser: phi}.
\item[$\bullet$ Energy law]: Model \eqref{eq: model 1st order 2 components} satisfies the energy law \eqref{eq: energy dissipation} with $\mathscr{D}=0$. No energy law is provided for \eqref{eq: model Dumbser}.
\item[$\bullet$ Equation of state]: The equations of state of the models, $\breve{p}_1 = \breve{p}_1(\rho_1)$ for model \eqref{eq: model Dumbser} and $p = p(\trho_1,\trho_2)$ for model \eqref{eq: model 1st order 2 components} differ in two key aspects:
\begin{enumerate}
    \item their dependency on a single constituent density quantity versus dependency on each constituent density quantity
    \item the particular form of this/these density quantity/quantities, i.e. \textit{specific} constituent density versus \textit{partial} constituent densities.
\end{enumerate}
\item[$\bullet$ Speed of sound]: Similarly as the equations of state, the constituent speed of sound of the models, $a_1^2 = {\rm d}_{\rho_1}\breve{p}_1$ for model \eqref{eq: model Dumbser} and $a_{\mA\mB}^2 = \partial_{\trho_\mB}p_{\mA}$ for model \eqref{eq: model 1st order 2 components} differ in two key aspects:
\begin{enumerate}
    \item their dependency on a single constituent density quantity versus dependency on each constituent density quantity
    \item the particular form of this/these density quantity/quantities, i.e. \textit{specific} constituent density versus \textit{partial} constituent densities.
\end{enumerate}
\item[$\bullet$ Eigenvalues]: The eigenvalues of \eqref{eq: model Dumbser} coincide with those of the first-order system \eqref{eq: model 1st order}.
    \item[$\bullet$ Hyperbolicity]: Due to the adoption of the Tait equation of state in \eqref{eq: model Dumbser}, this model is hyperbolic when $\rho_1>0$ and
$\phi_1>0$, whereas the system \eqref{eq: model 1st order} is hyperbolic when $a^2_{\mA\mB}>0$, for all $\mA,\mB=1,2$.
    \item[$\bullet$ Riemann invariants]: For model \eqref{eq: model 1st order} these are given in \eqref{eq: Riemann invariants}, while for the model \eqref{eq: model Dumbser} these are not provided.
    \item[$\bullet$ Rankine-Hugoniot conditions]: For model \eqref{eq: model 1st order} these are given in \eqref{eq: Rankine-Hugoniot conditions}, due to the non-conservative nature of the model \eqref{eq: model Dumbser} these are not well-defined.
\end{description}\vspace{0.5cm}
Next, we study the similarity of the model \eqref{eq: model 1st order} with that of \cite{daude2023hyperbolic}, which is given by:
\begin{subequations}\label{eq: model Daude}
  \begin{align}
   \partial_t (\rho \bv) + {\rm div} \left( \rho \bv\otimes \bv \right) + \nabla \breve{p}
    &~=0 \label{eq: model Daude: mom}\\
  \partial_t (\rho_\mA \phi_\mA )  + {\rm div}(\rho_\mA \phi_\mA  \bv) &=~0,\quad \text{ for } \mA = 1,2,3.\label{eq: model Daude: mass}
  \end{align}
\end{subequations}
where the unknowns are $\bv, \phi_\mA, \rho_\mA, \mA=1,2,3$ and $p$. These quantities are connected through the equations of state $\breve{p}_\mA = \breve{p}_\mA(\rho_\mA)$ in which the constituent pressures $\breve{p}_\mA$ are assumed individually equal and equal to the mixture pressure: $\breve{p}_1=\breve{p}_2=\breve{p}_3=\breve{p}$. We compare this model to the ternary version of the first-order system \eqref{eq: model 1st order}, i.e.:
\begin{subequations}\label{eq: model 1st order 3 components}
  \begin{align}
   \partial_t (\rho \bv) + {\rm div} \left( \rho \bv\otimes \bv \right) + \nabla p &=~ 0, \label{eq: model  simplified 3 components: mom}\\
  \partial_t \trho_\mA  + {\rm div}(\trho_\mA  \bv) &=~0, \quad \text{for} \quad \mA = 1,2,3 \label{eq: eq: model simplified 3 components: mass}
  \end{align}
\end{subequations}
with unknowns $\trho_\mA, \mA = 1,2,3, \bv$ and $p$ and the equation of state $p = p(\left\{\trho_\mA\right\}_{\mA=1,2,3})$. We observe the following similarities and differences between the models:
\begin{description}
\item[$\bullet$ Structure equations]: Through the identity $\trho_\mA=\rho_\mA\phi_\mA$ we observe that the form of the equations of \eqref{eq: model Daude} and \eqref{eq: model 1st order 3 components} is identical.
\item[$\bullet$ Symmetry]: Both model \eqref{eq: model Daude} and model \eqref{eq: model 1st order 3 components} are symmetrical with respect to the phases.
\item[$\bullet$ Conservative structure]: Both model \eqref{eq: model Daude} and model \eqref{eq: model 1st order 3 components} are conservative.
\item[$\bullet$ Energy law]: For smooth solutions, model \eqref{eq: model Daude} satisfies the energy law:
\begin{align}
    \partial_t (\rho u) + {\rm div}(\rho u \bv + \breve{p}\bv) = 0,
\end{align}
where $\rho u = \sum \trho_\mA u_\mA$, $u_\mA = \epsilon_\mA + |\bv|^2/2$ and ${\rm d}\epsilon_\mA/{\rm d}\rho_\mA = p_\mA/\rho_\mA^2= \breve{p}/\rho_\mA^2$ for some constituent energy quantity $\epsilon_\mA$.
On the other hand, \eqref{eq: model 1st order 2 components} satisfies the energy law \eqref{eq: energy dissipation} with $\mathscr{D}=0$. This may be written in the local form:
\begin{align}\label{eq: local form energy law}
    \partial_t (\rho \psi) + {\rm div}(\rho \psi \bv + p\bv) = 0.
\end{align}
\item[$\bullet$ Equation of state]: We observe the following similarities and differences regarding pressure quantities:
\begin{enumerate}
    \item Both models contain constituent pressure quantities and a single mixture pressure quantity.
    \item The constituent pressure quantities of the models, $\breve{p}_\mA=\breve{p}_\mA(\rho_\mA)$ for model \eqref{eq: model Daude} and $p_\mA(\left\{\trho_\mB\right\}_{\mB=1,2,3})$ for model \eqref{eq: model 1st order 3 components} differ in two aspects:
    \begin{enumerate}
        \item their dependency on a single constituent density quantity versus dependency on each constituent density quantity, 
        \item the particular form of this/these density quantity/quantities, i.e. \textit{specific} constituent density versus \textit{partial} constituent densities.
    \end{enumerate}
    \item The mixture pressure $\breve{p}$ of \eqref{eq: model Daude} assumes a mechanical equilibrium:
\begin{align}
    \breve{p} = \breve{p}_1=\breve{p}_2=\breve{p}_3,
\end{align}
    whereas the mixture pressure $p$ of model \eqref{eq: model 1st order 3 components} is given by Dalton's law:
    \begin{align}
        p = p_1+p_2+p_3.
    \end{align}
\end{enumerate}
\item[$\bullet$ Speed of sound]: We observe the following similarities and differences regarding speed of sound quantities:
\begin{enumerate}
    \item Both models contains both constituent speed of sound quantities and a single mixture speed of sound quantity.
    \item The constituent speed of sound quantities of the models, $\hat{a}_\mA^2 = {\rm d}_{\rho_\mA}\breve{p}_\mA$ for model \eqref{eq: model Daude} and $a_{\mA\mB}^2 = \partial_{\trho_\mB}p_{\mA}$ for model \eqref{eq: model 1st order 3 components} differ in two aspects:
    \begin{enumerate}
        \item their dependency on a single constituent density quantity versus dependency on each constituent density quantity, 
        \item the particular form of this/these density quantity/quantities, i.e. \textit{specific} constituent density versus \textit{partial} constituent densities.
    \end{enumerate}
    \item The mixture speed of sound $\hat{a}$ of \eqref{eq: model Daude} satisfies the Wood formula:
\begin{align}\label{eq: Wood formula}
    \frac{1}{\rho \hat{a}^2} = \sum_{\mA=1,2,3} \frac{\phi_\mA}{\rho_\mA \hat{a}_\mA^2}, 
\end{align}
    whereas the mixture speed of sound $a$ of model \eqref{eq: model 1st order 3 components} is given by
    \begin{align}
        \rho a^2 = \sum_{\mA,\mB=1,2,3} \trho_\mB a_{\mA\mB}^2.
    \end{align}
\end{enumerate}
\item[$\bullet$ Eigenvalues]: The eigenvalues of \eqref{eq: model Daude} and of \eqref{eq: model 1st order 3 components} and of the same form:
\begin{align}
    \lambda_1 = v-\bar{a}, \lambda_{2,3} = v, \lambda_4 = v + \bar{a},
\end{align}
with $\bar{a} = \hat{a}$ for \eqref{eq: model Daude} and $\bar{a}=a$ for \eqref{eq: model 1st order 3 components}.
    \item[$\bullet$ Hyperbolicity]: The model \eqref{eq: model Daude} is hyperbolic when $\hat{a}^2_{\mA}>0$, $\mA=1,2,3$, whereas the system \eqref{eq: model 1st order 3 components} is hyperbolic when $a^2_{\mA\mB}>0$, for all $\mA,\mB=1,2,3$.
    \item[$\bullet$ Riemann invariants]: The Riemann invariants associated with the various waves of the models \eqref{eq: model Daude} and \eqref{eq: model 1st order 3 components} are of the same form: 
\begin{subequations}\label{eq: Riemann invariants Daude}
  \begin{align}
     \lambda_1: & \quad \left\{Y_1,Y_2, v+\int_p \frac{1}{\rho \bar{a}} {\rm d}p \right\}\\
     \lambda_2,\lambda_3: & \quad \left\{v,p\right\}\\
     \lambda_{4}: &  \quad \left\{Y_1,Y_2, v-\int_p \frac{1}{\rho \bar{a}} {\rm d}p \right\},
  \end{align}
\end{subequations}
with $\bar{a} = \hat{a}$ for \eqref{eq: model Daude} and $\bar{a}=a$ for \eqref{eq: model 1st order 3 components}.
    \item[$\bullet$ Rankine-Hugoniot conditions]: The Rankine-Hugoniot conditions of both the model \eqref{eq: model Daude} and \eqref{eq: model 1st order 3 components} are well-defined and match: \begin{subequations}\label{eq: Rankine-Hugoniot conditions 2}
    \begin{align}
        \Delta \left(\trho_\mA (v-\sigma)\right) =&~ 0, \quad\mA = 1,2,3\\
        \Delta \left(\rho v(v-\sigma) + \bar{p}\right) =&~ 0,
    \end{align}
\end{subequations}
with $\bar{p}=\breve{p}$ for model \eqref{eq: model Daude} and $\bar{p}=p$ for model \eqref{eq: model 1st order 3 components}, where we recall the jump $\Delta \phi = \phi_R - \phi_L$ and the speed of the shock-wave $\sigma$. 
\end{description}\vspace{0.5cm}

\begin{remark}[Numerical difficulties Wood formula]
    The Wood formula \eqref{eq: Wood formula} has a non-monotonic variation with volume fraction $\phi_\mA$. In \cite{saurel2009simple} it is argued that this causes numerical difficulties due to inaccuracies in wave transmission across interfaces.
\end{remark}

We provide an overview of the comparison in \cref{table: overview comparison 1st order}.
\begin{center}
\begin{table}
\centering
{\small
\setlength{\extrarowheight}{2pt} 
                      \begin{tabular}{c >{\centering\arraybackslash}m{2.5cm} >{\centering\arraybackslash}m{3.5cm} >{\centering\arraybackslash}m{3.5cm}}
                      & \rotatebox{-10}{\textbf{Proposed model}} 
                      & \rotatebox{-10}{Dumbser \cite{dumbser2011simple}} 
                      & \rotatebox{-10}{Daude et al. \cite{daude2023hyperbolic}} \\[6pt] \thickhline\\[-4pt]
Number of phases & N  & 2 & 3    \\[6pt] \hline
Symmetric w.r.t. phases & {\color{darkgreen}\cmark}  & {\color{red}\xmark} & {\color{darkgreen}\cmark}    \\[6pt] \hline
Conservative model & {\color{darkgreen}\cmark}   & {\color{red}\xmark} &{\color{darkgreen}\cmark}    \\[6pt] \hline
Energy law & {\color{darkgreen}\cmark}   & {\color{red}\xmark} & {\color{darkgreen}\cmark}  \\[6pt] \hline
Mixture density & $\rho = \sum_\mA \trho_{\mA}$ & not defined & $\rho = \sum_\mA \trho_{\mA}$  \\[6pt] \hline
Phase pressure & $p_{\mA} = p_{\mA}(\left\{\trho_\mB\right\})$ & $p_1 = p_1(\rho_1)$ & $p_\mA = p_\mA(\rho_\mA)$ \\[6pt] \hline
Mixture pressure & $p = \sum_\mA p_{\mA}$ & not defined & $\breve{p} = \breve{p}_1=\breve{p}_2=\breve{p}_3$  \\[6pt] \hline
Phase speed of sound & $a_{\mA\mB}^2 = \partial_{\trho_\mB} p_\mA$  & $a_1^2 = \partial_{\rho_1} p_1$ & $a_{\mA}^2 = \partial_{\rho_\mA} p_\mA$ \\[6pt] \hline
Mixture speed of sound & $\rho a^2 = \sum_{\mA\mB}\trho_\mB a_{\mA\mB}^2$  & not defined & $(\rho a^2)^{-1} = \sum_{\mA}\phi_\mA (\rho_\mA a_{\mA}^2)^{-1}$ \\[6pt] \hline
Eigenvalues & $v-a, v, ..., v,v+a$  & $v-\hat{a}, v, v,v+\hat{a}$  & $v-\hat{a}, v,v,v+\hat{a}$ \\[6pt] \hline
Riemann invariants & see \eqref{eq: Riemann invariants} & not given & see \eqref{eq: Riemann invariants Daude} \\[6pt] \hline
\# mass balance laws & $N$  & $1$ & $3$ \\[6pt] \hline
\# momentum balance laws & $1$  & $1$ & $1$ \\[6pt] \hline
\# VOF type equations & $0$  & $1$ & $0$\\[6pt] \hline
Underlying theory/model  & Continuum mixture theory & Baer-Nunziato model & not given  \\[6pt] \hline 
\end{tabular}}
\caption{Comparison first-order models.}
\label{table: overview comparison 1st order}
\end{table}
\end{center}

\subsection{Phase-field models for binary fluids}\label{subsec: Binary flows}
We show resemblance of model \eqref{eq: model binary} with two existing binary fluid models in the literature. By using \cref{lem: binary fluids Korteweg identity vol}, the model \eqref{eq: model simplified equivalent} converts into:
\begin{subequations}\label{eq: model Gomez}
  \begin{align}
  \partial_t (\rho \bv) + {\rm div} \left( \rho \bv\otimes \bv \right) + \nabla \check{p} + {\rm div} \left( \nabla Y\otimes \dfrac{\partial \check{ \Psi}}{\partial \nabla Y}  +\nabla \rho\otimes \dfrac{\partial \check{ \Psi}}{\partial \nabla \rho}  \right)&\nn\\  
    - {\rm div} \left(   \nu (2\mathbf{D}+\lambda({\rm div}\bv) \mathbf{I}) \right)-\rho\mathbf{g} &=~ 0, \label{eq: model Gomez: mom}\\
  \partial_t \rho + {\rm div}(\rho \bv) &=~ 0,\label{eq: model Gomez: rho}\\
 \partial_t (\rho Y) + {\rm div}(\rho Y \bv) - {\rm div} \left(\hat{M}\nabla\frac{\hat{\mu}_Y}{\rho}\right)  + \hat{m} \frac{\hat{\mu}_Y}{\rho} &=~0, \label{eq: model Gomez: rho c}
 \end{align}
\end{subequations}
with $p = \rho \hat{\mu}_\rho - \hat{\Psi}$. By setting $\hat{m}=0$ and inserting arbitrary source terms on the right-hand side of \eqref{eq: model Gomez: rho} and \eqref{eq: model Gomez: rho c} the model coincides with that of \cite{mukherjee2024mixtures}. 

Next, through the identification $\hat{\Psi}=\rho \hat{\psi}$, the model \eqref{eq: model simplified equivalent 2} converts into:
\begin{subequations}\label{eq: model Lowengrub 0}
  \begin{align}
  \partial_t (\rho \bv) + {\rm div} \left( \rho \bv\otimes \bv \right) + \nabla \hat{p} + {\rm div} \left(\rho \nabla Y\otimes \dfrac{\partial \hat{ \psi}}{\partial \nabla Y}  +\rho\nabla \rho\otimes \dfrac{\partial \hat{ \psi}}{\partial \nabla \rho}  \right)&\nn\\
    - {\rm div} \left(   \nu (2\mathbf{D}+\lambda({\rm div}\bv) \mathbf{I}) \right)-\rho\mathbf{g} &=~ 0, \label{eq: model  Lowengrub 0: mom}\\
  \partial_t \rho + {\rm div}(\rho \bv) &=~ 0,\label{eq: model  Lowengrub 0: rho}\\
 \partial_t (\rho Y) + {\rm div}(\rho Y \bv) - {\rm div} \left(\hat{M}\nabla\hat{\nu}_Y\right)  + \hat{m} \hat{\nu}_Y &=~0, \label{eq: model Lowengrub 0: rho c}
 \end{align}
\end{subequations}
with $\hat{p} = \rho \hat{\mu}_\rho - \hat{\Psi}$. By setting $\hat{m}=0$ and taking $\psi = \psi_0(\rho,Y) + \epsilon\|\nabla Y\|^2/2$, the model converts into the model reduces to the compressible Navier-Stokes Cahn-Hilliard model proposed in \cite{lowengrub1998quasi}:
\begin{subequations}\label{eq: model Lowengrub 1}
  \begin{align}
  \partial_t (\rho \bv) + {\rm div} \left( \rho \bv\otimes \bv \right) + \nabla \breve{p} + {\rm div} \left(\rho \nabla Y\otimes \nabla Y  \right)&\nn\\
    - {\rm div} \left(   \nu (2\mathbf{D}+\lambda({\rm div}\bv) \mathbf{I}) \right)-\rho\mathbf{g} &=~ 0, \label{eq: model  Lowengrub 1: mom}\\
  \partial_t \rho + {\rm div}(\rho \bv) &=~ 0,\label{eq: model  Lowengrub 1: rho}\\
 \partial_t (\rho Y) + {\rm div}(\rho Y \bv) - {\rm div} \left(\hat{M}\nabla \left( \dfrac{\partial \psi_0}{\partial Y} - \frac{\epsilon}{\rho} {\rm div}(\rho \nabla Y)\right)\right) &=~0, \label{eq: model Lowengrub 1: rho c}
 \end{align}
\end{subequations}
with $\breve{p} = \rho^2 \partial \psi_0/\partial \rho$. We emphasize here that the dependence of $\doublecheck{\psi}$ on $\nabla \rho$ is not taken into account in \cite{lowengrub1998quasi}. In other words, in terms of gradient dependency, \cite{lowengrub1998quasi} $\doublecheck{\psi}$ depends on $\nabla Y = \nabla ((\trho_1-\trho_2)/(\trho_1+\trho_2))$, whereas in \eqref{eq: model simplified equivalent 2} the free energy depends on $\nabla \trho_1$ and $\nabla \trho_2$.

\begin{remark}[Matching first-order systems]\label[remark]{rmk: 1st order systems}
  The corresponding first-order systems of  \cite{mukherjee2024mixtures} and \cite{lowengrub1998quasi} have not been analyzed in those articles. However, differences between models \eqref{eq: model binary} and those models vanish for first-order systems, meaning that the properties in \cref{sec: hyperbolicity} directly carry over.
\end{remark}

We sketch an overview of the comparison of the binary models in \cref{table: overview comparison binary}.

\begin{center}
\begin{table}
\centering
{\small
\setlength{\extrarowheight}{2pt} 
\begin{tabular}{c >{\centering\arraybackslash}m{2.5cm} >{\centering\arraybackslash}m{3.5cm} >{\centering\arraybackslash}m{3.5cm}}
                      & \rotatebox{-10}{\textbf{Proposed model}} 
                      & \rotatebox{-10}{Lowengrub \& Truskinovsky\cite{lowengrub1998quasi}} 
                      & \rotatebox{-10}{Mukherjee \& Gomez \cite{mukherjee2024mixtures}} \\[6pt] \thickhline\\[-4pt]

Number of phases & $N$  & 2 & 2    \\[6pt] \hline
\# mass balance laws & $N$  & $2$ & $2$ \\[6pt] \hline
\# momentum balance laws & $1$  & $1$ & $1$ \\[6pt] \hline
Underlying theory/model  & Continuum mixture theory & not given  & not given  \\[6pt] \hline
Symmetric w.r.t. phases & {\color{darkgreen}\cmark} & {\color{darkgreen}\cmark} & {\color{darkgreen}\cmark}\\[6pt] \hline
Mass transfer & {\color{darkgreen}\cmark}  & {\color{red}\xmark} & {\color{red}\xmark}    \\[6pt] \hline
Energy law & {\color{darkgreen}\cmark}   & {\color{darkgreen}\cmark}&{\color{darkgreen}\cmark}/{\color{red}\xmark}$^*$    \\[6pt] \hline
Constituent quantities & see \cref{subsec: decomp free energy} & not defined & not defined \\[6pt] \hline
First-order systems$^{**}$ & see \cref{sec: hyperbolicity} & see \cref{sec: hyperbolicity} & see \cref{sec: hyperbolicity} \\[6pt] \hline
\end{tabular}}
\caption{Comparison binary models. $^*$The model of \cite{mukherjee2024mixtures} is equipped with an energy-law when the source terms in the model are set to zero. $^{**}$The first-order systems match, see \cref{rmk: 1st order systems}.}
\label{table: overview comparison binary}
\end{table}
\end{center}

\subsection{Incompressible $N$-phase flows}\label{subsec: Incompressible N-phase flows}
We compare the model \eqref{eq: model simplified} with the incompressible $N$-phase model given in \cite{eikelder2024unified}, which reads for phases (constituents) $\mA = 1, ..., N$:
\begin{subequations}\label{eq: intro mass}
  \begin{align}
   \partial_t (\rho \bv) + {\rm div} \left( \rho \bv\otimes \bv \right) + \sum_\mB \trho_\mB \nabla g_\mB
    - {\rm div} \left(   \nu (2 \nabla^s \bv+\bar{\lambda}({\rm div}\bv) \mathbf{I}) \right)-\rho\mathbf{b} &=~ 0, \label{eq: intro mass: mom}\\
  \partial_t \trho_\mA  + {\rm div}(\trho_\mA  \bv) +{\rm div} \breve{\bH}_\mA  - \breve{\zeta}_\mA&=~0,\label{eq: intro mass: mass}\\
   {\rm div}\bv + \displaystyle\sum_{\mB}  \rho_\mB^{-1} \nabla \cdot \breve{\bH}_\mB  - \displaystyle\sum_{\mB}\rho_\mB^{-1}\breve{\zeta}_\mB &~= 0,\label{eq: divergence free constraint}\\
  \breve{\bH}_\mA + \sum_\mB \breve{\mathbf{M}}_{\mA\mB}\nabla g_\mB&=~0, \label{eq: intro mass: bH}\\
  \breve{\zeta}_\mA + \sum_\mB \breve{m}_{\mA\mB}g_\mB&=~0, \label{eq: intro mass: zeta}
  \end{align}
\end{subequations}
with the chemical potential quantities:
\begin{subequations}
    \begin{align}
                     g_\mA =&~ \dfrac{\hat{\mu}_\mA^\phi + \lambda}{\rho_\mA},\\
        \hat{\mu}_\mA^\phi =&~ \dfrac{ \partial \hat{\Psi}}{\partial \phi_\mA} - {\rm div}\dfrac{\partial \hat{\Psi}}{\partial\nabla \phi_\mA}.
\end{align}
\end{subequations}
The model \eqref{eq: intro mass} is incompressible in the sense that  $\rho_\mA$ is constant is time and space: $\rho_\mA(\mathbf{x},t) = \rho_\mA$. As such, directly enforcing the saturation condition \eqref{eq: sum phi} reveals that $\left\{\trho_\mA(\mathbf{x},t)=\rho_\mA\phi_\mA(\mathbf{x},t)\right\}_{\mA=1,...,N}$ constitute $N-1$ independent variables. Instead, formulation \eqref{eq: intro mass} enforces the saturation constraint through \eqref{eq: divergence free constraint}, therefore avoiding ambiguities. The set of variables is completed, analogously to a single-phase system, with Lagrange multiplier $\lambda$ that enforces \eqref{eq: divergence free constraint}. 

Obviously, since the models \eqref{eq: model full} and \eqref{eq: model simplified} have largely been derived in a similar fashion, both share key features: $N$ mass balance laws, $1$ momentum balance law, and similar closure models for diffusive fluxes and mass transfer terms. The key difference lies in the presence of \eqref{eq: divergence free constraint}. This manifests itself in the form of the chemical potentials, namely $g_\mA = \hat{\mu}_\mA + \rho_\mA^{-1}\lambda$. The model may be written into the form
\begin{subequations}\label{eq: model full incomp}
  \begin{align}
   \partial_t (\rho \bv) + {\rm div} \left( \rho \bv\otimes \bv \right) + \nabla p + {\rm div} \left(\sum_\mA \nabla \trho_\mA \otimes \dfrac{\partial \hat{\Psi}}{\partial \nabla \trho_\mA} \right)& \nn\\
    - {\rm div} \left(   \nu (2\mathbf{D}+\lambda({\rm div}\bv) \mathbf{I}) \right)-\rho\mathbf{g} &=~ 0, \label{eq: model full: mom incomp}\\
   \partial_t \trho_\mA  + {\rm div}(\trho_\mA  \bv) +{\rm div} \breve{\bH}_\mA  - \breve{\zeta}_\mA &=~0,\\
   {\rm div}\bv + \displaystyle\sum_{\mB}  \rho_\mB^{-1} \nabla \cdot \breve{\bH}_\mB  - \displaystyle\sum_{\mB}\rho_\mB^{-1}\breve{\zeta}_\mB &~= 0,
  \end{align}
\end{subequations}
where $p = \lambda + \sum_\mA \trho_\mA \hat{\mu}_\mA - \hat{\Psi}$, which closely resembles \eqref{eq: model full}. This shows that the pressure in \eqref{eq: model full incomp} is the superposition of the thermodynamical pressure $\sum_\mA \trho_\mA \hat{\mu}_\mA - \hat{\Psi}$ and the Lagrange multiplier $\lambda$.

The observed strong connection is in line with expectations (just as for single-phase flow), reflecting the natural link between compressible and incompressible models. However, this does not hold for existing compressible two-phase flow models, where establishing their incompressible counterparts is not straightforward.

\section{Conclusion}\label{sec: discussion} 
In this work, we have developed a thermodynamical framework for compressible, isothermal $N$-phase mixtures. The framework is formulated within the principles of continuum mixture theory, which serves as a fundamental basis for modeling multi-physics systems. We have established some connections with existing phase-field and compressible two-phase flow models, in an effort to initiate to the gap between these fields of research. Specifically, the proposed formulation extends the Navier-Stokes-Korteweg (NSK) model to multiple fluids and adapts the Navier-Stokes-Cahn-Hilliard/Allen-Cahn (NSCH/AC) models to general $N$-phase compressible mixtures. A key distinction from existing compressible two-phase flow models is that reducing the complexity of those models (i.e., using fewer equations) often leads to the loss or degradation of essential thermodynamic properties. In contrast, the framework proposed in this work maintains thermodynamic consistency regardless of its level of complexity.
Several open challenges remain for future research. We highlight some key directions. First, the framework offers a foundation for studying spinodal decomposition in compressible mixtures, with potential applications in phase-change problems such as evaporation and condensation. Second, by incorporating free energies with explicit interface width parameters, one can investigate sharp-interface limits, both formally and rigorously. Other important extensions include the incorporation of chemical reactions and non-isothermal effects. Furthermore, an essential avenue for future work, beyond the scope of this paper, is the development of efficient numerical schemes that accurately capture interfacial dynamics while preserving key thermodynamic properties.

\appendix

\section{Identities Korteweg tensors}\label{sec: appendix: ID Korteweg}

\begin{lemma}[Identity Korteweg stresses]\label[lemma]{lem: korteweg id}
We have the following identity for the pressure and Korteweg stresses:
  \begin{align}
      \nabla p + {\rm div} \left( \sum_\mA \nabla \trho_\mA \otimes \dfrac{\partial \hat{\Psi}}{\partial \nabla \trho_\mA} \right) = \sum_\mA \trho_\mA\nabla\hat{\mu}_\mA,
  \end{align}
where we recall $p=\sum_\mA\hat{\mu}_\mA\trho_\mA-\hat{\Psi}$.
\end{lemma}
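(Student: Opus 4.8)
The plan is to expand $\nabla p$ and thereby reduce the claim to an elementary tensor-calculus identity. Writing $p = \sum_\mA \hat{\mu}_\mA \trho_\mA - \hat{\Psi}$ and differentiating gives
\[
\nabla p = \sum_\mA \trho_\mA \nabla \hat{\mu}_\mA + \sum_\mA \hat{\mu}_\mA \nabla \trho_\mA - \nabla \hat{\Psi}.
\]
Since the right-hand side of the asserted identity is exactly $\sum_\mA \trho_\mA \nabla\hat{\mu}_\mA$, the statement is equivalent to
\[
\sum_\mA \hat{\mu}_\mA \nabla \trho_\mA - \nabla \hat{\Psi} + {\rm div}\!\left(\sum_\mA \nabla \trho_\mA \otimes \dfrac{\partial \hat{\Psi}}{\partial \nabla \trho_\mA}\right) = 0 .
\]

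Next I would expand $\nabla \hat{\Psi}$ by the chain rule, using that $\hat{\Psi}$ depends on the families $\{\trho_\mA\}$ and $\{\nabla \trho_\mA\}$ as independent arguments:
\[
\nabla \hat{\Psi} = \sum_\mA \dfrac{\partial \hat{\Psi}}{\partial \trho_\mA}\,\nabla \trho_\mA + \sum_\mA \big(\nabla\nabla \trho_\mA\big)\dfrac{\partial \hat{\Psi}}{\partial \nabla \trho_\mA},
\]
where $\nabla\nabla\trho_\mA$ denotes the (symmetric) Hessian. Substituting the definition $\hat{\mu}_\mA = \partial\hat{\Psi}/\partial\trho_\mA - {\rm div}(\partial\hat{\Psi}/\partial\nabla\trho_\mA)$ into the first sum above and cancelling the common terms $\tfrac{\partial\hat{\Psi}}{\partial\trho_\mA}\nabla\trho_\mA$ leaves
\[
-\sum_\mA \Big({\rm div}\dfrac{\partial \hat{\Psi}}{\partial \nabla \trho_\mA}\Big)\nabla\trho_\mA - \sum_\mA \big(\nabla\nabla\trho_\mA\big)\dfrac{\partial \hat{\Psi}}{\partial \nabla \trho_\mA} + {\rm div}\!\left(\sum_\mA \nabla \trho_\mA \otimes \dfrac{\partial \hat{\Psi}}{\partial \nabla \trho_\mA}\right).
\]

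Finally I would apply the product rule for the divergence of an outer product, ${\rm div}(\mathbf{a}\otimes\mathbf{b}) = (\nabla\mathbf{a})\mathbf{b} + ({\rm div}\,\mathbf{b})\,\mathbf{a}$, with $\mathbf{a} = \nabla\trho_\mA$ (so that $\nabla\mathbf{a}$ is the Hessian of $\trho_\mA$) and $\mathbf{b} = \partial\hat{\Psi}/\partial\nabla\trho_\mA$. Term by term this shows that the divergence term equals the sum of the preceding two terms, so the whole expression vanishes and the identity follows. The only delicate point is bookkeeping: tracking the Hessian contribution and fixing a consistent convention for the divergence of a second-order tensor. Once that convention is fixed the argument is the direct $N$-constituent analogue of the classical single-component Korteweg identity noted in the remark following \cref{lem: identity Korteweg stresses}, and no deeper structure is needed.
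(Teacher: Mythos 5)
Your proposal is correct and follows essentially the same route as the paper's proof: expand $\nabla p$ and $\nabla\hat{\Psi}$ via the product and chain rules, insert the definition of $\hat{\mu}_\mA$, and cancel the Hessian contributions against the ${\rm div}(\nabla\trho_\mA\otimes\partial\hat{\Psi}/\partial\nabla\trho_\mA)$ terms using the outer-product divergence rule. The only cosmetic difference is that you move everything to one side and show it vanishes, whereas the paper writes the same cancellations as a chain of equalities.
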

\begin{proof}
This follows from the sequence of identities:
  \begin{align}
      &\nabla p + {\rm div} \left( \sum_\mA \nabla \trho_\mA \otimes \dfrac{\partial \hat{\Psi}}{\partial \nabla \trho_\mA} \right)=\nn\\
      &\nabla \left(\sum_\mA\hat{\mu}_\mA\trho_\mA-\hat{\Psi}\right) + \sum_\mA \left(\nabla \trho_\mA {\rm div} \left(\dfrac{\partial \hat{\Psi}}{\partial \nabla \trho_\mA}\right) + (\mathbf{H} \trho_\mA)\dfrac{\partial \hat{\Psi}}{\partial \nabla \trho_\mA}\right)=\nn\\
      &\sum_\mA \left(\nabla \left(\hat{\mu}_\mA\trho_\mA \right) - \dfrac{\partial \hat{\Psi}}{\partial  \trho_\mA}\nabla \trho_\mA - (\mathbf{H} \trho_\mA) \dfrac{\partial \hat{\Psi}}{\partial \nabla \trho_\mA}+  \nabla \trho_\mA {\rm div} \left(\dfrac{\partial \hat{\Psi}}{\partial \nabla \trho_\mA}\right) + (\mathbf{H} \trho_\mA)\dfrac{\partial \hat{\Psi}}{\partial \nabla \trho_\mA}\right)=\nn\\
      &\sum_\mA \trho_\mA\nabla\hat{\mu}_\mA,
  \end{align}
where $\mathbf{H}\trho_\mA$ denotes the Hessian of the partial density $\trho_\mA$.
\end{proof}

\begin{lemma}[Identity Korteweg stresses -- Binary fluids -- Volume-measure]\label[lemma]{lem: binary fluids Korteweg identity vol}
The following identity holds for the pressure and Korteweg stresses in binary fluids:
  \begin{align}
      \nabla \check{p} + {\rm div} \left( \nabla Y\otimes \dfrac{\partial \check{ \Psi}}{\partial \nabla Y}  +\nabla \rho\otimes \dfrac{\partial \check{ \Psi}}{\partial \nabla \rho}  \right)= \rho\nabla\check{\mu}_\rho -\check{\mu}_Y\nabla Y,
  \end{align}
where we recall $\check{p} = \check{\mu}_\rho \rho - \check{\Psi}$.
\end{lemma}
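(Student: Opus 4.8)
The plan is to mimic the multi-constituent identity of \Cref{lem: korteweg id}, but now expressed in the binary mixture variables $(\rho, Y)$ rather than $(\trho_1,\trho_2)$. First I would recall that $\check p = \check\mu_\rho\,\rho - \check\Psi$ by definition, so that
\begin{align}
\nabla \check p = \nabla\!\left(\check\mu_\rho\,\rho\right) - \nabla\check\Psi
= \rho\,\nabla\check\mu_\rho + \check\mu_\rho\,\nabla\rho - \nabla\check\Psi .
\end{align}
The term $\nabla\check\Psi$ is then expanded by the chain rule, using that $\check\Psi$ depends on the four independent quantities $\rho$, $\nabla\rho$, $Y$, $\nabla Y$:
\begin{align}
\nabla\check\Psi = \frac{\partial\check\Psi}{\partial\rho}\nabla\rho
+ (\mathbf H\rho)\frac{\partial\check\Psi}{\partial\nabla\rho}
+ \frac{\partial\check\Psi}{\partial Y}\nabla Y
+ (\mathbf H Y)\frac{\partial\check\Psi}{\partial\nabla Y},
\end{align}
where $\mathbf H\rho$ and $\mathbf H Y$ are the Hessians of $\rho$ and $Y$.

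Next I would expand the divergence of the Korteweg-type tensor using the product rule for $\mathrm{div}(\mathbf a\otimes\mathbf b) = (\mathrm{div}\,\mathbf b)\,\mathbf a + (\nabla\mathbf a)\mathbf b$ applied to $\mathbf a = \nabla\rho,\ \mathbf b = \partial\check\Psi/\partial\nabla\rho$ and $\mathbf a = \nabla Y,\ \mathbf b = \partial\check\Psi/\partial\nabla Y$, giving
\begin{align}
{\rm div}\!\left(\nabla\rho\otimes\frac{\partial\check\Psi}{\partial\nabla\rho}\right)
&= \nabla\rho\,{\rm div}\frac{\partial\check\Psi}{\partial\nabla\rho}
+ (\mathbf H\rho)\frac{\partial\check\Psi}{\partial\nabla\rho},\\
{\rm div}\!\left(\nabla Y\otimes\frac{\partial\check\Psi}{\partial\nabla Y}\right)
&= \nabla Y\,{\rm div}\frac{\partial\check\Psi}{\partial\nabla Y}
+ (\mathbf H Y)\frac{\partial\check\Psi}{\partial\nabla Y}.
\end{align}
Adding these to $\nabla\check p$, the two Hessian contributions $(\mathbf H\rho)\,\partial\check\Psi/\partial\nabla\rho$ and $(\mathbf H Y)\,\partial\check\Psi/\partial\nabla Y$ cancel against those coming from $-\nabla\check\Psi$. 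What remains is
\begin{align}
\rho\,\nabla\check\mu_\rho
+ \nabla\rho\left(\check\mu_\rho - \frac{\partial\check\Psi}{\partial\rho} + {\rm div}\frac{\partial\check\Psi}{\partial\nabla\rho}\right)
+ \nabla Y\left(-\frac{\partial\check\Psi}{\partial Y} + {\rm div}\frac{\partial\check\Psi}{\partial\nabla Y}\right).
\end{align}
By the definitions $\check\mu_\rho = \partial\check\Psi/\partial\rho - {\rm div}(\partial\check\Psi/\partial\nabla\rho)$ and $\check\mu_Y = \partial\check\Psi/\partial Y - {\rm div}(\partial\check\Psi/\partial\nabla Y)$ (the analogues of \eqref{eq: chem pot rho c Psi}), the $\nabla\rho$ bracket vanishes identically and the $\nabla Y$ bracket equals $-\check\mu_Y$, yielding $\rho\,\nabla\check\mu_\rho - \check\mu_Y\nabla Y$, which is the claim.

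This is essentially a bookkeeping argument, so there is no genuine obstacle; the only point requiring care is the correct form of the product rule for the divergence of a tensor product $\mathbf a\otimes\mathbf b$ and matching the index conventions so that the Hessian terms cancel exactly — in particular ensuring that the symmetry $(\mathbf H\rho)\,\partial\check\Psi/\partial\nabla\rho$ appears with the same sign on both sides. An alternative, even shorter route would be to invoke \Cref{lem: korteweg id} directly with $N=2$ in the $(\trho_1,\trho_2)$ variables and then transform using the variable change \eqref{eq: var trans} together with \Cref{lem: mix identities}; I would mention this as a remark but carry out the direct computation above since it is self-contained.
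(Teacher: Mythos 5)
Your proposal is correct and follows essentially the same route as the paper's proof: expand $\nabla\check p$ via the definition of $\check p$, apply the chain rule to $\nabla\check\Psi$ and the product rule to the divergence of the tensor products, cancel the Hessian contributions, and conclude from the definitions of $\check\mu_\rho$ and $\check\mu_Y$. The alternative route you mention (transforming \cref{lem: korteweg id} with $N=2$ via the variable change) is consistent with the paper's surrounding machinery but is not the proof the authors give.
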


\begin{proof}
This follows from the sequence of identities:
\begin{align}
  &\nabla \check{p} + {\rm div} \left( \nabla Y\otimes \dfrac{\partial \check{ \Psi}}{\partial \nabla Y}  +\nabla \rho\otimes \dfrac{\partial \check{ \Psi}}{\partial \nabla \rho}  \right) =\nn\\
  &\nabla(\check{\mu}_\rho\rho) - \nabla \check{\Psi} + \nabla Y {\rm div}\dfrac{\partial \check{ \Psi}}{\partial \nabla Y} + (\mathbf{H}Y)\dfrac{\partial \check{ \Psi}}{\partial \nabla Y} + \nabla \rho {\rm div}\dfrac{\partial \check{ \Psi}}{\partial \nabla \rho} + (\mathbf{H}\rho)\dfrac{\partial \check{ \Psi}}{\partial \nabla \rho}=\nn\\
  &\nabla(\check{\mu}_\rho\rho) -\nabla Y\dfrac{\partial \check{ \Psi}}{\partial Y}-\nabla \rho\dfrac{\partial \check{ \Psi}}{\partial \rho} + \nabla Y {\rm div}\dfrac{\partial \check{ \Psi}}{\partial \nabla Y} + \nabla \rho {\rm div}\dfrac{\partial \check{ \Psi}}{\partial \nabla \rho} =\nn\\
  &\rho\nabla\check{\mu}_\rho -\check{\mu}_Y\nabla Y,
\end{align}
with $\mathbf{H}\rho$ and $\mathbf{H}Y$ the Hessians $\rho$ and $Y$.
\end{proof}

\begin{lemma}[Identity Korteweg stresses -- Binary fluids -- Mass measure]\label[lemma]{lem: binary fluids Korteweg identity mass}
The following identity holds for the pressure and Korteweg stresses in binary fluids:
  \begin{align}
      \nabla \doublecheck{p} + {\rm div} \left( \nabla Y\otimes \rho \dfrac{\partial \doublecheck{\psi}}{\partial \nabla Y}  +\nabla \rho\otimes \rho \dfrac{\partial \check{\psi}}{\partial \nabla \rho}  \right)= \rho \nabla (\rho \doublecheck{\mu}_\rho + \doublecheck{\psi})-\rho \doublecheck{\mu}_Y\nabla Y,
  \end{align}
where we recall $\doublecheck{p} = \rho^2 \doublecheck{\mu}_\rho$.
\end{lemma}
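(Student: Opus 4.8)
The plan is to mimic the proofs of \cref{lem: korteweg id} and \cref{lem: binary fluids Korteweg identity vol}: expand the divergence of the Korteweg tensor by the product rule, absorb the second-gradient (Hessian) contributions into $\nabla\doublecheck{\psi}$ via the chain rule, and extract the chemical potentials from their definitions. Concretely, I would first write ${\rm div}\left(\nabla Y\otimes\rho\,\partial\doublecheck{\psi}/\partial\nabla Y\right)=\nabla Y\,{\rm div}\left(\rho\,\partial\doublecheck{\psi}/\partial\nabla Y\right)+(\mathbf{H}Y)\,\rho\,\partial\doublecheck{\psi}/\partial\nabla Y$, and likewise for the $\nabla\rho$-term, where $\mathbf{H}Y$ and $\mathbf{H}\rho$ are the Hessians of $Y$ and $\rho$.

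Next, I would invoke the definitions of the mass-measured chemical potentials $\doublecheck{\mu}_\rho=\partial\doublecheck{\psi}/\partial\rho-\rho^{-1}{\rm div}\left(\rho\,\partial\doublecheck{\psi}/\partial\nabla\rho\right)$ and $\doublecheck{\mu}_Y=\partial\doublecheck{\psi}/\partial Y-\rho^{-1}{\rm div}\left(\rho\,\partial\doublecheck{\psi}/\partial\nabla Y\right)$ (the analogues of $\hat\nu_\rho,\hat\nu_Y$ introduced in \cref{sec: binary mixtures}) to substitute ${\rm div}\left(\rho\,\partial\doublecheck{\psi}/\partial\nabla Y\right)=\rho\left(\partial\doublecheck{\psi}/\partial Y-\doublecheck{\mu}_Y\right)$ and similarly for the $\rho$-component. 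After this substitution the four terms $\nabla Y\,\partial\doublecheck{\psi}/\partial Y$, $\nabla\rho\,\partial\doublecheck{\psi}/\partial\rho$, $(\mathbf{H}Y)\,\partial\doublecheck{\psi}/\partial\nabla Y$, $(\mathbf{H}\rho)\,\partial\doublecheck{\psi}/\partial\nabla\rho$ recombine by the chain rule into $\nabla\doublecheck{\psi}$, leaving $\nabla\doublecheck{p}+\rho\nabla\doublecheck{\psi}-\rho\doublecheck{\mu}_\rho\nabla\rho-\rho\doublecheck{\mu}_Y\nabla Y$. Finally, expanding $\nabla\doublecheck{p}=\nabla(\rho^2\doublecheck{\mu}_\rho)=2\rho\doublecheck{\mu}_\rho\nabla\rho+\rho^2\nabla\doublecheck{\mu}_\rho$ and collecting the $\nabla\rho$-terms yields $\rho\doublecheck{\mu}_\rho\nabla\rho+\rho^2\nabla\doublecheck{\mu}_\rho+\rho\nabla\doublecheck{\psi}-\rho\doublecheck{\mu}_Y\nabla Y=\rho\nabla(\rho\doublecheck{\mu}_\rho+\doublecheck{\psi})-\rho\doublecheck{\mu}_Y\nabla Y$, which is the claimed right-hand side.

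Alternatively, and more economically, I would deduce the identity directly from \cref{lem: binary fluids Korteweg identity vol} by the change of free-energy normalization $\check{\Psi}=\rho\doublecheck{\psi}$. Since $\rho$ is independent of $\nabla\rho$ and $\nabla Y$, one has $\partial\check{\Psi}/\partial\nabla Y=\rho\,\partial\doublecheck{\psi}/\partial\nabla Y$ and $\partial\check{\Psi}/\partial\nabla\rho=\rho\,\partial\doublecheck{\psi}/\partial\nabla\rho$, so the left-hand side here coincides verbatim with that of \cref{lem: binary fluids Korteweg identity vol}; using the relations $\check{\mu}_\rho=\rho\doublecheck{\mu}_\rho+\doublecheck{\psi}$, $\check{\mu}_Y=\rho\doublecheck{\mu}_Y$ (stated in \cref{sec: binary mixtures}) together with $\check{p}=\check{\mu}_\rho\rho-\check{\Psi}=\rho^2\doublecheck{\mu}_\rho=\doublecheck{p}$, the right-hand side $\rho\nabla\check{\mu}_\rho-\check{\mu}_Y\nabla Y$ transforms precisely into $\rho\nabla(\rho\doublecheck{\mu}_\rho+\doublecheck{\psi})-\rho\doublecheck{\mu}_Y\nabla Y$.

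I do not expect a genuine obstacle: the computation is routine product-rule bookkeeping. The only points requiring care are (i) correctly carrying the $\rho$-weight inside the divergence when it meets the $\rho^{-1}$ in the definitions of the mass-measured chemical potentials, in particular the cancellation $2\rho\doublecheck{\mu}_\rho\nabla\rho-\rho\doublecheck{\mu}_\rho\nabla\rho=\rho\doublecheck{\mu}_\rho\nabla\rho$ arising from $\nabla(\rho^2\doublecheck{\mu}_\rho)$, and (ii) if the reduction route is taken, confirming that the chemical-potential transformation rules $\check{\mu}_\rho=\rho\doublecheck{\mu}_\rho+\doublecheck{\psi}$, $\check{\mu}_Y=\rho\doublecheck{\mu}_Y$ are exactly those already established.
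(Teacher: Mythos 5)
Your first route is exactly the paper's own proof: expand $\nabla\doublecheck{p}=\nabla(\rho^2\doublecheck{\mu}_\rho)$, apply the product rule to the divergence of the Korteweg tensor, substitute ${\rm div}\bigl(\rho\,\partial\doublecheck{\psi}/\partial\nabla Y\bigr)=\rho\bigl(\partial\doublecheck{\psi}/\partial Y-\doublecheck{\mu}_Y\bigr)$ (and its $\rho$-analogue), and recombine the remaining terms into $\rho\nabla\doublecheck{\psi}$ via the chain rule; the bookkeeping you describe, including the cancellation $2\rho\doublecheck{\mu}_\rho\nabla\rho-\rho\doublecheck{\mu}_\rho\nabla\rho$, matches the paper's computation step for step. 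Your alternative reduction via \cref{lem: binary fluids Korteweg identity vol} and the identifications $\check{\Psi}=\rho\doublecheck{\psi}$, $\check{\mu}_\rho=\rho\doublecheck{\mu}_\rho+\doublecheck{\psi}$, $\check{\mu}_Y=\rho\doublecheck{\mu}_Y$ is also sound and consistent with \cref{prop: Equivalence binary mixture models}, but it is not the route the paper takes.
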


\begin{proof}
This follows from the sequence of identities:
\begin{align}
  &\nabla \doublecheck{p} + {\rm div} \left( \nabla Y\otimes \rho \dfrac{\partial \doublecheck{\psi}}{\partial \nabla Y}  +\nabla \rho\otimes \rho \dfrac{\partial \check{\psi}}{\partial \nabla \rho}  \right) =\nn\\
  &\rho^2 \nabla \doublecheck{\mu}_\rho + 2\rho  \doublecheck{\mu}_\rho\nabla\rho + \nabla Y {\rm div}\left(\rho\dfrac{\partial \check{ \psi}}{\partial \nabla Y} \right)+ (\mathbf{H}Y)\rho\dfrac{\partial \check{ \psi}}{\partial \nabla Y} + \nabla \rho {\rm div}\left(\rho \dfrac{\partial \doublecheck{ \psi}}{\partial \nabla \rho} \right)+ (\mathbf{H}\rho)\rho\dfrac{\partial \doublecheck{ \psi}}{\partial \nabla \rho}=\nn\\
  &\rho^2 \nabla \doublecheck{\mu}_\rho + 2\rho  \doublecheck{\mu}_\rho\nabla\rho - \rho  \doublecheck{\mu}_\rho\nabla\rho + \rho\dfrac{\partial \doublecheck{ \psi}}{\partial \rho}  + \rho\nabla \doublecheck{\psi} -  \rho\dfrac{\partial \doublecheck{ \psi}}{\partial \rho} \nabla \rho -  \rho\dfrac{\partial \doublecheck{ \psi}}{\partial Y} \nabla Y  - \rho \doublecheck{\mu}_Y \nabla Y + \rho \dfrac{\partial \doublecheck{ \psi}}{\partial Y} \nabla Y  =\nn\\
  &\rho \nabla (\rho \doublecheck{\mu}_\rho + \doublecheck{\psi})-\rho \doublecheck{\mu}_Y\nabla Y.
\end{align}
\end{proof}
\section{Variable transformation - binary mixture}
In this section we provide details on the transformation of constituent quantities to mixture quantities for a binary mixture. The variable transformations are given by:
\begin{align}
    \trho_1 = \rho \frac{1+Y}{2}, \quad \trho_2 =\rho \frac{1-Y}{2}, \quad \rho  = \trho_1+\trho_2, \quad  Y = \dfrac{\trho_1-\trho_2}{\trho_1+\trho_2}.
\end{align}
The derivatives of the scalar quantities are:
\begin{subequations}\label{eq: appendix: var trans scalar}
\begin{align}
    \dfrac{\partial \trho_1}{\partial \rho} =&~ \dfrac{1+Y}{2}, &\quad 
    \dfrac{\partial \trho_2}{\partial \rho} =&~ \dfrac{1-Y}{2}, \\
    \dfrac{\partial \trho_1}{\partial Y} =&~ \dfrac{\rho}{2}, &\quad 
    \dfrac{\partial \trho_2}{\partial Y} =&~-\dfrac{\rho}{2}, \\
    \dfrac{\partial \rho}{\partial \trho_1} =&~ 1, &\quad 
    \dfrac{\partial \rho}{\partial \trho_2} =&~ 1, \\
    \dfrac{\partial Y}{\partial \trho_1} =&~ \dfrac{2\trho_2}{(\trho_1+\trho_2)^2}, &\quad
    \dfrac{\partial Y}{\partial \trho_2} =&~ -\dfrac{2\trho_1}{(\trho_1+\trho_2)^2}.
\end{align}
\end{subequations}
The derivatives of the gradient quantities are:
\begin{subequations}\label{eq: appendix: var trans grad}
  \begin{align}
    \dfrac{\partial \nabla \rho}{\partial \trho_1} =&~ \dfrac{\partial \nabla \rho}{\partial \trho_2} = 0, \\
    \dfrac{\partial \nabla Y}{\partial \trho_1} =& -4 \frac{\trho_2}{(\trho_1+\trho_2)^3}\nabla \trho_1 + 2 \frac{\trho_1-\trho_2}{(\trho_1+\trho_2)^3}\nabla \trho_2, \\
    \dfrac{\partial \nabla Y}{\partial \trho_2} =& ~2 \frac{\trho_1-\trho_2}{(\trho_1+\trho_2)^3}\nabla \trho_1 + 4 \frac{\trho_1}{(\trho_1+\trho_2)^3}\nabla \trho_2.
\end{align}
\end{subequations}
\begin{lemma}[Relation Korteweg stresses]
  The Korteweg stresses are related via:
  \begin{align}
      \nabla \trho_1 \otimes \dfrac{\partial \hat{\Psi}}{\partial \nabla \trho_1}+\nabla \trho_2 \otimes \dfrac{\partial \hat{\Psi}}{\partial \nabla \trho_2} = \nabla \rho \otimes \dfrac{\partial \check{\Psi}}{\partial \nabla \rho} + \nabla Y \otimes \dfrac{\partial \check{\Psi}}{\partial \nabla Y}.
  \end{align}
\end{lemma}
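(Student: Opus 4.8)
The plan is to reduce the identity to a chain-rule computation for the change of variables $(\trho_1,\nabla\trho_1,\trho_2,\nabla\trho_2)\leftrightarrow(\rho,\nabla\rho,Y,\nabla Y)$ determined by \eqref{eq: var trans}, using only the elementary fact that the gradient variables transform linearly.

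First I would record how the gradient variables are related. Since $\rho=\trho_1+\trho_2$ depends on the partial densities alone, $\nabla\rho=\nabla\trho_1+\nabla\trho_2$, whereas differentiating $Y=(\trho_1-\trho_2)/(\trho_1+\trho_2)$ gives $\nabla Y=\sum_\alpha(\partial Y/\partial\trho_\alpha)\,\nabla\trho_\alpha$ with the scalar coefficients listed in \eqref{eq: appendix: var trans scalar}. Both $\nabla\rho$ and $\nabla Y$ are thus linear in $(\nabla\trho_1,\nabla\trho_2)$ with coefficients depending only on $(\trho_1,\trho_2)$, so that the relevant Jacobians are scalar multiples of the identity tensor, namely $\partial(\nabla\rho)/\partial(\nabla\trho_\alpha)=\mathbf{I}$ and $\partial(\nabla Y)/\partial(\nabla\trho_\alpha)=(\partial Y/\partial\trho_\alpha)\,\mathbf{I}$, while the scalars $\rho$ and $Y$ do not depend on $\nabla\trho_\alpha$ at all.

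Next I would apply the chain rule to the identification $\hat{\Psi}(\trho_1,\nabla\trho_1,\trho_2,\nabla\trho_2)\equiv\check{\Psi}(\rho,\nabla\rho,Y,\nabla Y)$, which yields, for $\alpha=1,2$,
\begin{align*}
\frac{\partial\hat{\Psi}}{\partial\nabla\trho_\alpha}=\frac{\partial\check{\Psi}}{\partial\nabla\rho}+\frac{\partial Y}{\partial\trho_\alpha}\,\frac{\partial\check{\Psi}}{\partial\nabla Y}.
\end{align*}
Substituting this into $\sum_\alpha\nabla\trho_\alpha\otimes(\partial\hat{\Psi}/\partial\nabla\trho_\alpha)$, distributing the tensor product over the sum, and then using $\nabla\trho_1+\nabla\trho_2=\nabla\rho$ in the first group of terms together with $\sum_\alpha(\partial Y/\partial\trho_\alpha)\nabla\trho_\alpha=\nabla Y$ in the second, produces exactly $\nabla\rho\otimes(\partial\check{\Psi}/\partial\nabla\rho)+\nabla Y\otimes(\partial\check{\Psi}/\partial\nabla Y)$, which is the claim.

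The only delicate point — and the step I expect to be the main obstacle, although it is bookkeeping rather than genuine difficulty — is keeping the chain rule organized: $\partial\hat{\Psi}/\partial\nabla\trho_\alpha$ is taken holding $(\trho_1,\trho_2)$ and $\nabla\trho_\beta$ ($\beta\neq\alpha$) fixed, and the factor $\partial Y/\partial\trho_\alpha$ appears precisely because the coefficient of $\nabla\trho_\alpha$ inside $\nabla Y$ carries partial-density dependence, whereas that inside $\nabla\rho$ is constant. Nothing beyond the smoothness and invertibility of $(\trho_1,\trho_2)\mapsto(\rho,Y)$ on $\{\trho_1,\trho_2>0\}$, already used implicitly in \eqref{eq: var trans}, is required; one could equally run the computation in the opposite direction, expressing $\partial\check{\Psi}/\partial\nabla\rho$ and $\partial\check{\Psi}/\partial\nabla Y$ via the inverse relations for $\nabla\trho_1,\nabla\trho_2$, and the two routes agree.
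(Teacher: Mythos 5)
Your proposal is correct and follows essentially the same route as the paper: the key step in both is the chain-rule identity $\partial\hat{\Psi}/\partial\nabla\trho_\mA=\partial\check{\Psi}/\partial\nabla\rho+(\partial Y/\partial\trho_\mA)\,\partial\check{\Psi}/\partial\nabla Y$ (the paper writes the coefficients explicitly as $(1-Y)/\rho$ and $-(1+Y)/\rho$), exploiting that $\nabla\rho$ and $\nabla Y$ are linear in the $\nabla\trho_\mA$ with density-dependent scalar coefficients. The only difference is bookkeeping: the paper also expands $\nabla\trho_1,\nabla\trho_2$ in terms of $\nabla\rho,\nabla Y$ and lets the cross terms cancel upon summation, whereas you keep $\nabla\trho_\mA$ intact and resum directly via $\sum_\mA\nabla\trho_\mA=\nabla\rho$ and $\sum_\mA(\partial Y/\partial\trho_\mA)\nabla\trho_\mA=\nabla Y$, which is marginally tidier.
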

\begin{proof}
The chain rule and the variable transformations \eqref{eq: appendix: var trans scalar} provide:
\begin{subequations}
\begin{align}
  \nabla \trho_1 =&~ 
  \frac{1+Y}{2}\nabla \rho + \frac{\rho}{2} \nabla Y,\\
  \nabla \trho_2 =&~ 
  \frac{1-Y}{2}\nabla \rho - \frac{\rho}{2} \nabla Y.
\end{align}
\end{subequations}
Next, the chain rule and the variable transformations \eqref{eq: appendix: var trans grad} provide:
\begin{subequations}
\begin{align}
  \dfrac{\partial \hat{\Psi}}{\partial \nabla \trho_1} 
&=~  \dfrac{\partial \check{\Psi}}{\partial \nabla \rho}  + \dfrac{\partial \check{\Psi}}{\partial \nabla Y} \dfrac{1-Y}{\rho},\\
  \dfrac{\partial \hat{\Psi}}{\partial \nabla \trho_2} 
&=~  \dfrac{\partial \check{\Psi}}{\partial \nabla \rho} - \dfrac{\partial \check{\Psi}}{\partial \nabla Y} \dfrac{1+Y}{\rho}.
\end{align}
\end{subequations}
Combining the above expressions gives:
\begin{subequations}
\begin{align}
\nabla \trho_1 \otimes \dfrac{\partial \hat{\Psi}}{\partial \nabla \trho_1} =&~ \frac{1+Y}{2} \nabla \rho \otimes \dfrac{\partial \check{\Psi}}{\partial \nabla \rho} + \frac{1-Y}{2}\nabla Y \otimes \dfrac{\partial \check{\Psi}}{\partial \nabla Y}\nn\\
&~+ \frac{1-Y^2}{2\rho} \nabla \rho \otimes \dfrac{\partial \check{\Psi}}{\partial \nabla Y} + \frac{\rho}{2} \nabla Y \otimes \dfrac{\partial \check{\Psi}}{\partial \nabla \rho},\\
\nabla \trho_2 \otimes \dfrac{\partial \hat{\Psi}}{\partial \nabla \trho_2} =&~ \frac{1-Y}{2} \nabla \rho \otimes \dfrac{\partial \check{\Psi}}{\partial \nabla \rho} + \frac{1+Y}{2}\nabla Y \otimes \dfrac{\partial \check{\Psi}}{\partial \nabla Y}\nn\\
&~- \frac{1-Y^2}{2\rho} \nabla \rho \otimes \dfrac{\partial \check{\Psi}}{\partial \nabla Y} - \frac{\rho}{2} \nabla Y \otimes \dfrac{\partial \check{\Psi}}{\partial \nabla \rho}.
\end{align}
\end{subequations}
Addition of these two identities provides the result.
\end{proof}

\begin{lemma}[Relations chemical potentials]\label[lemma]{lem: appendix: relation chem}
  The chemical potentials are related via:
  \begin{subequations}
  \begin{align}
      \hat{\mu}_1\trho_1+\hat{\mu}_2\trho_2 =&~ \hat{\mu}_\rho \rho,\\
      \trho_1 \nabla \hat{\mu}_1+\trho_2\nabla\hat{\mu}_2=&~\rho \nabla \hat{\mu}_\rho - \hat{\mu}_Y \nabla Y,\\
      \hat{\mu}_1-\hat{\mu}_2 =&~ 2\frac{\hat{\mu}_Y}{\rho}.
  \end{align}
\end{subequations}
\end{lemma}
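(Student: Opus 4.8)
The plan is to reduce all three identities to a single pointwise transformation law for the constituent chemical potentials,
\[
\hat{\mu}_1 = \hat{\mu}_\rho + \frac{2\trho_2}{\rho^2}\,\hat{\mu}_Y,
\qquad
\hat{\mu}_2 = \hat{\mu}_\rho - \frac{2\trho_1}{\rho^2}\,\hat{\mu}_Y ,
\]
after which everything follows by elementary algebra and one short gradient computation.

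\textbf{Step 1 (transformation law; the crux).} Since the change of variables $(\trho_1,\trho_2)\mapsto(\rho,Y)$ of \eqref{eq: var trans} is algebraic and pointwise — it involves no spatial derivatives — the Euler--Lagrange operator attached to $\hat{\Psi}$ transforms covariantly: an admissible variation $\delta\trho_\mA$ induces $\delta\rho=\sum_\mB(\partial\rho/\partial\trho_\mB)\,\delta\trho_\mB$ and $\delta Y=\sum_\mB(\partial Y/\partial\trho_\mB)\,\delta\trho_\mB$, whence $\hat{\mu}_\mA=(\partial\rho/\partial\trho_\mA)\,\hat{\mu}_\rho+(\partial Y/\partial\trho_\mA)\,\hat{\mu}_Y$; inserting the scalar derivatives from \eqref{eq: appendix: var trans scalar} gives the two displayed formulas. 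I would also record the purely computational route, which the appendix is already geared for: combine the expressions for $\partial\hat{\Psi}/\partial\nabla\trho_\mA$ obtained in the proof of the Relation-of-Korteweg-stresses lemma with the chain rule for $\partial\hat{\Psi}/\partial\trho_\mA$ — here the dependence of $\nabla Y$ on $\trho_1,\trho_2$ recorded in \eqref{eq: appendix: var trans grad} must be carried along — and then apply ${\rm div}$, upon which the contributions reorganize into $\hat{\mu}_\rho$ and $\hat{\mu}_Y$. Establishing this transformation law is the only real obstacle; the remaining steps are bookkeeping.

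\textbf{Step 2 (first and third identities).} Multiplying the formula for $\hat{\mu}_1$ by $\trho_1$ and that for $\hat{\mu}_2$ by $\trho_2$ and adding, the $\hat{\mu}_Y$-terms cancel and $\trho_1+\trho_2=\rho$ yields $\hat{\mu}_1\trho_1+\hat{\mu}_2\trho_2=\hat{\mu}_\rho\,\rho$. Subtracting the two formulas gives $\hat{\mu}_1-\hat{\mu}_2=2(\trho_1+\trho_2)\rho^{-2}\hat{\mu}_Y=2\hat{\mu}_Y/\rho$.

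\textbf{Step 3 (second identity).} Apply $\nabla$ to the transformation law and form $\trho_1\nabla\hat{\mu}_1+\trho_2\nabla\hat{\mu}_2$. The terms $\pm\,2\trho_1\trho_2\rho^{-2}\nabla\hat{\mu}_Y$ cancel, leaving
\[
\trho_1\nabla\hat{\mu}_1+\trho_2\nabla\hat{\mu}_2
=\rho\,\nabla\hat{\mu}_\rho
+2\hat{\mu}_Y\Bigl(\trho_1\nabla\tfrac{\trho_2}{\rho^2}-\trho_2\nabla\tfrac{\trho_1}{\rho^2}\Bigr).
\]
Using $\nabla\rho=\nabla\trho_1+\nabla\trho_2$ the $\nabla\rho$-terms in the bracket cancel, so it equals $(\trho_1\nabla\trho_2-\trho_2\nabla\trho_1)/\rho^2$; differentiating $Y=(\trho_1-\trho_2)/(\trho_1+\trho_2)$ and simplifying gives $\nabla Y=-2(\trho_1\nabla\trho_2-\trho_2\nabla\trho_1)/\rho^2$, hence the bracket is $-\tfrac12\nabla Y$ and the right-hand side becomes $\rho\,\nabla\hat{\mu}_\rho-\hat{\mu}_Y\nabla Y$, as claimed. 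As a consistency check I would note that this identity, together with the first, also drops out of \cref{lem: korteweg id} and \cref{lem: binary fluids Korteweg identity vol}: the two Korteweg-stress divergences coincide by the Relation-of-Korteweg-stresses lemma, the first identity forces $p=\check p$, and equating the right-hand sides of the two Korteweg identities gives $\trho_1\nabla\hat{\mu}_1+\trho_2\nabla\hat{\mu}_2=\rho\nabla\hat{\mu}_\rho-\hat{\mu}_Y\nabla Y$.
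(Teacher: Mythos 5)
Your proof is correct, and its core idea---the chain rule for the chemical potentials under the pointwise change of variables $(\trho_1,\trho_2)\leftrightarrow(\rho,Y)$---is the same one the paper uses, but you run it in the opposite direction and finish differently. The paper writes $\hat{\mu}_\rho=\sum_\mA\hat{\mu}_\mA\,\partial\trho_\mA/\partial\rho$ and $\hat{\mu}_Y=\sum_\mA\hat{\mu}_\mA\,\partial\trho_\mA/\partial Y$, which gives the first and third identities immediately, and then obtains the second by taking the gradient of the first and invoking the Gibbs--Duhem-type relation $\nabla\hat{\Psi}=\hat{\mu}_1\nabla\trho_1+\hat{\mu}_2\nabla\trho_2=\hat{\mu}_\rho\nabla\rho+\hat{\mu}_Y\nabla Y$. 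You instead invert the chain rule into an explicit pointwise law $\hat{\mu}_1=\hat{\mu}_\rho+2\trho_2\rho^{-2}\hat{\mu}_Y$, $\hat{\mu}_2=\hat{\mu}_\rho-2\trho_1\rho^{-2}\hat{\mu}_Y$ and extract all three identities by direct algebra and one gradient computation; I checked the cancellations in Step 3 and they are right. What your route buys is self-containedness for the second identity: the paper's $\nabla\hat{\Psi}$ relation is not a bare chain-rule fact (the naive gradient of $\hat{\Psi}$ produces Hessian terms that only cancel against the divergence parts of the chemical potentials modulo the matching of the Korteweg tensors in the two variable sets), whereas your computation needs nothing beyond the transformation law itself. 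The one place to be slightly more explicit is Step 1: the covariance of the variational derivative under a pointwise change of dependent variables is correct and standard (test against compactly supported variations $\delta\trho_\mA$, so boundary terms are irrelevant), and it does require that $\hat{\mu}_\mA$, $\hat{\mu}_\rho$, $\hat{\mu}_Y$ are genuinely the Euler--Lagrange expressions of the \emph{same} functional under the identification $\hat{\Psi}\equiv\check{\Psi}$---which they are by the paper's definitions---but since you also record the computational fallback via \eqref{eq: appendix: var trans scalar}--\eqref{eq: appendix: var trans grad}, there is no gap.
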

\begin{proof}
The first and third identity follow from the chain rule:
\begin{subequations}
    \begin{align}
        \hat{\mu}_\rho =&~ \hat{\mu}_1 \frac{\partial \trho_1}{\partial \rho}+\hat{\mu}_2 \frac{\partial \trho_2}{\partial \rho} = \frac{1}{\rho}(\hat{\mu}_1 \trho_1+\hat{\mu}_2 \trho_2),\\
        \hat{\mu}_Y =&~ \hat{\mu}_1 \frac{\partial \trho_1}{\partial Y}+\hat{\mu}_2 \frac{\partial \trho_2}{\partial Y} = \frac{\rho}{2} (\hat{\mu}_1 -\hat{\mu}_2).
    \end{align}
\end{subequations}
Taking the gradient of the first identity yields:
  \begin{align}
      \hat{\mu}_1 \nabla \trho_1+\hat{\mu}_2 \nabla \trho_2 +    \trho_1\nabla \hat{\mu}_1 +\trho_2\nabla \hat{\mu}_2=&~ \hat{\mu}_\rho \nabla \rho+\rho \nabla \hat{\mu}_\rho.
  \end{align}
  Noting the identity
  \begin{align}
      \nabla \hat{\Psi} = \hat{\mu}_1 \nabla \trho_1+\hat{\mu}_2 \nabla \trho_2  = \hat{\mu}_\rho \nabla \rho + \hat{\mu}_Y \nabla Y
  \end{align}
  completes the proof.
\end{proof}
\section{Constitutive modeling binary mixture}\label{sec: appendix: alternative constitutive modeling: rho c Psi}
This section presents alternative derivations of the modeling framework for binary fluids. In \cref{subsec: volume measure free energy}, we detail the derivation using a volume-measure-based free energy ($\Psi$), while \cref{subsec: mass measure free energy} covers the derivation based on a mass-measure-based free energy ($\psi$), which are related via $\Psi = \rho \psi$. We utilize the system of balance laws \eqref{eq: BL constitutive} in mixture variables:
\begin{subequations}\label{eq: BL constitutive 2}
  \begin{align}
        \partial_t \rho  + {\rm div}(\rho  \bv)  &=~ 0, \label{eq: appendix mass 1}\\
        \partial_t (\rho Y) + {\rm div}(\rho Y \bv) + {\rm div} \bH &=~ \zeta,  \label{eq: appendix mass 2}\\
        \partial_t (\rho \bv) + {\rm div} \left( \rho\bv\otimes \bv \right) -  {\rm div} \mathbf{T} -  \rho \mathbf{b} &=~0,\\
        \mathbf{T}-\mathbf{T}^T &=~0,
  \end{align}
\end{subequations}
with $Y=Y_1-Y_2$, $\bH = \bH_1-\bH_2$ and $\zeta = \zeta_1-\zeta_2$. In both case we use the energy-dissipative design condition \eqref{eq: energy dissipation} where the energy is given in \eqref{eq: total energy}.

\subsection{Volume-measure free energy}\label{subsec: volume measure free energy}
We postulate the free energy to pertain to the constitutive class:
\begin{align}\label{eq: appendix: class Psi}
  \Psi = \check{\Psi}(\rho, \nabla \rho, Y,\nabla Y),  
\end{align}
and introduce the chemical potentials:
\begin{subequations}
\begin{align}
    \check{\mu}_Y =&~ \dfrac{ \partial \check{\Psi}}{\partial Y} - {\rm div}\dfrac{\partial \check{\Psi}}{\partial\nabla Y},\\
    \check{\mu}_\rho =&~ \dfrac{ \partial \check{\Psi}}{\partial \rho} - {\rm div}\dfrac{\partial \check{\Psi}}{\partial\nabla \rho}.
\end{align}
\end{subequations}
We proceed with the evaluation of the evolution of the energy \eqref{eq: appendix: class Psi}. Following the same steps as in \cref{sec: const mod: subsec: model restr}, we applying Reynolds transport theorem, the divergence theorem, and subsequently expand the derivatives to find:
\begin{align}\label{eq: appendix: Psi derivation 1}
    \dfrac{{\rm d}}{{\rm d}t}\displaystyle\int_{\mathcal{R}(t)} \check{\Psi} ~{\rm d}v  &~= \displaystyle\int_{\mathcal{R}(t)} \dfrac{\partial \check{\Psi}}{\partial \rho} \dot{\rho} + \dfrac{\partial \check{\Psi}}{\partial \nabla \rho}\cdot \dot{\overline{\nabla \rho}} + \dfrac{\partial \check{\Psi}}{\partial Y} \dot{Y} + \dfrac{\partial \check{\Psi}}{\partial \nabla Y}\cdot \dot{\overline{\nabla Y}} + \check{\Psi} ~{\rm div} \bv~{\rm d}v,
\end{align}
where we recall the notation
\begin{align}
    \dot{\overline{\nabla \omega}}:=\partial_t \nabla \omega + \bv\cdot \nabla \left(\nabla \omega \right)
\end{align}
for the material derivative of the quantity $\nabla \omega$.
By substituting the identity \eqref{eq: relation grad phi} for $\omega = Y, \rho$, and subsequently integrating by parts, we arrive at:
\begin{align}
    \dfrac{{\rm d}}{{\rm d}t}\displaystyle\int_{\mathcal{R}(t)} \check{\Psi} ~{\rm d}v  =&~ \displaystyle\int_{\mathcal{R}(t)} \check{\mu}_\rho \dot{\rho} + \check{\mu}_Y \dot{Y} - \left(\nabla Y \otimes \dfrac{\partial \check{\Psi}}{\partial \nabla Y} + \nabla \rho \otimes \dfrac{\partial \check{\Psi}}{\partial \nabla \rho}\right): \nabla \bv  + \check{\Psi}~{\rm div} \bv~{\rm d}v  \nn\\
    &~+ \displaystyle\int_{\partial \mathcal{R}(t)}\left(\dot{Y} \dfrac{\partial \check{\Psi}}{\partial \nabla Y}+\dot{\rho} \dfrac{\partial \check{\Psi}}{\partial \nabla \rho}\right)\cdot \boldsymbol{\nu} ~{\rm d}a.
\end{align}
Substituting the mass balance equations \eqref{eq: appendix mass 1}-\eqref{eq: appendix mass 2} and applying integration by parts leads to:
\begin{align}\label{eq: appendix: Psi 2}
    \dfrac{{\rm d}}{{\rm d}t}\displaystyle\int_{\mathcal{R}(t)} \check{\Psi} ~{\rm d}v = &~ \displaystyle\int_{\mathcal{R}(t)} -\check{\mu}_\rho \rho {\rm div}\bv+\nabla \left(\dfrac{\check{\mu}_Y}{\rho}\right) \cdot \bH  - \left(\nabla \rho \otimes \dfrac{\partial \check{\Psi}}{\partial \nabla \rho}+\nabla Y \otimes \dfrac{\partial \check{\Psi}}{\partial \nabla Y}\right): \nabla \bv   \nn\\
    &~\quad \quad + \check{\Psi}~{\rm div} \bv + \left(\dfrac{\check{\mu}_Y}{\rho}\right)\zeta~{\rm d}v\nn\\
    &~+ \displaystyle\int_{\partial \mathcal{R}(t)}\left(-\rho^{-1}\check{\mu}_Y \bH+\dot{\rho} \dfrac{\partial \check{\Psi}}{\partial \nabla \rho}+\dot{Y} \dfrac{\partial \check{\Psi}}{\partial \nabla Y}\right)\cdot \boldsymbol{\nu} ~{\rm d}a.
\end{align}
Taking the sum of \eqref{eq: appendix: Psi 2} and \eqref{eq: kin grav evo}, we obtain:
\begin{align}\label{eq: appendix: second law subst 1}
    \dfrac{{\rm d}}{{\rm d}t} \mathscr{E} = &~ \displaystyle\int_{\partial \mathcal{R}(t)}\left(\bv^T\mathbf{T}-\rho^{-1}\check{\mu}_Y \bH+\dot{Y} \dfrac{\partial \check{\Psi}}{\partial \nabla Y}+\dot{\rho} \dfrac{\partial \check{\Psi}}{\partial \nabla \rho}\right)\cdot \boldsymbol{\nu} ~{\rm d}a \nn\\
    &~- \displaystyle\int_{\mathcal{R}(t)}  \left( \mathbf{T}  +\check{\mu}_\rho \rho {\rm div}\bv-\nabla \left(\dfrac{\check{\mu}_Y}{\rho}\right)\cdot \bH  + \left(\nabla Y \otimes \dfrac{\partial \check{\Psi}}{\partial \nabla Y}+\nabla \rho \otimes \dfrac{\partial \check{\Psi}}{\partial \nabla \rho}\right): \nabla \bv  \right. \nn\\
    &~\quad \quad \left.- \check{\Psi}~{\rm div} \bv - \left(\dfrac{\check{\mu}_Y}{\rho}\right)\zeta \right)~{\rm d}v,
\end{align}
in which we identify the rate of work and the dissipation as:
\begin{subequations}\label{eq: appendix: W, D}
\begin{align}
    \mathscr{W} =&~   \displaystyle\int_{\partial \mathcal{R}(t)}\left(\bv^T\mathbf{T}-\left(\dfrac{\check{\mu}_Y}{\rho}\right) \bH+\dot{Y} \dfrac{\partial \check{\Psi}}{\partial \nabla Y}+\dot{\rho} \dfrac{\partial \check{\Psi}}{\partial \nabla \rho}\right)\cdot \boldsymbol{\nu} ~{\rm d}a,\\
    \mathscr{D} =&~  \displaystyle\int_{\mathcal{R}(t)}  \left( \mathbf{T}  + \nabla Y \otimes \dfrac{\partial \check{\Psi}}{\partial \nabla Y}+\nabla \rho \otimes \dfrac{\partial \check{\Psi}}{\partial \nabla \rho} + \left(\check{\mu}_\rho \rho-\check{\Psi}\right)\mathbf{I} \right): \nabla \bv  \nn\\
    &\quad \quad \quad -\nabla \left(\dfrac{\check{\mu}_Y}{\rho}\right) \cdot \bH   -  \left(\dfrac{\check{\mu}_Y}{\rho}\right) \zeta ~{\rm d}v.\label{eq: def diffusion 2}
\end{align}
\end{subequations}
Due to the arbitrariness of the control volume, the energy-dissipation law holds when the local inequality is satisfied:\begin{align}\label{eq: appendix: second law 4}
    \left( \mathbf{T}  + \nabla Y \otimes \dfrac{\partial \check{\Psi}}{\partial \nabla Y} + \nabla \rho \otimes \dfrac{\partial \check{\Psi}}{\partial \nabla \rho} + \left(\check{\mu}_\rho \rho-\check{\Psi}\right)\mathbf{I} \right): \nabla \bv \nn\\
    -\nabla \left(\dfrac{\check{\mu}_Y}{\rho}\right) \cdot \bH  -  \left(\dfrac{\check{\mu}_Y}{\rho}\right) \zeta \geq 0.
\end{align}
Based on the inequality \eqref{eq: appendix: second law 4}, we now restrict ourselves to stress tensors $\mathbf{T}$, diffusive fluxes $\bh$ and mass fluxes $\zeta$ belonging to the constitutive classes:
\begin{subequations}\label{eq: appendix: class T h}
\begin{align}
    \mathbf{T} =&~ \check{\mathbf{T}}(\nabla \bv, \rho,\nabla \rho,Y, \nabla Y, \check{\mu}_\rho, \nabla \check{\mu}_\rho, \check{\mu}_Y, \nabla \check{\mu}_Y), \label{eq: class T 2}\\
    \bH =&~ \check{\bH}(\rho,\nabla \rho,Y, \nabla Y, \check{\mu}_\rho, \nabla \check{\mu}_\rho, \check{\mu}_Y, \nabla \check{\mu}_Y).\label{eq: appendix: class h}\\
    \zeta =&~ \check{\zeta}(\rho,Y, \check{\mu}_\rho, \check{\mu}_Y).\label{eq: appendix: class gamma}
\end{align}
\end{subequations}
We select closure models that render each of the terms in \eqref{eq: appendix: second law 4} separately non-negative:
\begin{subequations}\label{eq: appendix: choices}
    \begin{align}
    \check{\mathbf{T}} =&~ - \nabla Y\otimes \dfrac{\partial \check{ \Psi}}{\partial \nabla Y} - \nabla \rho \otimes \dfrac{\partial \check{ \Psi}}{\partial \nabla \rho} - \left(\check{\mu}_\rho \rho-\check{\Psi}\right)\mathbf{I}  + \nu (2\mathbf{D}+\lambda({\rm div}\bv) \mathbf{I}),\label{eq: appendix: stress tensor choice}\\
    \check{\bH} =&~ -\check{\mathbf{M}} \nabla \left(\dfrac{\check{\mu}_Y}{\rho}\right),\\
    \check{\zeta} =&~ -\check{m}\left(\dfrac{\check{\mu}_Y}{\rho}\right),
\end{align}
\end{subequations}
a positive definite mobility tensor $ \check{\mathbf{M}} $ with the same dependencies as \eqref{eq: appendix: class h}, satisfying $ \check{\mathbf{M}}|_{Y=0,1} = 0 $, and a non-negative scalar mobility $ \check{m} $ with the same dependencies as \eqref{eq: appendix: class gamma}, also satisfying $ \check{m}|_{Y=0,1} = 0 $.

This completes the constitutive modeling, and the system is given by:
\begin{subequations}\label{eq: appendix: model v with const mod}
  \begin{align}
   \partial_t (\rho \bv) + {\rm div} \left( \rho \bv\otimes \bv \right) + \nabla \check{p} + {\rm div} \left( \nabla Y\otimes \dfrac{\partial \check{ \Psi}}{\partial \nabla Y}  +\nabla \rho\otimes \dfrac{\partial \check{ \Psi}}{\partial \nabla \rho}  \right) & \nn\\
    - {\rm div} \left(   \nu (2\mathbf{D}+\lambda({\rm div}\bv) \mathbf{I}) \right)-\rho\mathbf{g} &=~ 0, \label{eq: appendix: model v with const mod: mom}\\
 \partial_t \rho + {\rm div}(\rho \bv) &=~ 0, \label{eq: appendix: model v with const mod: cont} \\
  \partial_t (\rho Y) + {\rm div}(\rho Y \bv) - {\rm div} \left(\check{\mathbf{M}}\nabla \left(\dfrac{\check{\mu}_Y}{\rho}\right)\right)  + \check{m} \left(\dfrac{\check{\mu}_Y}{\rho}\right) &=~0,\label{eq: appendix: model v with const mod: PF}\\
  \check{\mu}_Y - \dfrac{\partial \check{\Psi}}{\partial Y}+{\rm div}   \dfrac{\partial \check{\Psi}}{\partial \nabla Y} &=~0,\\
  \check{\mu}_\rho - \dfrac{ \partial \check{\Psi}}{\partial \rho} + {\rm div}\dfrac{\partial \check{\Psi}}{\partial\nabla \rho} &=~0,
  \end{align}
\end{subequations}
with the pressure $\check{p} =\check{\mu}_\rho \rho-\check{\Psi}$. Analogously to \cref{lem: identity Korteweg stresses}, the Korteweg stress and pressure contributions in the momentum equation may be expressed using the chemical potential quantities.
\begin{lemma}[Identity Korteweg stresses - Binary fluids -- Volume measure]
The following identity holds for the pressure and Korteweg stresses in binary fluids:
  \begin{align}
      \nabla \check{p} + {\rm div} \left( \nabla Y\otimes \dfrac{\partial \check{ \Psi}}{\partial \nabla Y}  +\nabla \rho\otimes \dfrac{\partial \check{ \Psi}}{\partial \nabla \rho}  \right)= \rho\nabla\check{\mu}_\rho -\check{\mu}_Y\nabla Y.
  \end{align}
\end{lemma}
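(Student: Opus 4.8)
The plan is to note first that this statement is literally \cref{lem: binary fluids Korteweg identity vol}, so one may simply invoke that result; for completeness I sketch the self-contained computation, which mirrors the proof of \cref{lem: korteweg id}. I would begin by expanding the pressure gradient through $\check{p}=\check{\mu}_\rho\rho-\check{\Psi}$, giving $\nabla\check{p}=\nabla(\check{\mu}_\rho\rho)-\nabla\check{\Psi}$. Next I would apply the product rule for the divergence of each outer product, ${\rm div}(\nabla\omega\otimes\mathbf{a})=(\mathbf{H}\omega)\,\mathbf{a}+(\nabla\omega)\,{\rm div}\,\mathbf{a}$, with $\omega\in\{Y,\rho\}$ and $\mathbf{a}=\partial\check{\Psi}/\partial\nabla\omega$, where $\mathbf{H}\omega$ denotes the Hessian of $\omega$.

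Then I would expand $\nabla\check{\Psi}$ by the chain rule, recalling the constitutive class $\check{\Psi}=\check{\Psi}(\rho,\nabla\rho,Y,\nabla Y)$, so that $\nabla\check{\Psi}=\tfrac{\partial\check{\Psi}}{\partial\rho}\nabla\rho+\tfrac{\partial\check{\Psi}}{\partial Y}\nabla Y+(\mathbf{H}\rho)\tfrac{\partial\check{\Psi}}{\partial\nabla\rho}+(\mathbf{H}Y)\tfrac{\partial\check{\Psi}}{\partial\nabla Y}$. Substituting this into $-\nabla\check{\Psi}$ shows that the two Hessian terms arising from the divergence of the outer products cancel exactly against the two Hessian terms in $-\nabla\check{\Psi}$. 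This cancellation is the only point where any care is needed: one must be consistent about which slot of the outer product the divergence contracts, and exploit the symmetry of the Hessian, exactly as in \cref{lem: korteweg id}.

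After the cancellation, the surviving terms collect as $\nabla(\check{\mu}_\rho\rho)-\bigl(\tfrac{\partial\check{\Psi}}{\partial\rho}-{\rm div}\tfrac{\partial\check{\Psi}}{\partial\nabla\rho}\bigr)\nabla\rho-\bigl(\tfrac{\partial\check{\Psi}}{\partial Y}-{\rm div}\tfrac{\partial\check{\Psi}}{\partial\nabla Y}\bigr)\nabla Y$, and by the definitions of $\check{\mu}_\rho$ and $\check{\mu}_Y$ this equals $\nabla(\check{\mu}_\rho\rho)-\check{\mu}_\rho\nabla\rho-\check{\mu}_Y\nabla Y=\rho\nabla\check{\mu}_\rho-\check{\mu}_Y\nabla Y$, which is the claimed identity. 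Every step is an elementary identity of vector calculus, so there is no genuine obstacle; the only thing to watch is the tensor-index bookkeeping in the Hessian cancellation, just as in \cref{lem: korteweg id} and \cref{lem: binary fluids Korteweg identity vol}.
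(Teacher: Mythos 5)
Your proposal is correct and follows exactly the paper's route: the paper's proof of this statement is just a citation of \cref{lem: binary fluids Korteweg identity vol}, whose displayed computation is precisely your sketch (expand $\nabla\check{p}$ via $\check{p}=\check{\mu}_\rho\rho-\check{\Psi}$, apply the product rule to the divergence of the outer products, cancel the Hessian terms against those in $\nabla\check{\Psi}$, and collect using the definitions of $\check{\mu}_\rho$ and $\check{\mu}_Y$). No gaps; the final collection step $\nabla(\check{\mu}_\rho\rho)-\check{\mu}_\rho\nabla\rho-\check{\mu}_Y\nabla Y=\rho\nabla\check{\mu}_\rho-\check{\mu}_Y\nabla Y$ is exactly as in the paper.
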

\begin{proof}
See \cref{lem: binary fluids Korteweg identity vol}.
\end{proof}
Applying the lemma converts the model into the compact form:
\begin{subequations}\label{eq: appendix: model v with const mod 2}
  \begin{align}
   \partial_t (\rho \bv) + {\rm div} \left( \rho \bv\otimes \bv \right) + \rho\nabla\check{\mu}_\rho -\check{\mu}_Y\nabla Y 
    - {\rm div} \left(   \nu (2\mathbf{D}+\lambda({\rm div}\bv) \mathbf{I}) \right)-\rho\mathbf{g} &=~ 0, \label{eq: appendix: model v with const mod: mom 2}\\
 \partial_t \rho + {\rm div}(\rho \bv) &=~ 0, \label{eq: appendix: model v with const mod: cont 2} \\
  \partial_t (\rho Y) + {\rm div}(\rho Y \bv) - {\rm div} \left(\check{\mathbf{M}}\nabla \left(\dfrac{\check{\mu}_Y}{\rho}\right)\right)  + \check{m} \left(\dfrac{\check{\mu}_Y}{\rho}\right) &=~0,\label{eq: appendix: model v with const mod: PF 2}\\
  \check{\mu}_Y - \dfrac{\partial \check{\Psi}}{\partial Y}+{\rm div}   \dfrac{\partial \check{\Psi}}{\partial \nabla Y} &=~0,\\
  \check{\mu}_\rho - \dfrac{ \partial \check{\Psi}}{\partial \rho} + {\rm div}\dfrac{\partial \check{\Psi}}{\partial\nabla \rho} &=~0.
  \end{align}
\end{subequations}

\subsection{Mass measure free energy}\label{subsec: mass measure free energy}
The mass measure free energy is postulated to belong to the class:
\begin{align}\label{eq: appendix: class c psi}
  \psi = \doublecheck{\psi}(\rho,\nabla \rho, Y,\nabla Y),     
\end{align}
and the chemical potential quantities defined as:
\begin{subequations}
\begin{align}
    \doublecheck{\mu}_Y =&~ \dfrac{ \partial \doublecheck{\psi}}{\partial Y} - \frac{1}{\rho}{\rm div}\left(\rho \dfrac{\partial \doublecheck{\psi}}{\partial\nabla Y}\right),\\
    \doublecheck{\mu}_\rho =&~ \dfrac{ \partial \doublecheck{\psi}}{\partial \rho} - \frac{1}{\rho}{\rm div}\left(\rho \dfrac{\partial \doublecheck{\psi}}{\partial\nabla \rho}\right).
\end{align}
\end{subequations}
We proceed with the evaluation of the evolution of the energy \eqref{eq: appendix: class c psi} and find via Reynolds transport theorem and the mass balance equation \eqref{eq: appendix mass 1}:
\begin{align}\label{eq: appendix: psi derivation 1: mass-measure c}
    \dfrac{{\rm d}}{{\rm d}t}\displaystyle\int_{\mathcal{R}(t)} \rho\doublecheck{\psi} ~{\rm d}v  =&~ \displaystyle\int_{\mathcal{R}(t)} \rho \dot{\doublecheck{\psi}} ~{\rm d}v\nn\\
    =&~ \displaystyle\int_{\mathcal{R}(t)} \rho \left(\dfrac{\partial \doublecheck{\psi}}{\partial \rho} \dot{\rho} + \dfrac{\partial \doublecheck{\psi}}{\partial \nabla \rho}\cdot \dot{\overline{\nabla \rho}} +\dfrac{\partial \doublecheck{\psi}}{\partial Y} \dot{Y} + \dfrac{\partial \doublecheck{\psi}}{\partial \nabla Y}\cdot \dot{\overline{\nabla Y}}\right) ~{\rm d}v.
\end{align}
On the account of the relation
\begin{align}\label{eq: appendix: relation grad phi}
    \dot{\overline{\nabla \omega}} = \nabla (\dot{\omega}) - (\nabla \omega)^T\nabla \bv,
\end{align}
for $\omega = \rho, Y$ we arrive at:
\begin{align}
    \dfrac{{\rm d}}{{\rm d}t}\displaystyle\int_{\mathcal{R}(t)} \rho\doublecheck{\psi} ~{\rm d}v  =&~ \displaystyle\int_{\mathcal{R}(t)} \rho \doublecheck{\mu}_Y \dot{Y} +\rho\doublecheck{\mu}_\rho \dot{\rho} - \left(\nabla \rho \otimes \dfrac{\partial \rho\doublecheck{\psi}}{\partial \nabla \rho}+\nabla Y \otimes \dfrac{\partial \rho\doublecheck{\psi}}{\partial \nabla Y}\right): \nabla \bv ~{\rm d}v  \nn\\
    &~+ \displaystyle\int_{\partial \mathcal{R}(t)} \left(\dot{\rho} \rho\dfrac{\partial \doublecheck{\psi}}{\partial \nabla \rho}+\dot{Y} \rho\dfrac{\partial \doublecheck{\psi}}{\partial \nabla Y}\right)\cdot \boldsymbol{\nu} ~{\rm d}a.
\end{align}
By substituting the mass balance laws \eqref{eq: appendix mass 1}-\eqref{eq: appendix mass 2} we find:
\begin{align}\label{eq: appendix: Psi 2: mass-measure}
    \dfrac{{\rm d}}{{\rm d}t}\displaystyle\int_{\mathcal{R}(t)} \rho\doublecheck{\psi} ~{\rm d}v =&~ \displaystyle\int_{\mathcal{R}(t)} \nabla \doublecheck{\mu}_Y \cdot \bH  -\rho^2 \doublecheck{\mu}_\rho {\rm div}\bv   + \doublecheck{\mu}_Y\zeta \nn\\
    &~~~~~~~~- \left(\nabla \rho \otimes \dfrac{\partial \rho\doublecheck{\psi}}{\partial \nabla \rho}+\nabla Y \otimes \dfrac{\partial \rho\doublecheck{\psi}}{\partial \nabla Y}\right): \nabla \bv~{\rm d}v \nn\\
    &~+ \displaystyle\int_{\partial \mathcal{R}(t)}\left(-\doublecheck{\mu}_Y \bH+\dot{\rho} \rho\dfrac{\partial \doublecheck{\psi}}{\partial \nabla \rho}+\dot{ c}\rho \dfrac{\partial \doublecheck{\psi}}{\partial \nabla  c}\right)\cdot \boldsymbol{\nu} ~{\rm d}a.
\end{align}
Following the same procedure as in \cref{sec: 2nd law} we find:
\begin{align}\label{eq: appendix: second law subst 1 c-psi based}
    \dfrac{{\rm d}}{{\rm d}t} \mathscr{E} = &~ \displaystyle\int_{\partial \mathcal{R}(t)}\left(\bv^T\doublecheck{\mathbf{T}}-\doublecheck{\mu}_Y \bH+\dot{\rho} \rho\dfrac{\partial \doublecheck{\psi}}{\partial \nabla \rho}+\dot{Y} \rho\dfrac{\partial \doublecheck{\psi}}{\partial \nabla Y}\right)\cdot \boldsymbol{\nu} ~{\rm d}a \nn\\
    &~- \displaystyle\int_{\mathcal{R}(t)}  \left( \doublecheck{\mathbf{T}} + \rho^2 \doublecheck{\mu}_\rho\mathbf{I}  + \rho \nabla \rho \otimes \dfrac{\partial \doublecheck{\psi}}{\partial \nabla \rho} + \rho \nabla Y \otimes \dfrac{\partial \doublecheck{\psi}}{\partial \nabla Y} \right): \nabla \bv\nn\\
    &~~~~~~~~~~-\nabla \doublecheck{\mu}_Y \cdot \bH -\doublecheck{\mu}_Y\zeta ~{\rm d}v,
\end{align}
where we identify the rate of work and dissipation term:
\begin{subequations}\label{eq: appendix: W, D; c based appendix 2}
\begin{align}
    \mathscr{W} =&~   \displaystyle\int_{\partial \mathcal{R}(t)}\left(\bv^T\doublecheck{\mathbf{T}}-\doublecheck{\mu}_Y\bH+\dot{\rho} \rho\dfrac{\partial \doublecheck{\psi}}{\partial \nabla \rho}+\dot{Y} \rho \dfrac{\partial  \doublecheck{\psi}}{\partial \nabla Y}\right)\cdot \boldsymbol{\nu} ~{\rm d}a,\\
    \mathscr{D} =&~  \displaystyle\int_{\mathcal{R}(t)}  \left( \doublecheck{\mathbf{T}} +\rho^2 \doublecheck{\mu}_\rho\mathbf{I}+ \rho \nabla \rho \otimes \dfrac{\partial \doublecheck{\psi}}{\partial \nabla \rho}+ \rho\nabla Y \otimes  \dfrac{\partial \doublecheck{\psi}}{\partial \nabla Y}  \right): \nabla \bv \nn\\
    &\quad\quad\quad\quad -\nabla \doublecheck{\mu}_Y \cdot \bH  -\doublecheck{\mu}_Y\zeta ~{\rm d}v.
\end{align}
\end{subequations}
From the design condition $\mathscr{D}\geq 0$ we obtain the local modeling restriction:
\begin{align}\label{eq: appendix: second law c 2}
    \left( \doublecheck{\mathbf{T}} +\rho^2 \doublecheck{\mu}_\rho\mathbf{I} + \rho \nabla \rho \otimes \dfrac{\partial \doublecheck{\psi}}{\partial \nabla \rho}+ \nabla Y \otimes \rho\dfrac{\partial \doublecheck{\psi}}{\partial \nabla Y} \right): \nabla \bv -\nabla \doublecheck{\mu}_Y \cdot \bH -\doublecheck{\mu}_Y\zeta\geq 0.
\end{align}
Based on the modeling restriction \eqref{eq: appendix: second law c 2} we introduce with the constitutive classes:
\begin{subequations}\label{eq: appendix class T h: c Psi}
\begin{align}
    \mathbf{T} =&~ \doublecheck{\mathbf{T}}(\nabla \bv, \rho,\nabla \rho,Y, \nabla Y,\doublecheck{\mu}_\rho, \nabla \doublecheck{\mu}_\rho, \doublecheck{\mu}_Y, \nabla \doublecheck{\mu}_Y), \label{eq: appendix: class T appendix2}\\
    \bH =&~ \doublecheck{\bH}(\rho,\nabla \rho,Y, \nabla Y,\doublecheck{\mu}_\rho, \nabla \doublecheck{\mu}_\rho, \doublecheck{\mu}_Y, \nabla \doublecheck{\mu}_Y),\label{eq: appendix: class h appendix2}\\
    \zeta =&~ \doublecheck{\zeta}(\rho, Y,\doublecheck{\mu}_\rho, \doublecheck{\mu}_Y).\label{eq: appendix: class gamma appendix2}
\end{align}
\end{subequations}
We choose closure models that ensure each term in \eqref{eq: appendix: second law 4} remains separately non-negative:
\begin{subequations}\label{eq: appendix: choices 2}
    \begin{align}
    \check{\mathbf{T}} =&~ - \rho \nabla Y \otimes \dfrac{\partial \doublecheck{\psi}}{\partial \nabla Y} - \rho \nabla \rho \otimes \dfrac{\partial \doublecheck{\psi}}{\partial \nabla \rho}  -\rho^2 \doublecheck{\mu}_\rho\mathbf{I}  + \nu (2\mathbf{D}+\lambda({\rm div}\bv) \mathbf{I}),\label{eq: appendix: stress tensor choice 2}\\
    \doublecheck{\bH} =&~ -\doublecheck{\mathbf{M}} \nabla \doublecheck{\mu}_Y,\\
    \doublecheck{\zeta} =&~ -\doublecheck{m} \doublecheck{\mu}_Y,
\end{align}
\end{subequations}
where the mobility tensor $\doublecheck{\mathbf{M}}$ is positive definite, depends on the same variables as \eqref{eq: appendix: class h appendix2}, and satisfies $\doublecheck{\mathbf{M}}|_{Y=0,1}=0$. The scalar mobility $\doublecheck{m} \geq 0$ shares the same dependencies as \eqref{eq: appendix: class gamma appendix2} and satisfies $\doublecheck{m}|_{Y=0,1}=0$.

This concludes the constitutive modeling, with the system given by:
\begin{subequations}\label{eq: appendix: model rho c appendix end}
  \begin{align}
   \partial_t (\rho \bv) + {\rm div} \left( \rho \bv\otimes \bv \right) + \nabla \doublecheck{p} + {\rm div} \left( \nabla \rho \otimes \rho\dfrac{\partial \doublecheck{\psi}}{\partial \nabla \rho} +\nabla Y\otimes \rho\dfrac{\partial \doublecheck{\psi}}{\partial \nabla Y}  \right) & \nn\\
    - {\rm div} \left(   \nu (2\mathbf{D}+\lambda({\rm div}\bv) \mathbf{I}) \right)-\rho\mathbf{g} &=~ 0, \label{eq: appendix: model v with const mod: mom 3}\\
 \partial_t \rho + {\rm div}(\rho \bv) &=~ 0, \label{eq: appendix: model v with const mod: cont 3} \\
  \partial_t (\rho Y) + {\rm div}(\rho Y \bv) - {\rm div} \left(\doublecheck{\mathbf{M}}\nabla \doublecheck{\mu}_Y\right)  + \doublecheck{m} \doublecheck{\mu}_Y &=~0,\label{eq: appendix: model v with const mod: PF 3}\\
  \doublecheck{\mu}_Y -\dfrac{ \partial \doublecheck{\psi}}{\partial Y} + \frac{1}{\rho}{\rm div}\left(\rho \dfrac{\partial \doublecheck{\psi}}{\partial\nabla Y}\right)&=~0,\\
  \doublecheck{\mu}_\rho - \dfrac{ \partial \doublecheck{\psi}}{\partial \rho} + \frac{1}{\rho}{\rm div}\left(\rho \dfrac{\partial \doublecheck{\psi}}{\partial\nabla \rho}\right) &=~0,
  \end{align}
\end{subequations}
with the pressure $\doublecheck{p} = \rho^2 \doublecheck{\mu}_\rho$.

\begin{proposition}[Equivalence binary mixture models]\label[proposition]{prop: Equivalence binary mixture models}
    The mixture models  \eqref{eq: appendix: model v with const mod 2} and \eqref{eq: appendix: model rho c appendix end} are equivalent.
\end{proposition}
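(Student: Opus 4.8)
The plan is to build an explicit dictionary between the two formulations and then check, equation by equation, that \eqref{eq: appendix: model v with const mod 2} is carried onto \eqref{eq: appendix: model rho c appendix end}. The dictionary is dictated by the identification $\check{\Psi}(\rho,\nabla\rho,Y,\nabla Y)\equiv\rho\,\doublecheck{\psi}(\rho,\nabla\rho,Y,\nabla Y)$, together with $\check{\mathbf{M}}=\doublecheck{\mathbf{M}}$ and $\check{m}=\doublecheck{m}$. The continuity equations \eqref{eq: appendix: model v with const mod: cont 2} and \eqref{eq: appendix: model v with const mod: cont 3} are literally identical, and the viscous and gravitational terms in the momentum balances coincide as well, so only the chemical-potential/pressure block of the momentum equation and the order-parameter equation require genuine work.

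The first step is to relate the two sets of chemical potentials. From $\check{\Psi}=\rho\,\doublecheck{\psi}$ and the product rule one has $\partial\check{\Psi}/\partial\rho=\doublecheck{\psi}+\rho\,\partial\doublecheck{\psi}/\partial\rho$ and $\partial\check{\Psi}/\partial\nabla\rho=\rho\,\partial\doublecheck{\psi}/\partial\nabla\rho$, whence
\[
\check{\mu}_\rho=\dfrac{\partial\check{\Psi}}{\partial\rho}-{\rm div}\dfrac{\partial\check{\Psi}}{\partial\nabla\rho}
=\doublecheck{\psi}+\rho\dfrac{\partial\doublecheck{\psi}}{\partial\rho}-{\rm div}\!\left(\rho\dfrac{\partial\doublecheck{\psi}}{\partial\nabla\rho}\right)
=\doublecheck{\psi}+\rho\,\doublecheck{\mu}_\rho,
\]
where the last equality is exactly $\rho$ times the definition of $\doublecheck{\mu}_\rho$. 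Since the $Y$-slot of $\check{\Psi}$ carries no explicit $\rho$-prefactor, the same computation gives $\check{\mu}_Y=\rho\,\doublecheck{\mu}_Y$. These are precisely the relations $\hat{\mu}_\rho=\rho\hat{\nu}_\rho+\hat{\psi}$ and $\hat{\mu}_Y=\rho\hat{\nu}_Y$ announced in \cref{sec: binary mixtures}.

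With this dictionary the verification is mechanical. For the pressure, $\check{p}=\check{\mu}_\rho\rho-\check{\Psi}=(\doublecheck{\psi}+\rho\doublecheck{\mu}_\rho)\rho-\rho\doublecheck{\psi}=\rho^2\doublecheck{\mu}_\rho=\doublecheck{p}$. For the momentum equation, substituting $\check{\mu}_\rho=\doublecheck{\psi}+\rho\doublecheck{\mu}_\rho$ and $\check{\mu}_Y=\rho\doublecheck{\mu}_Y$ into the capillary/pressure term of \eqref{eq: appendix: model v with const mod: mom 2} gives
\[
\rho\nabla\check{\mu}_\rho-\check{\mu}_Y\nabla Y=\rho\nabla(\rho\doublecheck{\mu}_\rho+\doublecheck{\psi})-\rho\doublecheck{\mu}_Y\nabla Y,
\]
which, by \cref{lem: binary fluids Korteweg identity mass}, equals $\nabla\doublecheck{p}+{\rm div}\big(\nabla\rho\otimes\rho\,\partial\doublecheck{\psi}/\partial\nabla\rho+\nabla Y\otimes\rho\,\partial\doublecheck{\psi}/\partial\nabla Y\big)$, i.e.\ the corresponding block of \eqref{eq: appendix: model v with const mod: mom 3}; hence the momentum equations coincide. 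Finally $\check{\mu}_Y/\rho=\doublecheck{\mu}_Y$, together with $\check{\mathbf{M}}=\doublecheck{\mathbf{M}}$ and $\check{m}=\doublecheck{m}$, turns \eqref{eq: appendix: model v with const mod: PF 2} into \eqref{eq: appendix: model v with const mod: PF 3} verbatim, and the mobility degeneracy conditions $\check{\mathbf{M}}|_{Y=0,1}=0$, $\check{m}|_{Y=0,1}=0$ are the same as those imposed on $\doublecheck{\mathbf{M}},\doublecheck{m}$. Reading the chain of implications backwards shows the equivalence is two-sided.

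The only genuinely delicate point is the chemical-potential computation above: one must track that $\check{\Psi}=\rho\,\doublecheck{\psi}$ depends on $\rho$ both through the explicit prefactor and through $\doublecheck{\psi}$, and that the divergence operator in $\check{\mu}_\rho$ acts on $\rho\,\partial\doublecheck{\psi}/\partial\nabla\rho$ rather than on $\partial\doublecheck{\psi}/\partial\nabla\rho$ alone. This is exactly why the mass-measured chemical potentials carry the weighted divergence $\tfrac{1}{\rho}\,{\rm div}(\rho\,\cdot)$ in their definition, and it is the structural reason the two definitions are compatible; everything else is bookkeeping.
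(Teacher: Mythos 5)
Your proposal is correct and follows essentially the same route as the paper: identify $\check{\Psi}=\rho\doublecheck{\psi}$, $\check{\mathbf{M}}=\doublecheck{\mathbf{M}}$, $\check{m}=\doublecheck{m}$, derive $\check{\mu}_\rho=\rho\doublecheck{\mu}_\rho+\doublecheck{\psi}$ and $\check{\mu}_Y=\rho\doublecheck{\mu}_Y$, and match the pressures via $\check{p}=\rho^2\doublecheck{\mu}_\rho=\doublecheck{p}$. You merely spell out the momentum-block verification through \cref{lem: binary fluids Korteweg identity mass}, which the paper leaves implicit (invoking that lemma immediately after the proposition), so your write-up is a more detailed version of the same argument.
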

\begin{proof}
This follows from identifying $ \check{\Psi} = \rho \doublecheck{\psi} $, $ \check{\mathbf{M}} = \doublecheck{\mathbf{M}} $, and $ \check{m} = \doublecheck{m} $. In particular, the first identification provides $ \check{\mu}_Y = \rho \doublecheck{\mu}_Y $ and the identification of the pressure follows from:
\begin{align}
    \check{p} = \rho \check{\mu}_\rho - \check{\Psi} = \rho (\rho \doublecheck{\mu}_\rho +\doublecheck{\psi}) - \rho \doublecheck{\psi} = \rho^2 \doublecheck{\mu}_\rho= \doublecheck{p},
\end{align}
with $\check{\mu}_\rho=\rho \doublecheck{\mu}_\rho +\doublecheck{\psi}$.
\end{proof}
Finally, utilizing \cref{lem: binary fluids Korteweg identity mass} the model may be written in the compact form:
\begin{subequations}
  \begin{align}
  \partial_t (\rho \bv) + {\rm div} \left( \rho \bv\otimes \bv \right) + \rho \nabla (\rho \doublecheck{\mu}_\rho + \doublecheck{\psi}) - \rho \doublecheck{\mu}_Y \nabla Y&\nn\\
    - {\rm div} \left(   \nu (2\mathbf{D}+\lambda({\rm div}\bv) \mathbf{I}) \right)-\rho\mathbf{g} &=~ 0, \\
  \partial_t \rho + {\rm div}(\rho \bv) &=~ 0,\\
 \partial_t (\rho Y) + {\rm div}(\rho Y \bv) - {\rm div} \left(\doublecheck{M}\nabla\doublecheck{\mu}_Y\right)  + \doublecheck{m} \doublecheck{\mu}_Y &=~0.
 \end{align}
\end{subequations} 

\section{Hyperbolicity in $d$ dimensions}\label{app: hyperbolic}

We write the system in the matrix-vector form:
\begin{align}
  \partial_t \mathbf{W} + \mathbf{A}_{x_1} \partial_{x_1} \mathbf{W} + \mathbf{A}_{x_2} \partial_{x_2} \mathbf{W}  + \mathbf{A}_{x_3} \partial_{x_3} \mathbf{W}   = 0,
\end{align}
where
\begin{align}
  \mathbf{W} = \begin{bmatrix}
      p_1 \\[6pt]
      \vdots \\[6pt]
      p_N\\[6pt]
      v_{x_1} \\[6pt]
      v_{x_2} \\[6pt]
      v_{x_3}
  \end{bmatrix}, \quad &\mathbf{A}_{x_1} =\begin{bmatrix}
    v_{x_1}           & 0           & \cdots  &        & 0           & \sum_\mB \trho_\mB a_{1\mB}^2 & 0 & 0\\[6pt]
    0           & v_{x_1}           & 0      & \cdots  & 0           & \sum_\mB \trho_\mB a_{2\mB}^2& 0 & 0\\[6pt]
    \vdots      &  \ddots           & \ddots & \ddots       &  \vdots           & \vdots& 0 & 0\\[6pt]
    0           &   \cdots    &  0      & v_{x_1}      &      0      & \sum_\mB \trho_\mB a_{(N-1)\mB}^2& 0 & 0\\[6pt]
    0           &   \cdots    &        & 0      &      v_{x_1}      & \sum_\mB \trho_\mB a_{N\mB}^2& 0 & 0\\[6pt]
    \rho^{-1} &      & \cdots  &        & \rho^{-1}   & v_{x_1} & 0 & 0 \\[6pt]
    0 & & & &\cdots & 0& v_{x_1} & 0 \\[6pt]
    0 & & & &\cdots & & 0& v_{x_1} 
  \end{bmatrix},\nn\\[6pt]
  &\mathbf{A}_{x_2} =\begin{bmatrix}
    v_{x_2}           & 0           & \cdots  &        & 0           & 0 & \sum_\mB \trho_\mB a_{1\mB}^2 & 0\\[6pt]
    0           & v_{x_2}           & 0      & \cdots  & 0           & 0 & \sum_\mB \trho_\mB a_{2\mB}^2 & 0\\[6pt]
    \vdots      &  \ddots           & \ddots & \ddots       &  \vdots         & 0  & \vdots & 0\\[6pt]
    0           &   \cdots    &  0      & v_{x_2}      &      0      & 0 & \sum_\mB \trho_\mB a_{(N-1)\mB}^2 & 0 \\[6pt]
    0           &   \cdots    &        & 0      &      v_{x_2}      & 0 & \sum_\mB \trho_\mB a_{N\mB}^2 & 0\\[6pt]
    0 & & \cdots& & 0 &v_{x_2} &0 & 0 \\[6pt]
    \rho^{-1} &      & \cdots  &        & \rho^{-1}  & 0 & v_{x_2} & 0   \\[6pt]
    0 & & & &\cdots & & 0& v_{x_2}
  \end{bmatrix},\nonumber \\[6pt] &\mathbf{A}_{x_3} =\begin{bmatrix}
    v_{x_3}           & 0           & \cdots  &        & 0           & 0 & 0 & \sum_\mB \trho_\mB a_{1\mB}^2\\[6pt]
    0           & v_{x_3}           & 0      & \cdots  & 0           & 0 & 0 & \sum_\mB \trho_\mB a_{2\mB}^2\\[6pt]
    \vdots      &  \ddots           & \ddots & \ddots       &  \vdots         & 0 & 0   & \vdots \\[6pt]
    0           &   \cdots    &  0      & v_{x_3}      &      0      & 0 & 0 & \sum_\mB \trho_\mB a_{(N-1)\mB}^2\\[6pt]
    0           &   \cdots    &        & 0      &      v_{x_3}      & 0 & 0 & \sum_\mB \trho_\mB a_{N\mB}^2\\[6pt]
    0 & & \cdots& &0 &v_{x_3} &0 & 0 \\[6pt]
    0 & & &\cdots & &0 &v_{x_3} & 0 \\[6pt]
    \rho^{-1} &      & \cdots  &        & \rho^{-1} & 0 & 0   & v_{x_3}  
  \end{bmatrix}.
\end{align}
The matrices $\mathbf{A}_{x_i}, i=1,2,3$ have the $N+d$ eigenvalues:
    \begin{align}
    \lambda_{i,1} = v_i - a < \lambda_{i,2} = ... = \lambda_{i,N+d-1} = v_i < \lambda_{i,N+d} = v_i+a.
\end{align}
The corresponding right eigenvectors are:{\small
\begin{subequations}\begin{align}
    \br_{x_1,1} = \begin{bmatrix}
        - \sum_\mB \trho_\mB a_{1\mB}^2\\
        \vdots \\
        - \sum_\mB \trho_\mB a_{N\mB}^2\\
        a\\0\\0
    \end{bmatrix}, ~
    \br_{x_1,2} = \begin{bmatrix}
        - 1\\
        1 \\
        0\\
        \vdots \\
        \\
        0
    \end{bmatrix}, ~
    \br_{x_1,3} = \begin{bmatrix}
        - 1\\
        0 \\
        1\\
        0\\
        \vdots \\
        0
    \end{bmatrix}, \cdots ,~
    \br_{x_1,N} = \begin{bmatrix}
        - 1\\
        0 \\
        \vdots \\
        0\\
        1\\
        0\\0\\0
    \end{bmatrix}, \nn\\
    \br_{x_1,N+1} = \begin{bmatrix}
        0\\
        \vdots \\
        0\\
        1\\
        0
    \end{bmatrix},~
    \br_{x_1,N+2} = \begin{bmatrix}
        0 \\
        \vdots \\
        0\\
        0\\
        1
    \end{bmatrix}, ~
    \br_{x_1,N+3} =\begin{bmatrix}
        \sum_\mB \trho_\mB a_{1\mB}^2\\
        \vdots \\
        \sum_\mB \trho_\mB a_{N\mB}^2\\
        a\\0\\0
    \end{bmatrix},\\
        \br_{x_2,1} = \begin{bmatrix}
        - \sum_\mB \trho_\mB a_{1\mB}^2\\
        \vdots \\
        - \sum_\mB \trho_\mB a_{N\mB}^2\\
        0\\a\\0
    \end{bmatrix}, ~
    \br_{x_2,2} = \begin{bmatrix}
        - 1\\
        1 \\
        0\\
        \vdots \\
        \\
        0
    \end{bmatrix}, ~
    \br_{x_2,3} = \begin{bmatrix}
        - 1\\
        0 \\
        1\\
        0\\
        \vdots \\
        0
    \end{bmatrix}, \cdots ,~
    \br_{x_2,N} = \begin{bmatrix}
        - 1\\
        0 \\
        \vdots \\
        0\\
        1\\
        0\\0\\0
    \end{bmatrix}, \nn\\
    \br_{x_2,N+1} = \begin{bmatrix}
        0\\
        \vdots \\
        1\\
        0\\
        0
    \end{bmatrix},~
    \br_{x_2,N+2} = \begin{bmatrix}
        0 \\
        \vdots \\
        0\\
        0\\
        1
    \end{bmatrix}
    , ~
    \br_{x_2,N+3} =\begin{bmatrix}
        \sum_\mB \trho_\mB a_{1\mB}^2\\
        \vdots \\
        \sum_\mB \trho_\mB a_{N\mB}^2\\
        0\\a\\0
    \end{bmatrix},\\
        \br_{x_3,1} = \begin{bmatrix}
        - \sum_\mB \trho_\mB a_{1\mB}^2\\
        \vdots \\
        - \sum_\mB \trho_\mB a_{N\mB}^2\\
        0\\0\\a
    \end{bmatrix}, ~
    \br_{x_3,2} = \begin{bmatrix}
        - 1\\
        1 \\
        0\\
        \vdots \\
        \\
        0
    \end{bmatrix}, ~
    \br_{x_3,3} = \begin{bmatrix}
        - 1\\
        0 \\
        1\\
        0\\
        \vdots \\
        0
    \end{bmatrix}, \cdots ,~
    \br_{x_3,N} = \begin{bmatrix}
        - 1\\
        0 \\
        \vdots \\
        0\\
        1\\
        0\\0\\0
    \end{bmatrix}, \nn\\
    \br_{x_3,N+1} = \begin{bmatrix}
        0\\
        \vdots \\
        1\\
        0\\
        0
    \end{bmatrix},~
    \br_{x_3,N+2} = \begin{bmatrix}
        0 \\
        \vdots \\
        0\\
        1\\
        0
    \end{bmatrix}, ~
    \br_{x_3,N+3} =\begin{bmatrix}
        \sum_\mB \trho_\mB a_{1\mB}^2\\
        \vdots \\
        \sum_\mB \trho_\mB a_{N\mB}^2\\
        0\\0\\a
    \end{bmatrix},
\end{align}
\end{subequations}}
and the left eigenvectors are:\allowdisplaybreaks
{\small
\begin{subequations}\begin{align}
    \bl_{x_1,1} = \begin{bmatrix}
        - \frac{1}{2 \rho a^2}\\
        \vdots \\
        - \frac{1}{2 \rho a^2}\\
         \frac{1}{2 a}\\
         0\\0
    \end{bmatrix}, ~
    \bl_{x_1,2} = \begin{bmatrix}
        - \frac{\sum_\mB \trho_\mB a_{2\mB}^2}{\sum_\mB \trho_\mB (a_{1\mB}^2+a_{2\mB}^2)}\\
          \frac{\sum_\mB \trho_\mB a_{1\mB}^2}{\sum_\mB \trho_\mB (a_{1\mB}^2+a_{2\mB}^2)}\\
        0\\
        \vdots\\
        0
    \end{bmatrix}, ~
    \bl_{x_1,3} = \begin{bmatrix}
        - \frac{\sum_\mB \trho_\mB a_{3\mB}^2}{\sum_\mB \trho_\mB (a_{1\mB}^2+a_{3\mB}^2)}\\
        0\\
          \frac{\sum_\mB \trho_\mB a_{1\mB}^2}{\sum_\mB \trho_\mB (a_{1\mB}^2+a_{3\mB}^2)}\\
        0\\
        \vdots\\
        0
    \end{bmatrix}, \cdots ,\nn \\
    \bl_{x_1,N} = \begin{bmatrix}
        - \frac{\sum_\mB \trho_\mB a_{N\mB}^2}{\sum_\mB \trho_\mB (a_{1\mB}^2+a_{N\mB}^2)}\\
        0\\
        \vdots \\
        0\\
          \frac{\sum_\mB \trho_\mB a_{1\mB}^2}{\sum_\mB \trho_\mB (a_{1\mB}^2+a_{N\mB}^2)}\\
        0\\
         0\\0
    \end{bmatrix},~
    \bl_{x_1,N+1} = \begin{bmatrix}
        0\\
        \vdots \\
        0\\
        1\\
        0
    \end{bmatrix},~
    \bl_{x_1,N+2} = \begin{bmatrix}
        0 \\
        \vdots \\
        0\\
        0\\
        1
    \end{bmatrix}, ~
    \bl_{x_1,N+3} =\begin{bmatrix}
         \frac{1}{2 \rho a^2}\\
        \vdots \\
         \frac{1}{2 \rho a^2}\\
         \frac{1}{2 a}\\
         0\\0
    \end{bmatrix},\\
        \bl_{x_2,1} = \begin{bmatrix}
        - \frac{1}{2 \rho a^2}\\
        \vdots \\
        - \frac{1}{2 \rho a^2}\\
         0\\ \frac{1}{2 a}\\0
    \end{bmatrix}, ~
    \bl_{x_2,2} = \begin{bmatrix}
        - \frac{\sum_\mB \trho_\mB a_{2\mB}^2}{\sum_\mB \trho_\mB (a_{1\mB}^2+a_{2\mB}^2)}\\
          \frac{\sum_\mB \trho_\mB a_{1\mB}^2}{\sum_\mB \trho_\mB (a_{1\mB}^2+a_{2\mB}^2)}\\
        0\\
        \vdots\\
        0
    \end{bmatrix}, ~
    \bl_{x_2,3} = \begin{bmatrix}
        - \frac{\sum_\mB \trho_\mB a_{3\mB}^2}{\sum_\mB \trho_\mB (a_{1\mB}^2+a_{3\mB}^2)}\\
        0\\
          \frac{\sum_\mB \trho_\mB a_{1\mB}^2}{\sum_\mB \trho_\mB (a_{1\mB}^2+a_{3\mB}^2)}\\
        0\\
        \vdots\\
        0
    \end{bmatrix}, \cdots ,\nn\\
    \bl_{x_2,N} = \begin{bmatrix}
        - \frac{\sum_\mB \trho_\mB a_{N\mB}^2}{\sum_\mB \trho_\mB (a_{1\mB}^2+a_{N\mB}^2)}\\
        0\\
        \vdots \\
        0\\
          \frac{\sum_\mB \trho_\mB a_{1\mB}^2}{\sum_\mB \trho_\mB (a_{1\mB}^2+a_{N\mB}^2)}\\
        0\\
         0\\0
    \end{bmatrix},~
    \bl_{x_2,N+1} = \begin{bmatrix}
        0\\
        \vdots \\
        1\\
        0\\
        0
    \end{bmatrix},~
    \bl_{x_2,N+2} = \begin{bmatrix}
        0 \\
        \vdots \\
        0\\
        0\\
        1
    \end{bmatrix}, ~
    \bl_{x_2,N+3} =\begin{bmatrix}
         \frac{1}{2 \rho a^2}\\
        \vdots \\
         \frac{1}{2 \rho a^2}\\
         0\\ \frac{1}{2 a}\\0
    \end{bmatrix},\\
        \bl_{x_3,1} = \begin{bmatrix}
        - \frac{1}{2 \rho a^2}\\
        \vdots \\
        - \frac{1}{2 \rho a^2}\\
         0\\0\\\frac{1}{2 a}
    \end{bmatrix}, ~
    \bl_{x_3,2} = \begin{bmatrix}
        - \frac{\sum_\mB \trho_\mB a_{2\mB}^2}{\sum_\mB \trho_\mB (a_{1\mB}^2+a_{2\mB}^2)}\\
          \frac{\sum_\mB \trho_\mB a_{1\mB}^2}{\sum_\mB \trho_\mB (a_{1\mB}^2+a_{2\mB}^2)}\\
        0\\
        \vdots\\
        0
    \end{bmatrix}, ~
    \bl_{x_3,3} = \begin{bmatrix}
        - \frac{\sum_\mB \trho_\mB a_{3\mB}^2}{\sum_\mB \trho_\mB (a_{1\mB}^2+a_{3\mB}^2)}\\
        0\\
          \frac{\sum_\mB \trho_\mB a_{1\mB}^2}{\sum_\mB \trho_\mB (a_{1\mB}^2+a_{3\mB}^2)}\\
        0\\
        \vdots\\
        0
    \end{bmatrix}, \cdots ,\nn\\
    \bl_{x_3,N} = \begin{bmatrix}
        - \frac{\sum_\mB \trho_\mB a_{N\mB}^2}{\sum_\mB \trho_\mB (a_{1\mB}^2+a_{N\mB}^2)}\\
        0\\
        \vdots \\
        0\\
          \frac{\sum_\mB \trho_\mB a_{1\mB}^2}{\sum_\mB \trho_\mB (a_{1\mB}^2+a_{N\mB}^2)}\\
        0\\
         0\\0
    \end{bmatrix},~
    \bl_{x_3,N+1} = \begin{bmatrix}
        0\\
        \vdots \\
        1\\
        0\\
        0
    \end{bmatrix},~
    \bl_{x_3,N+2} = \begin{bmatrix}
        0 \\
        \vdots \\
        0\\
        1\\
        0
    \end{bmatrix}, ~
    \bl_{x_3,N+3} =\begin{bmatrix}
         \frac{1}{2 \rho a^2}\\
        \vdots \\
         \frac{1}{2 \rho a^2}\\
         0\\0\\\frac{1}{2 a}
    \end{bmatrix}. 
\end{align}
\end{subequations}}
The eigenvectors are scaled so that $\br_j\cdot\bl_j=1$ for all $j=1,...,N+1$. Both right and left eigenvectors are linearly independent and span $\mathbb{R}^{N+d}$. This implies that the system is  hyperbolicity if the eigenvalues are real-valued, which holds when $a_{\mA\mB}^2={\rm d}p_\mA/{\rm d}\trho_\mB >0$.  Similar to the one-dimensional case, fields associated with $\lambda_2,...,\lambda_{N+{2}}$ correspond to contact discontinuities, and those associated with $\lambda_1,\lambda_{N+{3}}$ correspond to shock or rarefaction waves, respectively.

\bibliographystyle{unsrtnat}
\bibliography{references}

\end{document}